\documentclass[aps,prl,reprint,superscriptaddress,longbibliography,amsmath,amssymb]{revtex4-2}

\usepackage{svg}
\usepackage{color}
\usepackage{graphicx}
\usepackage{times}
\usepackage{dcolumn}
\usepackage{bm}
\usepackage{physics}
\usepackage{hyperref}
\usepackage{url}
\usepackage{tikz-cd}
\usepackage{notes2bib}
\usepackage{multirow}
\usepackage{mathrsfs}
\usepackage{tabularx}
\usepackage{dsfont}
\usepackage{verbatim}
\usepackage{amsfonts}
\usepackage{amsthm}
\usepackage{mathtools}
\usepackage{qcircuit}
\usepackage[normalem]{ulem}
\usepackage{xcolor}
\usepackage{csquotes}
\usepackage[many]{tcolorbox}
\usepackage[clock]{ifsym}
\usepackage{fontawesome}
\usepackage{manfnt}
\usepackage{enumitem}
\usetikzlibrary{arrows.meta, calc}

\DeclarePairedDelimiterX{\barpair}[2]{(}{)}{%
  #1\;\delimsize\|\;#2%
}

\usepackage{stackengine,scalerel}

\theoremstyle{definition}
\newtheorem{theorem}{Theorem}
\newtheorem{prop}{Proposition}
\newtheorem{remark}{Remark}

\newtheorem{lemma}{Lemma}
\newtheorem{example}{Example}
\newtheorem{definition}{Definition}
\newtheorem{corollary}{Corollary}

\newtheorem{exercise}{Exercise}

\newcommand{\Ad}{\text{Ad}}

\newcommand{\jami}{Jamio{\l}kowski }

\newcommand{\cD}{\mathcal{D}}
\newcommand{\cE}{\mathcal{E}}

\newcommand{\cH}{\mathcal{H}}
\newcommand{\cI}{\mathcal{I}}
\newcommand{\cJ}{\mathcal{J}}

\newcommand{\cL}{\mathcal{L}}

\newcommand{\cW}{\mathcal{W}}

\newcommand{\bA}{\mathbb{A}}

\newcommand{\bC}{\mathbb{C}}

\newcommand{\bE}{\mathbb{E}}

\newcommand{\bK}{\mathbb{K}}

\newcommand{\bR}{\mathbb{R}}

\newcommand{\bU}{\mathbb{U}}
\newcommand{\bV}{\mathbb{V}}
\newcommand{\bW}{\mathbb{W}}

\begin{document}

\title{Quantum Metrology for Signals with an Unknown Causal Structure}

\author{Gwon Ryul Han}
\affiliation{Department of Physics, Ulsan National Institute of Science and Technology (UNIST), Ulsan 44919, Republic of Korea}

\author{Hyukgun Kwon}
\email{kwon37hg@sejong.ac.kr}
\affiliation{Department of Physics and Astronomy, Sejong University, 209 Neungdong-ro Gwangjin-gu, Seoul 05006, Republic of Korea}

\author{Seok Hyung Lie}
\email{seokhyung@unist.ac.kr}
\affiliation{Department of Physics, Ulsan National Institute of Science and Technology (UNIST), Ulsan 44919, Republic of Korea}

\begin{abstract}
Quantum sensing protocols usually presuppose where and when each signal acts, yet this information is unavailable beforehand in experiments such as scattering and radar. We show that causal agnosticity forces a measurement scheme to be interferometric even when only discrete interactions are allowed.
Describing the sensing events in the recently developed quantum state over spacetime formalism, we then prove that the Fisher information of an interferometer is bounded by the squared sum of the local generator norms, independently of how the events are ordered.
Nonunitary and spatiotemporally correlated signals can be explained as well when their physical dilation has a spacetime state on the enlarged events.
Notably, the bound holds without presupposing a background spacetime that orders the events whenever the full collection of target and auxiliary events has a spacetime state.
We show that this bound yields the spacetime Heisenberg limit and characterize its saturation through marginals of the spacetime state `dressed' by the signal. 
The $N^2$ scaling of the Fisher information with $N$ sensing events arises from the coherence of a single reference system shared across spacetime, whereas independently repeated interferometry obeys the standard quantum limit.
\end{abstract}

\maketitle

\paragraph{Introduction---}
 To achieve the ultimate precision in estimation of unknown parameters from quantum measurement data is one of the goals of quantum metrology. To this end, optimization of  probe states or dynamics through which the parameters are encoded has been studied~\cite{paris2009quantum,giovannetti2006quantum,giovannetti2011advances,pezze2018quantum}. This idea has been extended to networks of spatially~\cite{proctor2018multiparameter,zhang2021distributed,zhuang2018distributed} and temporally distributed sensors such as memory-assisted strategies using channels, combs, or process tensors~\cite{chiribella2009theoretical,pollock2018operational,milz2021quantum,altherr2021quantum}. Nevertheless, these theories often assume the capability of the experimenter either to know in advance or to manipulate when and where the signals are encoded. In other words, the causal structure of signals has been often assumed to be part of a sensing strategy.

However, some sensing schemes like quantum radar~\cite{lloyd2008enhanced,tan2008quantum} may involve probe--target scattering whose position and time, i.e., the causal structure of the signals, cannot be known in advance and must be inferred from the measurement outcomes. It is then desirable for the implementation of a sensing scheme to be independent of the causal structure of signals. This property is called \emph{causal agnosticity} and is known to force the corresponding probe--event interactions to be interferometric when they are continuous~\cite{lie2026probe}. Measurements outside this class require an additional spacetime reference frame, such as a clock or a map.

It was recently established that the configuration of quantum degrees of freedom distributed over spacetime and accessible through interferometry is characterized by a quantum state over spacetime~\cite{lie2026probe}, or briefly a \emph{spacetime state}~\cite{diaz2026unifying}, which generalizes density operators to arbitrary spacetime regions.
This observation gave operational meaning and a concrete experimental probing method for several related formalisms, including quantum states over time (QSOTs)~\cite{fullwood2022quantum,lie2024unique,lie2025multi}, pseudo-density operators (PDOs)~\cite{fitzsimons2015quantum,fullwood2025dynamics,marletto2021temporal}, and temporal entanglement constructions~\cite{milekhin2025observable}.

Our main result is a universal upper bound for an interferometric encoding in which signals distributed over spacetime act on one arm. For $N$ probe events at which unitary signals $U_{k,\theta}$ generated by $h_{k,\theta}$ act independently, we show the \emph{spacetime Heisenberg limit}
\begin{equation} \label{eqn:upper_bound_intro}
    F_\theta
    \leq
    \left(\sum_{k=1}^N \norm{h_{k,\theta}}_\infty\right)^2,
\end{equation}
where $\norm{\cdot}_\infty$ denotes the largest singular value.
We call Eq.~\eqref{eqn:upper_bound_intro} the spacetime Heisenberg limit because it scales as $O(N^2)$ without a prescribed event order. The bound follows solely from constraints on causally agnostic interferometric statistics, while its saturation is determined by marginals of the spacetime state associated with the sensing process.

\paragraph{Interferometry and spacetime states---}
We begin by fixing notation. All Hilbert spaces are assumed to be finite dimensional unless stated otherwise. We may drop system subscripts from operators when it is evident from the context. A pair $(\rho, \cE)$, consisting of an initial state $\rho_A$ and a channel $\cE : A \to B$, specifies a \emph{dynamics} from $A$ to $B$.
When a Hilbert space label refers to quantum degrees of freedom localized within one spacetime region rather than to a persistent physical carrier, we call it an \emph{event}. The label distinguishes the local region without presupposing its coordinates or causal relation to the other events. (See \hyperref[sec:end-systems-events]{\textit{Systems and events}} in the End Matter for our convention.)

Ordinary multipartite density operators yield joint measurement probabilities through the Born rule. At timelike separated events, earlier measurements can disturb later events, so their outcome effects alone do not specify joint statistics. Generalizations include higher-order objects, such as process matrices, process tensors, and superdensity operators~\cite{oreshkov2012quantum,chiribella2009theoretical,pollock2018operational,milz2021quantum,cotler2018superdensity}, and generalized state operators whose Born-rule pairings can only yield quasiprobabilities~\cite{leifer2013towards,gammelmark2013past,horsman2017can,fitzsimons2015quantum,fullwood2022quantum,lie2024unique,lie2025multi,fullwood2026quasiprobabilistic,lostaglio2023kirkwood}. For a state description on the event Hilbert spaces themselves, the question is how a possibly nonpositive or non-Hermitian operator can acquire a direct meaning in terms of observed probabilities.

Interferometry provides this connection by characterizing states through interference terms~\cite{lie2026probe}. A reference system $R$, realized as a spatial mode in a Mach-Zehnder interferometer or an internal degree of freedom in a Ramsey interferometer, is prepared in an equal superposition of the arms $\ket{0}_R$ and $\ket{1}_R$. An intervention unitary $V_A$ acts on the target $A$ prepared in $\rho_A$ only on arm $\ket{1}_R$. Measuring $R$ in the basis $\ket{\pm}_R$ after recombination gives
\begin{equation} \label{eqn:prob_1partite}
    P(\pm) = \frac{1\pm \Re\cI_\rho(V)}{2},
\end{equation}
where $\cI_\rho(V)=\Tr[V\rho]$ is the \emph{interference term} and $S_\rho(V):=\Re\cI_\rho(V)$ is the \emph{fringe}. Measuring in $\ket{\pm_\phi}_R=(\ket{0}_R\pm e^{i\phi}\ket{1}_R)/\sqrt{2}$ or applying $e^{-i\phi}V$ instead gives $P_\phi(\pm)=[1\pm S_\rho(e^{-i\phi}V)]/2$. Knowledge of both quadratures for arbitrary unitary $V_A$ uniquely determines $\rho$, so interference terms provide an equivalent characterization of the state through ordinary probabilities.

The crucial observation was that the same scheme can be extended to multiple events distributed over spacetime~\cite{lie2026probe} [Fig.~\ref{fig:scheme}(a)]. For $N$ events $\bA=A_1A_2\cdots A_N$ with an unspecified causal structure, the same reference system $R$ interacts with each $A_k$ through an intervention unitary $V^{(k)}_{A_k}$, collectively denoted by $\bV_\bA$, only on the arm $\ket{1}_R$. Operationally, the unitaries are chosen and applied locally at the corresponding spacetime events, which need not lie on a common time slice. The measurement probabilities $P(\pm)$ are still given by Eq.~\eqref{eqn:prob_1partite}, but now the interference term is generated by a single operator $\Omega$ on $\bA$, as derived in \hyperref[sec:end-interferometry]{\textit{Spacetime states from interferometry}} of the End Matter. When the events in $\bA$ are spacelike separated, $\Omega$ reduces to a conventional multipartite density operator $\rho_{A_1\cdots A_N}$. This motivates the following definition of \emph{spacetime states}.

\begin{definition}[Interferometric definition of spacetime states~\cite{lie2026probe}] \label{def:spacetime_states}
    We say that a collection of events $\bA=A_1A_2\cdots A_N$ is in a spacetime state $\Omega$ if the interference term of interferometry on $\bA$ with local intervention unitaries $V_{A_k}^{(k)}$ for $k=1,\dots, N$ is given as
    \begin{equation} \label{eqn:interference_term}
        \cI_\Omega(\bV_\bA)=\Tr[\qty(\bigotimes_{k=1}^N V^{(k)}_{A_k}) \Omega].
    \end{equation}
    Conversely, we say that an operator $\Omega$ on $\bA$ is a spacetime state if $\abs{\cI_\Omega(\bW_\bA)}\leq 1$ for every collection of local intervention unitaries $W^{(k)}_{A_k}$, represented by the tensor product $\bW_{\bA}=\bigotimes_{k=1}^N W^{(k)}_{A_k}$.
\end{definition}

The second clause may admit operators that no physical arrangement realizes. The upper bound proved below holds on this possibly larger set, whereas its saturation will be demonstrated with realizable states. Whenever a spacetime state is defined on timelike separated events, it is called a \emph{quantum state over time} (QSOT), or briefly a \emph{time state}. Even before this interferometric characterization, several works studied the correspondence between the conventional channel description of quantum dynamics and the time state description~\cite{fullwood2022quantum,lie2024unique,lie2025multi}. While the exact form of this correspondence is not crucial for understanding the main results of this work, we summarize the QSOT formalism in \hyperref[sec:end-time-states]{\textit{Time states and $\star$-products}} of the End Matter for self-containedness.

\paragraph{Characterizing interferometry---}%
Why is interferometry special in the spacetime state formalism? A general probe-based measurement scheme on events $\bA=A_1A_2\cdots A_N$ is specified by admissible unitary interactions between a reference system $R$ and the events, followed by a measurement of $R$. We call such a scheme \emph{causally agnostic} when its implementation requires no shared spacetime reference to coordinate the local interactions: for any independently chosen admissible interactions, the measurement process is invariant under the encounter order. This includes identical interactions at distinct events of the same type. Ref.~\cite{lie2026probe} showed that every causally agnostic scheme is an interferometer but assumed causal agnosticity even for infinitesimally short interactions between $R$ and each event $A_k$. We remove this assumption and allow the interactions to be discrete.

\begin{prop}\label{prop:causal_agnos_charac}
Any causally agnostic measurement scheme with discrete interactions can be implemented with an interferometer.
\end{prop}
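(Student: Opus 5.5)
Let me formalize the setting first. A discrete interaction at event $A_k$ is a unitary $W^{(k)}$ on $R\otimes A_k$. Causal agnosticity requires the following: for any independently chosen admissible $W^{(j)}, W^{(k)}$ on distinct events, the product $W^{(j)}W^{(k)}$ on $R\otimes A_j\otimes A_k$ equals $W^{(k)}W^{(j)}$. The same must hold for two copies of the same type of interaction at distinct events of the same type, so that the measurement on $R$ does not depend on the encounter order. Since the events are distinct, any two such interactions commute on the event factors, and all the constraint falls on $R$.

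I will decompose each interaction as $W = \sum_{a} X_a \otimes Y_a$ with $\{Y_a\}$ a linearly independent operator basis on $A_k$. Commutation with another interaction $W'=\sum_b X'_b\otimes Y'_b$ on the disjoint factor $A_j$ then gives $\sum_{a,b}[X_a,X'_b]\otimes Y_a\otimes Y'_b=0$. By linear independence this forces $[X_a,X'_b]=0$ for all $a,b$. Let $\mathcal{X}$ be the algebra on $R$ generated by all the "$R$-parts" $X_a$ of all admissible interactions. Because identical interactions at distinct events of the same type must also commute, $\mathcal{X}$ commutes with itself, i.e. $\mathcal{X}$ is a commutative $*$-subalgebra of $B(\mathcal{H}_R)$. (Closure under $*$ holds because $W^\dagger$ is also obtained by the same local operation, or because unitarity of $W$ ties the $X_a^\dagger$ into the same commutation relations; I need to check this carefully.)

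A commutative $*$-algebra is simultaneously diagonalizable. I will therefore take a family of orthogonal projections $\{\Pi_i\}_i$ on $R$, the minimal projections of $\mathcal{X}$, with $\sum_i \Pi_i=I_R$. Every admissible interaction then takes the controlled form
\begin{equation}
W^{(k)}=\sum_i \Pi_i\otimes V^{(k)}_{i},
\end{equation}
with each $V^{(k)}_i$ unitary on $A_k$ by unitarity of $W^{(k)}$. Collapsing the range of each $\Pi_i$ to a single basis vector $\ket{i}_R$ without loss, the whole scheme becomes a multi-arm interferometer: $R$ is prepared in some state, arm $i$ receives $\bigotimes_k V^{(k)}_i$, and $R$ is measured.

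It remains to reduce this multi-arm interferometer to the two-arm interferometers of Definition~\ref{def:spacetime_states}. Expanding the final statistics of $R$, each outcome probability is a linear combination of the terms $\Tr[(\bigotimes_k V_j^{(k)\dagger} V_i^{(k)})\,\rho]$ over pairs of arms $(i,j)$. Each such term is an interference term $\cI(\bigotimes_k V^{(k)}_{j}{}^{\dagger}V^{(k)}_i)$ of a two-arm interferometer with local intervention unitaries $V_j^{(k)\dagger}V_i^{(k)}$; the appropriate phase choices recover both quadratures. Hence the scheme's statistics can be reproduced by, and implemented with, an interferometer.

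The main obstacle is the step from commutation to a commutative algebra. The linear-independence argument must handle the case where the admissible set is not closed under products or adjoints. It must also use the "same type" clause to obtain self-commutation, since without it only cross-commutation between different families is guaranteed. If closure under $*$ fails, I would instead work with the normal operators obtained from the unitaries $W$ themselves, and use the fact that commuting normal operators are jointly diagonalizable.
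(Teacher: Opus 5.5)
There is a genuine gap at your first step. You take causal agnosticity to mean exact operator commutation $W^{(j)}W^{(k)}=W^{(k)}W^{(j)}$, but the paper's notion is operational: the $R$-readout statistics must coincide for every encounter order, against every tester compatible with all orders. This yields only equality of the ordered unitary \emph{channels}, i.e.\ $VW=\omega WV$ for some phase $\omega$. Even that needs its own argument, which you would still have to supply: the paper's Lemma~\ref{lem:statistics_imply_channel} uses a tester that prepares a product state of $R\bA$ in a common past and maps $R\bA$ to $R$ in a common future.

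A phase $\omega\neq1$ genuinely occurs. Take $\mathfrak U_A=\{Z_R\otimes I_A\}$ and $\mathfrak U_B=\{X_R\otimes I_B\}$. Any two orderings of any sequence of these interactions differ by a global sign, so the scheme is causally agnostic, yet $Z_R$ and $X_R$ do not commute. Your algebra $\mathcal X$ is then noncommutative, and no common family $\{\Pi_i\}$ brings both interactions to the form $\sum_i\Pi_i\otimes V_i$. So the normal form you derive is false under the paper's definition. What you prove, once the $*$-closure is fixed, is only the special case of exactly commuting interactions. The correct normal form, Theorem~\ref{thm:discrete_normal_form}, is controlled only up to a reference-only unitary $X_R^V$ that commutes with the common projectors.

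Closing the gap takes ingredients you lack. First, for an interaction and its own copy the phase must be shown to be trivial. The paper conjugates $V_{RA}V_{RB}=\omega V_{RB}V_{RA}$ by the swap $S_{AB}$ to get $\omega=\pm1$. It excludes $\omega=-1$ because $V_{RA}V_{RB}$ would then map $\cH_R\otimes\mathrm{Sym}(AB)$ unitarily onto $\cH_R\otimes\mathrm{Anti}(AB)$, whose dimensions differ. Your linear-independence argument then works for each interaction separately. Your $*$-closure worry is settled by unitarity: $V_{RA}V_{RB}=V_{RB}V_{RA}$ implies $V_{RA}^\dagger V_{RB}=V_{RB}V_{RA}^\dagger$, so all $X_a$ and $X_a^\dagger$ commute pairwise. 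The paper instead triangularizes simultaneously and uses that a block-triangular unitary is block-diagonal.

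Second, across distinct interactions only $X_aX'_b=\omega X'_bX_a$ holds. Here the paper merges sectors whose event-side unitaries are proportional and shows that the merged projectors of all admissible interactions commute. It then moves the residual phases into $X_R^V$. Finally it shows that the accumulated product $X_{\pi(N)}\cdots X_{\pi(1)}$ is order-independent as a channel, because the controlled parts commute exactly and are invertible. That product can therefore be absorbed into the preparation or readout of $R$.

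Separately, your final reduction to interventions $V_j^{(k)\dagger}V_i^{(k)}$ is unnecessary, since the multi-arm controlled form is already the interferometer. It is also wrong for timelike events. With a channel $\cE$ from $A$ to $B$, the $(i,j)$ coherence is $\Tr[V_{B,j}^\dagger V_{B,i}\,\cE(V_{A,i}\rho V_{A,j}^\dagger)]$, not $\Tr[V_{B,j}^\dagger V_{B,i}\,\cE(V_{A,j}^\dagger V_{A,i}\rho)]$.
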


The detailed setting and proof are given in Sec.~\ref{app:discrete_causal_agnostic} of the Supplemental Material (SM).

We focus on interferometry because the scheme itself is deterministically implementable, although weak measurements~\cite{aharonov1988result}, quantum snapshotting~\cite{lie2025multi,wang2024snapshotting}, and light-touch correlators~\cite{parzygnat2025time,liu2025quantum} access the same spacetime state information with a less direct readout~\cite{lie2026probe}. Since the pairwise fringes of a multi-arm interferometer can be sampled by statistically mixing two-arm settings, we henceforth use `interferometry' to mean two-arm interferometry unless stated otherwise.

\paragraph{Irregularity of spacetime states---}

If the reference system $R$ is prepared in $\ket{+}$ and the same process acts on both arms, omitting all interventions gives outcome $\ket{+}$ with unit probability. Since the corresponding interference term is $\cI_\Omega(I)=\Tr\Omega$, a freshly initialized apparatus implies $\Tr\Omega=1$. Reference noise or arm-dependent dynamics can instead change the surviving arm coherence and yield $\Tr\Omega\neq1$. A non-unit trace therefore records a change in the reference coherence even when the nominal interventions are omitted (see \hyperref[sec:end-irregularity]{\textit{Irregularity and dressing}} in the End Matter). We call a spacetime state $\Omega$ \emph{regular} if $\Tr\Omega=1$ and \emph{irregular} otherwise. The marginal of a regular spacetime state on each event is a density operator, as its trace norm is at most one by Definition~\ref{def:spacetime_states} while its trace equals one. This often overlooked class of irregular states will turn out to be useful for understanding causally agnostic metrology in the next section.

\begin{figure}
    \centering
    \includegraphics[width=\linewidth]{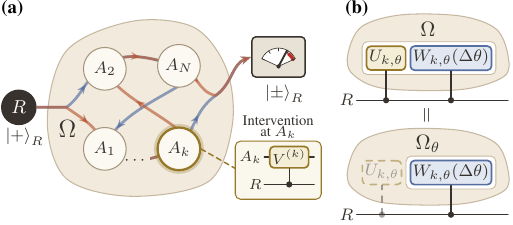}
    \caption{(a) Causally agnostic interferometry. A reference system $R$ prepared in $\ket{+}_R$ probes events in spacetime state $\Omega$ through local interventions $V^{(k)}_{A_k}$ controlled on $\ket{1}_R$, followed by measurement in $\ket{\pm}_R$. The same local probing prescription applies without specifying an encounter order. (b) Absorbing all fixed local signals $U_{k,\theta}$ into $\Omega_\theta=\bU_\theta\Omega$ leaves only the deviations $W_{k,\theta}(\Delta\theta)$ around $\theta$ as nominal interventions. The fixed signals act only on arm $\ket{1}_R$, so $\Omega_\theta$ can be irregular. }
    \label{fig:scheme}
\end{figure}

\paragraph{Metrology utilizing spacetime states---}%
We now study metrological schemes for signals with unspecified causal structure. Under the conditions of Proposition~\ref{prop:causal_agnos_charac}, a measurement that works for every encounter order has an interferometric implementation.
We therefore consider estimation of a parameter $\theta$ encoded in local unitary signals $U_{k,\theta}$ acting at the probe events $\bA=A_1A_2\cdots A_N$ with an unspecified causal structure. We call them `probe' events because the parameter $\theta$ will be encoded on them regardless of their causal structure.

In this two-arm setting, for a fixed spacetime state $\Omega$ that is independent of $\theta$ and a parameter-dependent intervention $\bU_\theta=\bigotimes_{k=1}^N U_{k,\theta}$ on one arm, the interferometer produces $P_\theta(\pm|\Omega)=(1\pm S_\theta(\Omega))/2$, where the fringe is $S_\theta(\Omega)=\Re\cI_\Omega(\bU_\theta)$, and hence the Fisher information of this measurement is
\begin{equation} \label{eqn:binary_FI}
    F_\theta (\Omega)
    =
    \frac{[\partial_\theta S_\theta(\Omega)]^2}{1-S_\theta(\Omega)^2}
\end{equation}
whenever $\abs{S_\theta(\Omega)}<1$. At points where $\abs{S_\theta(\Omega)}=1$, the Fisher information is understood by continuity whenever the corresponding limit exists.
The interferometric Fisher information has the following tight upper bound.

\begin{theorem}\label{thm:F_upper_bound}
Let $\bA = A_1A_2\cdots A_N$ be a collection of events in spacetime state $\Omega$.
For interferometry on $\bA$ with local unitary signals $U_{k,\theta}$ on $A_k$ satisfying
\begin{equation}
    \partial_\theta U_{k,\theta}=-ih_{k,\theta}U_{k,\theta},
\end{equation}
at any $\theta$ satisfying $\abs{S_\theta(\Omega)}<1$, the associated interferometric Fisher information $F_\theta(\Omega)$ obeys
\begin{equation} \label{eqn:upper_bound}
    F_\theta(\Omega) \leq \norm{\sum_{k=1}^N h_{k,\theta}}_\infty^2.
\end{equation}
\end{theorem}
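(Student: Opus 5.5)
The plan is to reduce the claim to a classical inequality for bounded functions of exponential type on the real line. The only property of $\Omega$ we will use is the second clause of Definition~\ref{def:spacetime_states}, i.e.\ $\abs{\cI_\Omega(\bW_\bA)}\le 1$ for every product of local unitaries. This is consistent with the remark that the bound should hold on the possibly larger set of operators admitted by that clause.

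First, fix $\theta$ and absorb the signal into the state, $\Omega_\theta:=\bU_\theta\Omega$, as in Fig.~\ref{fig:scheme}(b). Since $\bW_\bA\bU_\theta$ is again a product of local unitaries, $\Omega_\theta$ inherits the bound $\abs{\Tr[\bW_\bA\Omega_\theta]}\le1$ for all local unitaries $\bW_\bA$, even though $\Omega_\theta$ may be irregular. Next, freeze the generators at this $\theta$ and set $H:=\sum_k h_{k,\theta}$, with each $h_{k,\theta}$ acting on its own event. These operators commute, so $e^{-itH}=\bigotimes_k e^{-ith_{k,\theta}}$ is a local product unitary for every real $t$, and so is $e^{-i\phi}e^{-itH}$. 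Define
\begin{equation*}
z(t):=\Tr[e^{-itH}\Omega_\theta], \qquad g(t):=\Re z(t).
\end{equation*}
Then $\abs{g(t)}\le\abs{z(t)}\le1$ for all real $t$, and $g(0)=S_\theta(\Omega)$. Differentiating $\Tr[\bU_\theta\Omega]$ with $\partial_\theta U_{k,\theta}=-ih_{k,\theta}U_{k,\theta}$ gives $\partial_\theta S_\theta(\Omega)=\Re\Tr[-iH\Omega_\theta]=g'(0)$. So the derivative of the fringe along the true signal coincides with the derivative along the auxiliary one-parameter family. The $\theta$-dependence of the $h_{k,\theta}$ plays no role, because Eq.~\eqref{eqn:binary_FI} involves only first derivatives.

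Second, expand $H=\sum_j\lambda_jP_j$ spectrally. This gives $z(t)=\sum_j c_j e^{-it\lambda_j}$ with $c_j=\Tr[P_j\Omega_\theta]$ and all frequencies satisfying $\abs{\lambda_j}\le\sigma:=\norm{H}_\infty$. Hence $g$ is a real-valued entire function of exponential type at most $\sigma$, and it is bounded by $1$ on the real axis. For such functions I will invoke the sharpened Bernstein inequality of Szeg\H{o}/Duffin--Schaeffer:
\begin{equation*}
g'(t)^2+\sigma^2g(t)^2\le\sigma^2\sup_{s\in\bR}\abs{g(s)}^2 .
\end{equation*}
Evaluated at $t=0$, this yields $[\partial_\theta S_\theta]^2\le\sigma^2(1-S_\theta^2)$. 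Dividing by $1-S_\theta^2>0$ gives $F_\theta(\Omega)\le\norm{\sum_kh_{k,\theta}}_\infty^2$, which is Eq.~\eqref{eqn:upper_bound}.

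The main obstacle is the inequality $g'^2+\sigma^2g^2\le\sigma^2$ itself. Since $z$ is only a finite exponential sum with generally incommensurate frequencies, one cannot simply cite the trigonometric-polynomial version. I would prove it directly for such sums. Suppose that at some $t_0$ the value satisfies $g(t_0)=\cos\alpha$ but $\abs{g'(t_0)}>\sigma\sin\alpha$. Then compare $g$ with $\cos(\sigma(t-t_0)+\alpha)$, rescaled slightly so that it still exceeds $g$ in the relevant sense. The standard zero-counting argument for functions of exponential type (or an almost-periodic approximation by trigonometric polynomials of degree $n$ in $t/n$ variables, followed by a limit) forces more sign changes than the type $\sigma$ permits, which is a contradiction. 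If that proves cumbersome, a fallback is an elementary route: bound $g'(0)^2/(1-g(0)^2)$ by optimizing over the phase $\phi$ and a constant spectral shift $H\to H-c$, using only $\abs{z}\le1$. I am less sure that this fallback reaches the sharp constant.
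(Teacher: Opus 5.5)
Your proposal is correct and is essentially the paper's own proof. The paper likewise fixes $\theta$ and sets $G_\theta=\sum_k h_{k,\theta}$. It uses commutativity of the lifted generators to get $\abs{\Re\Tr[e^{-itG_\theta}\bU_\theta\Omega]}\le1$ with $t$-derivative at $0$ equal to $\partial_\theta S_\theta(\Omega)$, and applies the pointwise Bernstein inequality $f'^2\le\Lambda^2(1-f^2)$ at $t=0$. The only difference is the citation: the paper uses Redheffer's version, which needs exponential type strictly below $\alpha$ and $\abs{g}<1$, and then takes an $\epsilon,\delta$ limit, whereas the Duffin--Schaeffer inequality you cite already covers real entire functions of exponential type at most $\sigma$, including incommensurate frequencies, so the zero-counting or almost-periodic detour you flag as the main obstacle is unnecessary.
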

Note that the tensor products of $h_{k,\theta}$ with identity operators on other events are suppressed in Eq.~\eqref{eqn:upper_bound} for readability.
When the preparation and all earlier dynamics preserve equal arm populations, as in the balanced realization used here, the output state of $R$ lies in the equatorial plane of the Bloch sphere. Its quantum Fisher information~\cite{helstrom1976quantum,braunstein1994statistical,paris2009quantum,giovannetti2006quantum} is attained by a projective readout in some equatorial basis $\qty{\ket{\pm_\phi}}$. This readout replaces $\Omega$ by $e^{-i\phi}\Omega$, to which Theorem~\ref{thm:F_upper_bound} applies unchanged. Thus Eq.~\eqref{eqn:upper_bound} also bounds the quantum Fisher information of $R$ in this balanced realization.

\paragraph{Universal Heisenberg limit---}%
From the subadditivity of the operator norm $\norm{\cdot}_\infty$, we immediately have the following result, in the same setting as Theorem~\ref{thm:F_upper_bound}.
\begin{corollary}[Spacetime Heisenberg limit] \label{coro:Heisenberg_limit}
\begin{equation} \label{eqn:HL}
    F_\theta(\Omega) \leq \left(\sum_{k=1}^N \norm{h_{k,\theta}}_\infty \right)^2 \leq \Lambda^2 N^2,
\end{equation}
where $\Lambda:=\max_k \norm{h_{k,\theta}}_\infty$.
\end{corollary}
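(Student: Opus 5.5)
The corollary follows from Theorem~\ref{thm:F_upper_bound} by two elementary inequalities. We take the theorem as given, at any $\theta$ with $\abs{S_\theta(\Omega)}<1$:
\begin{equation*}
F_\theta(\Omega)\le \norm{\sum_{k=1}^N h_{k,\theta}}_\infty^2 .
\end{equation*}
In the first step we bound the operator norm of the sum. Here each $h_{k,\theta}$ stands for $h_{k,\theta}\otimes I_{\bA\setminus A_k}$, so we need two facts. The first is the triangle inequality for $\norm{\cdot}_\infty$, which holds because it is a norm. The second is that tensoring with an identity does not change the largest singular value, since the singular values of $X\otimes I$ are those of $X$ with multiplicity. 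Together they give
\begin{equation*}
\norm{\sum_{k=1}^N h_{k,\theta}\otimes I}_\infty \le \sum_{k=1}^N \norm{h_{k,\theta}\otimes I}_\infty = \sum_{k=1}^N \norm{h_{k,\theta}}_\infty .
\end{equation*}
Both sides are nonnegative, so squaring preserves the inequality, and combining with the theorem yields the first inequality in Eq.~\eqref{eqn:HL}.

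In the second step we bound each term by $\Lambda=\max_k \norm{h_{k,\theta}}_\infty$. This gives $\sum_k \norm{h_{k,\theta}}_\infty \le N\Lambda$, and squaring gives the final bound $\Lambda^2N^2$.

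There is no real obstacle. The only points needing care are that the norm is taken on the full event space $\bA$, which is handled by the tensor-with-identity invariance, and that the corollary inherits the hypothesis $\abs{S_\theta(\Omega)}<1$ from the theorem. At points where $\abs{S_\theta}=1$, the bound extends by continuity whenever the Fisher information is defined as a limit. Nothing depends on the causal order of the events, because the theorem's bound already does not.
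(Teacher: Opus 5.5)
Your proposal is correct and follows the paper's own route: the paper likewise obtains the corollary from Theorem~\ref{thm:F_upper_bound} by subadditivity of $\norm{\cdot}_\infty$ (the lifted generators $h_{k,\theta}^{[k]}$ have the same norm as $h_{k,\theta}$) and then bounds each term by $\Lambda$. The Supplemental Material also rederives the same bound through the dressed marginals $\Omega_{k,\theta}$ and a trace-norm Bernstein inequality, but it does so to extract the saturation conditions, not because the bound needs it.
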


An $O(N^2)$ Fisher information bound is the Heisenberg limit in standard parallel, sequential, and adaptive metrology~\cite{liu2023optimal,Kurdzialek2023Using}. Coherent control of order can lead to different precision limits in other metrological tasks~\cite{zhao2020metrology,mothe2024reassessing}. Equation~\eqref{eqn:HL}, by contrast, is stated without prescribing an encounter order. Its proof in the spacetime state formalism is elementary and given in Sec.~\ref{sec:ThmCoroProof} of the Supplemental Material.
Here $N$ counts the signal interactions on the intervention arm, while the shared reference is not an additional signal interaction. The operator norm appears rather than the spectral spread because a signal-arm phase is observable relative to the untouched reference arm. For $h_{k,\theta}=Z/2$ the bound is $N^2/4$ instead of the $N^2$ available when both probe levels see the signal.

We now characterize exactly when the first inequality in Eq.~\eqref{eqn:HL} is saturated for $\abs{S_\theta(\Omega)}<1$.
Since we estimate a local deviation $\Delta\theta$ about $\theta$, the fixed signal $\bU_\theta$ may be regarded as part of the probed process, albeit on one arm only, and absorbed into the possibly irregular `dressed' spacetime state
$\Omega_\theta:=\bU_\theta\Omega$ (Fig.~\ref{fig:scheme}(b); see \hyperref[sec:end-irregularity]{\textit{Irregularity and dressing}} in the End Matter).

Note that $\Omega_\theta$ still satisfies Definition~\ref{def:spacetime_states}. Let $\Omega_{k,\theta}$ be the `marginal' of $\Omega_\theta$ on $A_k$, i.e.,
\begin{equation}
    \Omega_{k,\theta}=\Tr_{\setminus k}[\bU_\theta \Omega],
\end{equation}
where $\Tr_{\setminus k}$ denotes the partial trace over all tensor factors except the $k$th one. It is an ordinary marginal of $\Omega_\theta$, the spacetime state of the events once the signal is counted as part of their dynamics conditioned on the reference arm. Note that $\Tr \Omega_{k,\theta} = \Tr \Omega_\theta=\cI_\Omega(\bU_\theta)$, so that $\Re \Tr \Omega_{k,\theta}=S_\theta(\Omega)$.
Equality in this bound holds if and only if the following two conditions are met (Remark~\ref{rem:saturation_conditions} of the Supplemental Material). First, every marginal state that encodes the parameter has maximal trace norm $\norm{\cdot}_1:=\Tr|\cdot|$, i.e.,
\begin{equation} \label{eqn:saturation_cond_1}
    \norm{\Omega_{k,\theta}}_1=1, \text{ whenever } h_{k,\theta}\neq 0.
\end{equation}
Second, each marginal is `aligned' in the sense that there exists a common sign $\sigma\in\{+1,-1\}$ such that
\begin{equation} \label{eqn:saturation_cond_2}
    \Im\Tr[h_{k,\theta}\Omega_{k,\theta}]
    =
    \sigma\norm{h_{k,\theta}}_\infty\sqrt{1-S_\theta(\Omega)^2},
\end{equation}
for all $k=1,\dots,N$.
Note that terms with $\norm{h_{k,\theta}}_\infty=0$ do not affect saturation. Reaching $N^2\Lambda^2$ additionally requires all local generator norms to equal $\Lambda$.

A convenient sufficient condition for saturation is maximal visibility, $\abs{\cI_\Omega(\bU_\theta)}=1$, with each dressed marginal $\Omega_{k,\theta}$ having its range in an extremal eigenspace of the generator $h_{k,\theta}$ with a common sign. That is, there exists $\sigma\in \qty{-1,+1}$ such that
\begin{equation}
    h_{k,\theta} \Omega_{k,\theta} = \sigma \norm{h_{k,\theta}}_\infty \Omega_{k,\theta},
\end{equation}
for all $k$. For one system visiting the events in a definite or classically random order, the undressed marginals give a sufficient condition. Suppose the system reaches every $A_k$ in a fixed occupied subspace contained in an extremal eigenspace of $h_{k,\theta}$ with a common sign throughout a parameter interval. If each signal acts as a scalar on that subspace at one reference point, Eq.~\eqref{eqn:HL} is saturated throughout the interval. The encounter order and parameter-independent noise between events may be arbitrary provided the dynamics deliver the system to these subspaces (Proposition~\ref{prop:directed_saturation} in the Supplemental Material). A common extremal sector preserved by the dynamics gives an order-agnostic realization even for noncommuting generators (Proposition~\ref{prop:order_agnostic}).

Notably, the bound \eqref{eqn:HL} can be saturated even when the spacetime state factorizes as $\Omega=\Omega_1 \otimes \Omega_2 \otimes \cdots \otimes \Omega_N$, provided that the conditions \eqref{eqn:saturation_cond_1} and \eqref{eqn:saturation_cond_2} are satisfied, hence the entanglement in spacetime state $\Omega$ is not essential. What is crucial for the saturation is the coherence of $R$ maintained over spacetime. For $N$ independent single-event interferometers, Fisher information is additive and scales only as $O(N)$, the standard quantum limit.

While a single coherent probe can attain the Heisenberg limit sequentially without entanglement~\cite{giovannetti2006quantum,higgins2007entanglement,vandam2007optimal}, Eq.~\eqref{eqn:HL} itself requires no prior specification of the encounter order or the dynamics between encounters. Saturation is ensured by the extremal-sector conditions above, which can be checked on the spacetime state marginals.

\paragraph{Beyond local unitary signals---}
The preceding discussion assumed local unitary signals at the probe events. More general physical encodings can be treated by adding auxiliary events $\bE=E_1E_2\cdots E_N$ in a spacetime state $\Xi_\bE$ and applying suitable unitaries $V_{k,\theta}$ on $A_kE_k$. As tensor products of process matrices with indefinite causal structure need not remain valid process matrices~\cite{jia2018tensor}, we use an auxiliary state only when $\Omega_\bA\otimes\Xi_\bE$ itself satisfies Definition~\ref{def:spacetime_states} for the enlarged events. The interference term is then
\begin{multline} \label{eqn:dilated_interference}
    \cI_{\Omega_\bA\otimes \Xi_\bE}(\bV_{\bA\bE}(\theta))
    =
    \Tr\left[
        \left(\bigotimes_{k=1}^N V_{k,\theta}\right)
        (\Omega_\bA \otimes \Xi_{\bE})
    \right].
\end{multline}
Under this condition, $\bigotimes_k V_{k,\theta}$ can be interpreted as an ordinary local unitary signal acting on each enlarged event $A_kE_k$. Open dynamics signals that are possibly nonunitary can be explained by discarding the auxiliary events, while correlations between signal events can be also supplied by the auxiliary spacetime state $\Xi_\bE$.
Note that the interference term in Eq.~\eqref{eqn:dilated_interference} can be written as $\Tr[\bK_\theta\Omega_\bA]$, where
\begin{equation}
    \bK_\theta
    =
    \Tr_\bE\left[
        \left(\bigotimes_{k=1}^N V_{k,\theta}\right)
        (I_\bA\otimes\Xi_\bE)
    \right].
\end{equation}
The \emph{intervention operator} $\bK_\theta$ fixes the interference statistics. Dilations of the same reduced channel can yield different $\bK_\theta$ and Fisher information.

For a local nonunitary signal, initialize each dilation auxiliary $E_k$ in $\ket{0}$. The inactive arm leaves it unchanged, selecting the Kraus amplitude $\bra{0}_{E_k}V_{k,\theta}\ket{0}_{E_k}$. With independent auxiliaries, $\bK_\theta$ is their product.
Spatially correlated signals represented by localizable maps~\cite{beckman2001causal}, i.e., multipartite channels implementable by local operations and a shared state without communication, are obtained by allowing $\Xi_\bE$ to be entangled. The resulting intervention operator $\bK_\theta$ need not factor into local Kraus operators.

Finally, the interferometric scheme can be causally agnostic even when the signal itself has temporal structure. If the encoding is a quantum process with a definite causal order, i.e., a quantum comb, it can be implemented by a memory system $E$ that sequentially interacts with the events through unitaries $V_{k,\theta}$~\cite{chiribella2008quantumcircuit,chiribella2009theoretical}. The dynamics of the memory can be represented by a time state $\Xi_\bE$ for the parameter-independent propagation of the memory between interactions, and the interference term again has the form of Eq.~\eqref{eqn:dilated_interference}. The relation between the intervention operator, the Kraus operators of a dilation, and correlated auxiliaries is detailed in \hyperref[sec:end-nonunitary-signals]{\emph{Nonunitary and correlated signals}} of the End Matter.

For a dilation with $\Omega_\bA\otimes\Xi_\bE$ satisfying Definition~\ref{def:spacetime_states}, Eq.~\eqref{eqn:HL} applies to the enlarged events and the dilation generators. Comparing compatible dilations with the same intervention operator $\bK_\theta$ can tighten the bound. The Supplemental Material gives explicit implementations for unknown scattering (Example~\ref{ex:unknown_event}), random collision patterns including independent collisions (Sec.~\ref{subsec:collision}), two implementations sharing a noisy channel but differing in Fisher information (Example~\ref{ex:two_impl}), and collectively controlled noise (Example~\ref{ex:collective}).

\paragraph{Discussion---}%
The quantum switch metrology studied in Refs.~\cite{zhao2020metrology,mothe2024reassessing} uses controlled schemes of the form~\eqref{eqn:controlled_form} in which both arms carry parameter-dependent signals. Absorbing the signal on the reference arm into the spacetime state yields a dressed state that depends on $\theta$, which lies outside the hypotheses of Theorem~\ref{thm:F_upper_bound}. These results are therefore not in tension with Eq.~\eqref{eqn:HL}. Their quantitative account within the spacetime state formalism is left for future work.

Within the hypotheses of Theorem~\ref{thm:F_upper_bound}, the saturation conditions~\eqref{eqn:saturation_cond_1} and \eqref{eqn:saturation_cond_2} can be met with echo-type control. Sign-flipping pulses inserted between the events select a signed combination of the local parameters, such as their sum or difference, and the encounter order within each interval between pulses is irrelevant (Proposition~\ref{prop:control_word}). For two events separated by one pulse and encountered in either order with probabilities $p$ and $1-p$, the in-phase fringe estimates their difference at the bound~\eqref{eqn:HL} for every $p$, while the quadrature fringe records the order imbalance $2p-1$ (Example~\ref{ex:echo}). Classically randomized photonic routing could test both quadratures, using routing elements from quantum switch experiments~\cite{procopio2015superposition,rubino2017experimental,goswami2018indefinite} with an order distribution independent of the interferometer arm.

In this work, we have developed a theory of causally agnostic metrology which Proposition~\ref{prop:causal_agnos_charac} characterizes as interferometric sensing for discrete probing interactions. Its Fisher information obeys a spacetime Heisenberg limit whose saturation is decided by marginals of the possibly irregular dressed spacetime state associated with the scheme. Theorem~\ref{thm:F_upper_bound} uses only the defining inequality $\abs{\cI_\Omega(\bW_\bA)}\leq1$, which holds because every fringe is a difference of probabilities. The bound therefore applies whenever the full collection of target and auxiliary events has a parameter-independent spacetime state and can be read out interferometrically. We remark that no background spacetime ordering enters the result. This makes the spacetime state formalism suitable for causally agnostic metrology of quantum events that cannot be embedded in a conventional spacetime such as indefinite causal order and retrocausality, and we leave demonstrating concrete applications to future work.

\begin{acknowledgments}
\paragraph{Acknowledgments---}%
S.H.L. is supported by the start-up fund and the 2026 Research Fund (1.250007.01, 1.260015.01) of Ulsan National Institute of Science \& Technology (UNIST), Institute of Information \& Communications Technology Planning \& Evaluation (IITP) Grants (RS-2025-02283189), and National Research Foundation of Korea (RS-2025-25464492, RS-2026-25520808). H.K. was supported by the IITP (RS-2025-25464252, RS-2024-00437191) and the NRF (RS-2025-25464492, RS-2024-00442710) funded by the Ministry of Science and ICT (MSIT), Korea.  We used OpenAI ChatGPT (Sol/Astra) and Anthropic Claude (Fable) for discussions, formatting and proofreading. The augmentation of the proof of Theorem~\ref{thm:F_upper_bound} using Bernstein's inequality was suggested by ChatGPT-5.6 Sol.
\end{acknowledgments}

\bibliography{main}

\clearpage
\setcounter{equation}{0}
\renewcommand{\theequation}{E\arabic{equation}}
\renewcommand{\theHequation}{E\arabic{equation}}
\onecolumngrid
\section*{End Matter}
\twocolumngrid
\phantomsection\label{sec:end-systems-events}
\paragraph{Systems and events---}%
We clarify the use of the terms `system' and `event' in this work. A \emph{system} denotes a physical carrier or platform whose identity may persist while it evolves in time. An \emph{event} denotes quantum degrees of freedom localized within one spacetime region and assigned their own Hilbert space label.

It should be noted that localization of an event in spacetime here need not be an assignment of specific spacetime coordinates $(x,t)$. Labels of events such as $A_1,A_2,\ldots$ distinguish local laboratories or encounters in the same way that labels like `Alice' and `Bob' in the conventional quantum theory distinguish separated laboratories, while their positions, times, and causal relations may remain unknown. In each physical realization, the events are localized, but the spacetime state description does not presuppose a background map that orders them.

The distinction is useful when one physical carrier appears at several encounters. For example, `the proton' is a system, whereas `the proton at encounter $A$' and `the proton at encounter $B$' are different events. However, this is only a convention for readability rather than an additional postulate, and the two terms may be used interchangeably when no confusion arises.

\phantomsection\label{sec:end-interferometry}
\paragraph{Spacetime states from interferometry---}%
The interferometric form of a causally agnostic scheme~\cite{lie2026probe} extends to discrete interactions after reference-only dynamics are absorbed into the preparation or readout of $R$ (Sec.~\ref{app:discrete_causal_agnostic} of the Supplemental Material). In a common arm basis, the local interactions take the controlled form
\begin{equation} \label{eqn:controlled_form}
    V_{RA_k}
    =
    \sum_\lambda
    \dyad{\lambda}_R\otimes V_{A_k|\lambda}.
\end{equation}
Here $V_{A_k|\lambda}$ denotes the unitary at event $A_k$ conditioned on arm $\lambda$. These interactions commute when represented on the labeled event spaces, so the same local probing prescription applies without a specified encounter order. This order independence concerns the implementation alone. The interference signal still depends on the process connecting the events, and this dependence is what the spacetime state records. In the two-arm scheme, we choose $V_{A_k|0}=I_{A_k}$ and $V_{A_k|1}=V^{(k)}_{A_k}$, keeping the original process on arm $\ket{0}_R$ as the reference for the interventions on arm $\ket{1}_R$.

At each encounter $A_k$, write the joint density operator $\varrho$ of $R$, the target, and any memory in blocks on $R$. The controlled interaction acts on its off-diagonal block as
\begin{equation}
    X_{10}:=\bra{1}_R\varrho\ket{0}_R
    \quad\longmapsto\quad
    (V^{(k)}_{A_k}\otimes I)X_{10},
    \label{eqn:coherence_insertion}
\end{equation}
where $I$ acts on the other degrees of freedom at that encounter. Evolution along the arms propagates this block linearly, including any memory. Thus each $V^{(k)}_{A_k}$ enters the final arm coherence once, by left multiplication at its own event. After tracing out the target and memory, let $\rho_R^{\mathrm{out}}(\bV_\bA)$ be the reference state at readout. Measuring $R$ in $\ket{\pm}$ gives $P(\pm)=[1\pm\Re\cI(\bV_\bA)]/2$, where the interference term is
\begin{equation}
    \cI(\bV_\bA)
    :=2\bra{1}\rho_R^{\mathrm{out}}(\bV_\bA)\ket{0}.
    \label{eqn:coherence_readout}
\end{equation}

With the preparation and propagation fixed, $\cI$ is multilinear in the local intervention matrices. In bases $\{\ket{i_k}\}$ of the event spaces, it therefore admits an expansion
\begin{equation}
    \cI(\bV_\bA)
    =\sum_{\boldsymbol i,\boldsymbol j}
    c_{\boldsymbol i\boldsymbol j}
    \prod_k(V^{(k)}_{A_k})_{i_kj_k}.
    \label{eqn:interference_coefficients}
\end{equation}
These coefficients contain the preparation, dynamics, and causal connections between the events. With $\ket{\boldsymbol i}:=\bigotimes_k\ket{i_k}$, define
\begin{equation}
    \Omega
    :=\sum_{\boldsymbol i,\boldsymbol j}
    c_{\boldsymbol i\boldsymbol j}
    \dyad{\boldsymbol j}{\boldsymbol i}.
    \label{eqn:state_from_interference}
\end{equation}
Then $\cI(\bV_\bA)=\Tr[\bV_\bA\Omega]$, recovering Eq.~\eqref{eqn:interference_term}. Each tensor factor of $\Omega$ refers to an intervention event, including successive encounters of one carrier, rather than to a subsystem of the final output. Since local unitaries span each operator space, this representation is unique. Positivity of $\rho_R^{\mathrm{out}}$ gives $\abs{\cI}\leq1$, so $\Omega$ satisfies Definition~\ref{def:spacetime_states}. Combined with Proposition~\ref{prop:causal_agnos_charac}, this shows that every causally agnostic scheme probes a spacetime state of the events, whatever their causal relations.

\phantomsection\label{sec:end-irregularity}
\paragraph{Irregularity and dressing---}%
With every intervention set to the identity, $\Tr\Omega=2\bra{1}\rho_R^{\mathrm{out}}(I)\ket{0}$ measures the arm coherence. For example, starting from $\dyad{+}_R\otimes\rho$, unitary evolutions $U_0,U_1$ on the respective arms give $X_{10}=U_1\rho U_0^\dag/2$ and hence $\Tr\Omega=\Tr[U_1\rho U_0^\dag]$, whereas identical evolutions on both arms give one. Reference dephasing attenuates $X_{10}$ and hence $\Tr\Omega$, while $\rho_R^{\mathrm{out}}$ remains normalized. Earlier dynamics can also restore the initial coherence, so regularity does not exclude prior arm-dependent evolution.

In particular, a fixed unitary $U^{(k)}_{A_k}$ applied before $V^{(k)}_{A_k}$ on arm $\ket{1}_R$ at each event can be interpreted as an arm-dependent evolution. Regarding these fixed unitaries, collectively denoted by $\bU_\bA$, as part of the probed process changes $\Omega$ to the `dressed' state $\bU_\bA\Omega$, which the subsequent interventions $\bV_\bA$ probe through
\begin{equation}
    \cI_\Omega(\bV_\bA\bU_\bA)
    =
    \cI_{\bU_\bA\Omega}(\bV_\bA).
    \label{eqn:absorb_intervention}
\end{equation}
We remark that since $\bV_\bA\bU_\bA$ is again a local unitary intervention, $\bU_\bA\Omega$ satisfies Definition~\ref{def:spacetime_states}. Its trace $\Tr[\bU_\bA\Omega]=\cI_\Omega(\bU_\bA)$ is the interference term of the fixed intervention itself, so the dressed state is irregular unless this term equals one.

In local estimation, this freedom in dividing interventions into the nominal intervention part and the arm-dependent evolution part allows the signal at a fixed operating point $\theta$ to be absorbed into the probed process. Define the remaining deviation by
\begin{equation}
    W_{k,\theta}(\Delta\theta)
    :=
    U_{k,\theta+\Delta\theta}U_{k,\theta}^\dag
    =
    I-ih_{k,\theta}\Delta\theta+O(\Delta\theta^2).
\end{equation}
With $\bW_\theta(\Delta\theta)=\bigotimes_kW_{k,\theta}(\Delta\theta)$, the signal factorizes as $\bU_{\theta+\Delta\theta}=\bW_\theta(\Delta\theta)\bU_\theta$. Equation~\eqref{eqn:absorb_intervention} with $\bU_\bA=\bU_\theta$ and $\bV_\bA=\bW_\theta(\Delta\theta)$ then identifies $\Omega_\theta=\bU_\theta\Omega$ as the dressed state probed by $\bW_\theta(\Delta\theta)$ [Fig.~\ref{fig:scheme}(b)]. Here $\theta$ is held fixed, so $\Omega_\theta$ is independent of the estimated deviation $\Delta\theta$, while it is generally irregular. To first order, the deviation at $A_k$ contributes $\Im\Tr[h_{k,\theta}\Omega_{k,\theta}]$ to the fringe slope, where $\Omega_{k,\theta}$ is the marginal of $\Omega_\theta$ on $A_k$. These local sensitivities explain why the saturation conditions~\eqref{eqn:saturation_cond_1} and \eqref{eqn:saturation_cond_2} involve the dressed marginals.

\phantomsection\label{sec:end-time-states}
\paragraph{Time states and $\star$-products---}%
For self-containedness, we summarize the construction of time states using the operation known as the $\star$-product in this section.
For a concrete two-event example, let $\rho_A$ evolve to $B$ through a channel $\cE$ common to both arms. With a fresh reference, the coherence block undergoes left multiplication by $V_A$, propagation by $\cE$, and left multiplication by $V_B$, giving
\begin{equation}
    \cI_\Omega(V_A\otimes V_B)
    =\Tr[V_B\cE(V_A\rho_A)].
    \label{eqn:two_event_interference}
\end{equation}
To identify the time state of this dynamics, we rewrite the interference term as a trace pairing on $A\otimes B$. With the \jami operator~\cite{jamiolkowski1972linear,choi1975completely}
\begin{equation}
    J[\cE]
    =\sum_{i,j}\dyad{i}{j}_A\otimes\cE(\dyad{j}{i}),
\end{equation}
we obtain
\begin{equation}
    \Tr[V_B\cE(V_A\rho_A)]     =\Tr[(V_A\otimes V_B)(\rho_A\otimes I_B)J[\cE]].
\end{equation}
Since this equality holds for every pair of local intervention unitaries, comparison with Eq.~\eqref{eqn:interference_term} identifies the time state probed by this scheme as an object known as the left $\star$-product~\cite{fullwood2022quantum,lie2024unique,lie2025multi},
\begin{equation}
    \Omega=\cE\star_L\rho=(\rho_A\otimes I_B)J[\cE].
\end{equation}
Repeating this construction along a channel chain gives the directed time states used in Sec.~\ref{sec:chronology} of the Supplemental Material.

For comparison, the right and Fullwood-Parzygnat (FP) products~\cite{fullwood2022quantum,lie2024unique} are defined by
\begin{align}
    \cE\star_R\rho&=J[\cE](\rho_A\otimes I_B),
    \\
    \cE\star_{\mathrm{FP}}\rho
    &=\tfrac12(\cE\star_L\rho+\cE\star_R\rho).
\end{align}
The right product replaces $V_A\rho_A$ in Eq.~\eqref{eqn:two_event_interference} by $\rho_AV_A$, while the FP product averages the two orderings. The left and right pairings are complex conjugates for Hermitian observables and give the Kirkwood-Dirac quasiprobabilities when evaluated on spectral projectors~\cite{kirkwood1933quantum,dirac1945analogy,lostaglio2023kirkwood}. Their real part gives the Margenau-Hill distribution~\cite{margenau1961correlation,lostaglio2023kirkwood}. The generally non-Hermitian left and right products signify time reversal asymmetry, whereas the FP product is Hermitian. Here time reversal asymmetry concerns distinguishability of future and past, rather than the reversibility of the channel~\cite{lie2026probe}. The two-time FP product coincides with the pseudo-density operator, and its canonical iteration yields the multipartite version for a channel chain~\cite{fitzsimons2015quantum,fullwood2022quantum,fullwood2025dynamics,parzygnat2025time,diaz2026unifying}.

\phantomsection\label{sec:end-nonunitary-signals}
\paragraph{Nonunitary and correlated signals---}%
The main text described how a dilated signal is absorbed into the intervention operator $\bK_\theta$. Here, we explain the technical information retained by $\bK_\theta$ that is absent from the reduced channel.

For a local dilation with the auxiliary event initialized in $\ket{0}_{E_k}$, define
\begin{equation}
    K_{k,\theta}^{(\mu)}
    :=
    \bra{\mu}_{E_k}V_{k,\theta}\ket{0}_{E_k}.
\end{equation}
While the reduced signal channel induced by the Kraus operators is
\begin{equation}
    \cE_{k,\theta}(\tau)
    =
    \sum_\mu
    K_{k,\theta}^{(\mu)}\tau
    K_{k,\theta}^{(\mu)\dag},
\end{equation}
since the inactive interferometer arm leaves $E_k$ in $\ket{0}_{E_k}$, only the reference-overlap amplitude $K_{k,\theta}^{(0)}$ is relevant in the interferometry~\cite{oi2003interference,abbott2020communication}. For independently prepared auxiliaries,
\begin{equation}
    \bK_\theta
    =
    \bigotimes_{k=1}^N K_{k,\theta}^{(0)}.
\end{equation}

Correlations in the auxiliary spacetime state supply nonlocal structure in the intervention operator. For any product operator expansion
\begin{equation}
    \Xi_\bE
    =
    \sum_\alpha c_\alpha
    \bigotimes_{k=1}^N X_k^{(\alpha)},
\end{equation}
one obtains
\begin{align}
    \bK_\theta
    &=
    \sum_\alpha c_\alpha
    \bigotimes_{k=1}^N K_{k,\theta}^{(\alpha)},
    \\
    K_{k,\theta}^{(\alpha)}
    &:=
    \Tr_{E_k}\left[
        V_{k,\theta}
        \left(I_{A_k}\otimes X_k^{(\alpha)}\right)
    \right].
\end{align}
An uncorrelated auxiliary gives a product intervention operator, while a correlated auxiliary can give a nonfactorizing one although every $V_{k,\theta}$ is local on $A_kE_k$. The same formula covers a definite-order memory process when the $E_k$ label successive events of one memory system and $\Xi_\bE$ is its time state.

Finally, if $\partial_\theta V_{k,\theta}=-i\widetilde h_{k,\theta}V_{k,\theta}$, the Heisenberg limit applies to the enlarged event partition:
\begin{equation}
    F_\theta
    \leq
    \norm{\sum_{k=1}^N\widetilde h_{k,\theta}}_\infty^2
    \leq
    \left(
        \sum_{k=1}^N\norm{\widetilde h_{k,\theta}}_\infty
    \right)^2.
\end{equation}
 This bound is obtained via a specific dilation, whereas $\bK_\theta$ is the observable object. Distinct compatible dilations may share the same $\bK_\theta$ and therefore the same statistics in interferometry while giving different dilation-dependent bounds of the Fisher information.

% References cited only in the Supplemental Material are retained in the main bibliography.
\nocite{redheffer1953plancherel,bisio2011quantum}

\clearpage
% Restore the numbered sections of the standalone PRA supplement after the PRL main text.
\setcounter{secnumdepth}{4}
\makeatletter
\renewcommand{\theHequation}{S\arabic{equation}}
\renewcommand{\theHsection}{S\arabic{section}}
\renewcommand{\theHsubsection}{S\arabic{section}.\arabic{subsection}}
\providecommand{\theHtheorem}{}
\providecommand{\theHlemma}{}
\providecommand{\theHcorollary}{}
\providecommand{\theHprop}{}
\providecommand{\theHexample}{}
\providecommand{\theHremark}{}
\renewcommand{\theHtheorem}{S\arabic{theorem}}
\renewcommand{\theHlemma}{S\arabic{lemma}}
\renewcommand{\theHcorollary}{S\arabic{corollary}}
\renewcommand{\theHprop}{S\arabic{prop}}
\renewcommand{\theHexample}{S\arabic{example}}
\renewcommand{\theHremark}{S\arabic{remark}}
\makeatother
\onecolumngrid
\begin{center}
    {\bf \large Supplemental Material for ``Quantum Metrology for Signals with an Unknown Causal Structure''}
\end{center}

% --- Independent numbering for the Supplemental Material -------------
\setcounter{section}{0}
\setcounter{equation}{0}
\setcounter{theorem}{0}
\setcounter{lemma}{0}
\setcounter{corollary}{0}
\setcounter{prop}{0}
\setcounter{example}{0}
\setcounter{remark}{0}
\renewcommand{\thesection}{S\arabic{section}}
\renewcommand{\theequation}{S\arabic{equation}}
\renewcommand{\thetheorem}{S\arabic{theorem}}
\renewcommand{\thelemma}{S\arabic{lemma}}
\renewcommand{\thecorollary}{S\arabic{corollary}}
\renewcommand{\theprop}{S\arabic{prop}}
\renewcommand{\theexample}{S\arabic{example}}
\renewcommand{\theremark}{S\arabic{remark}}

\section{Conventions}
For a system or event $A$, we write $\cH_A$ for its Hilbert space, and $\cL(\cH)$ denotes the space of linear operators on a Hilbert space $\cH$. The parameter is encoded at $N$ probe events, labeled $A_1,A_2,\ldots,A_N$ and written collectively as $\bA=A_1A_2\cdots A_N$. These labels distinguish the events without fixing their causal relations, so the enumeration implies no encounter order. Boldface denotes collectives of this kind throughout, both for events and for the joint operators assembled from their single-event constituents. In particular, $E_k$ denotes an auxiliary event attached to $A_k$, whether it dilates a nonunitary signal (Sec.~\ref{sec:beyond}) or carries memory of the probe dynamics (Sec.~\ref{sec:chronology}), and $\bE=E_1E_2\cdots E_N$. Products and sums over $k$ run from $1$ to $N$ unless stated otherwise. Whenever the encoding consists of local unitary signals, we write the joint signal as $\bU_\theta = \bigotimes_k U_{k,\theta}$. Their Hermitian generators $h_{k,\theta}$ are defined by $\partial_\theta U_{k,\theta}=-ih_{k,\theta}U_{k,\theta}$, with $h_{k,\theta}=h_{k,\theta}^\dag$. The complex quantity
\begin{equation}
    \cI_\Omega(\bU_\theta) = \Tr[\bU_\theta\Omega],
    \qquad\text{abbreviated}\qquad
    \cI_\Omega(\theta) := \cI_\Omega(\bU_\theta),
\end{equation}
is the \emph{interference term}; it carries the spacetime state $\Omega$ on which it is evaluated as a subscript. The measured signal, the \emph{fringe}, is its real part, and a controllable readout phase $\phi$ on the reference system gives the family
\begin{equation}
    S^{(\phi)}_\theta(\Omega) = \Re\qty[e^{-i\phi}\,\cI_\Omega(\theta)],
    \qquad
    S_\theta(\Omega) := S^{(0)}_\theta(\Omega) = \Re\,\cI_\Omega(\theta),
    \qquad
    S^{(\pi/2)}_\theta(\Omega) = \Im\,\cI_\Omega(\theta).
\end{equation}
We call $S_\theta$ the (in-phase) fringe and $S^{(\pi/2)}_\theta$ the \emph{out-of-phase} fringe. Subscripts on $S$ and on the Fisher information $F$ denote the estimated parameter; the readout phase appears only as the superscript $(\phi)$, omitted when it is zero. At points where $\abs{S_\theta}=1$, the Fisher information is understood by the limiting convention stated in the main text.

\section{Discrete characterization of causally agnostic interactions}
\label{app:discrete_causal_agnostic}

We prove a discrete version of the characterization of interferometry as a causally agnostic measurement scheme~\cite{lie2026probe}. The continuous-time argument of Ref.~\cite{lie2026probe} imposed order independence on infinitesimal evolutions generated by Hamiltonians. Here we impose order independence directly on finitely many discrete interactions between a reference system and target systems.

Let $R$ be the reference system and let $A$ denote a target event. For a unitary $V_{RA}$, write the corresponding unitary channel as $\Ad_V(\cdot)=V(\cdot)V^\dag$.

A reference-based measurement scheme is specified by its admissible interaction sets. Let $\mathfrak U_A$ be the admissible family between $R$ and an event of type $A$, independently available at every copy of that event type (in particular, the same member may be reused at distinct copies). We impose causal agnosticity on every finite sequence of event types, including repetitions. It requires interactions chosen independently from these families to define the same measurement for every encounter order, whatever target process is connected to the events, while the final outcome is read only from $R$.

To compare complete measurement processes rather than a single terminal marginal, it is useful to employ the quantum comb formalism~\cite{chiribella2008quantumcircuit,chiribella2009theoretical,bisio2011quantum}. A comb represents a sequential quantum network by one Choi operator, retaining its open input and output ports, while a tester represents an experiment connected to those ports. An operation called the link product that connects a comb with a tester gives the outcome probabilities. This language is particularly convenient here because different encounter orders can be compared as operators once their external ports are identified by the event labels.

Fix a specific choice of admissible interactions $\mathbf V=(V_{RA_1},\ldots,V_{RA_N})$. For every permutation $\pi\in S_N$, their ordered application defines a pure deterministic comb $\mathsf C_\pi[\mathbf V]$ whose memory is $R$ and whose teeth encounter $A_{\pi(1)},\ldots,A_{\pi(N)}$. We label its external ports by the corresponding events and, for simplicity, use the same label for input and output ports. After the ports are identified in this way, all these combs have the same input-output type. Let $C_\pi[\mathbf V]$ denote the Choi operator of $\mathsf C_\pi[\mathbf{V}]$.

In this picture, the tester plays the role of the target process and of the experimenter at once, in the sense that it supplies whatever enters the event ports and collects whatever leaves them, and it also prepares $R$ and performs the final measurement on $R$. Not every tester of the right port type can be connected to $\mathsf C_\pi[\mathbf V]$, however. Since the teeth of $\mathsf C_\pi[\mathbf V]$ are applied in the order $\pi$, a tester that routes the output of an event encountered later in $\pi$ into the input of an event encountered earlier would close a causal loop. We say that a tester is \emph{compatible} with the encounter order $\pi$ when it creates no such loop, that is, when its own causal structure can be interleaved with that of $\mathsf C_\pi[\mathbf V]$. Causal agnosticity compares the statistics of one and the same tester across different encounter orders, so only testers compatible with every $\pi\in S_N$ enter the condition. Such testers exist in abundance: a tester that prepares an arbitrary state of $R\bA$ (possibly correlated with an ancillary memory) in a common past, delivers each $A_k$ to its input port, retains the output of every event port in quantum memory, processes all retained systems in a common future, and finally measures $R$ alone never feeds one event port into another, and is therefore compatible with every encounter order, including those in which the events are timelike separated. The testers used below are all of this form.

For a tester $\mathsf T=\{T_x\}_x$ compatible with every encounter order, whose recorded output is the result $x$ of the final measurement on $R$, the generalized Born rule for quantum networks gives~\cite{chiribella2009theoretical}
\begin{equation}
    p_\pi(x|\mathbf V,\mathsf T)
    =
    T_x*C_\pi[\mathbf V],
    \label{eqn:comb_born_rule}
\end{equation}
where $*$ denotes the link product mentioned above. When the encounter order changes, the same tester is connected to the correspondingly labeled ports. Causal agnosticity requires Eq.~\eqref{eqn:comb_born_rule} to be independent of $\pi$ for every admissible $\mathbf V$, every tester $\mathsf T$ compatible with every encounter order, and every outcome $x$.

\begin{lemma}
\label{lem:statistics_imply_channel}
Under the causal-agnostic condition above, the ordered interaction combs coincide,
\begin{equation}
    C_\pi[\mathbf V]
    =
    C_{\pi'}[\mathbf V]
    \label{eqn:ordered_comb_equality}
\end{equation}
for every pair of permutations $\pi,\pi'$. In particular, any two independently chosen admissible interactions at distinct event copies commute as unitary channels. In other words, for any admissible interactions $V_{RA}$ and $W_{RB}$,
\begin{equation}
    \Ad_{V_{RA}}\Ad_{W_{RB}}
    =
    \Ad_{W_{RB}}\Ad_{V_{RA}}.
    \label{eqn:channel_level_commutation}
\end{equation}
\end{lemma}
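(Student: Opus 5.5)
The strategy is tomographic completeness. The testers allowed in the causal-agnostic condition are those compatible with every encounter order. I will show that this restricted class still separates the Choi operators $C_\pi[\mathbf V]$. Order-independent statistics then force the operators themselves to coincide. Commutation of two interactions follows by specializing to $N=2$.

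\textbf{Step 1: a spanning family of testers.} Consider testers of the common-past/common-future form described above. Such a tester:
\begin{itemize}
\item prepares an arbitrary state $\sigma$ on $R\bA M$, where $M$ is an ancillary memory;
\item feeds each $A_k$ into its input port;
\item stores each output port in memory;
\item applies a joint operation on $R$, the stored outputs, and $M$;
\item measures $R$.
\end{itemize}
Since $R$ is the final carrier, the joint operation may swap a stored output (or any part of it) into $R$ before the measurement. Equivalently, it may measure $R$ together with the stored outputs and relabel, because the final outcome records only a classical $x$.

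With this freedom, the tester effects $T_x$ range over operators of the form $\sigma^{T}_{\text{in}}\otimes E_x$. Here $\sigma$ ranges over all states on the inputs (including $R$'s input), and $E_x$ ranges over all POVM elements on the outputs. The span of these products is all of $\cL$ of the port spaces. In the standard Chiribella--D'Ariano--Perinotti language, a prepare-then-measure tester with a common past and a common future is informationally complete for deterministic combs of that port type.

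Concretely, I would write
\begin{equation}
T_x*C_\pi=\Tr\qty[(\sigma^{T}\otimes E_x)\,C_\pi].
\end{equation}
This is the Choi pairing of the overall channel $\Phi_\pi=\Tr$-free composition of the interactions, viewed as a channel from $R\bA$ to $R\bA$. Equality for all $\sigma$ and all $E_x$ gives $\Phi_\pi=\Phi_{\pi'}$ as channels. Since $C_\pi$ is the Choi operator of $\Phi_\pi$ under the port identification, Eq.~\eqref{eqn:ordered_comb_equality} follows.

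\textbf{Step 2: the main obstacle.} The subtle point is justifying that the final measurement may effectively access the event outputs even though "the final outcome is read only from $R$." I would handle this through the explicit construction: the common-future operation is an arbitrary channel on (stored outputs)$\otimes R\otimes M\to R$. Such channels include "swap the system to be measured into $R$." Any measurement on a finite-dimensional system can then be simulated by first coarse-graining to a qudit $R$, or by a sequence of binary tests whose statistics are linear in $E_x$.

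If $\dim R$ is too small to hold the full output, I would use binary effects. Each two-outcome POVM $\{E,I-E\}$ on the outputs can be realized by a unitary coupling the outputs to the $\ket{0},\ket{1}$ levels of $R$ followed by a computational-basis measurement. Two-outcome effects already span the operator space, so linearity completes the argument.

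\textbf{Step 3: commutation.} Take $N=2$ with distinct event copies $A,B$, which the hypothesis permits. Here $C_{\mathrm{id}}$ and $C_{(12)}$ are the Choi operators of $\Ad_{W_{RB}}\Ad_{V_{RA}}$ and $\Ad_{V_{RA}}\Ad_{W_{RB}}$ on $RAB$. Equality of Choi operators yields Eq.~\eqref{eqn:channel_level_commutation}. The "reused member" clause covers the case $V=W$ at different copies.
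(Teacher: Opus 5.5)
Your proposal is correct and follows essentially the same route as the paper. The paper also uses a common-past/common-future tester that prepares a product state on $R\bA$, retains all event outputs, and writes the outcome of a binary effect $\{E,I-E\}$ on $R\bA$ into the $\ket{0},\ket{1}$ levels of $R$ through the measure-and-prepare channel $\cD_E$; it phrases this as a contradiction from one separating state and effect, where you use a spanning argument. The only other difference is in obtaining the commutation relation: the paper cancels the invertible channels surrounding an adjacent transposition in a general $N$-event sequence, whereas you specialize directly to $N=2$, which is equally valid because causal agnosticity is imposed on every finite sequence of event types.
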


\begin{proof}
Composing the teeth of $\mathsf C_\pi[\mathbf V]$ in their encounter order gives the unitary channel
\begin{equation}
    \Phi_\pi[\mathbf V]
    :=
    \Ad_{V_{RA_{\pi(N)}}}
    \cdots
    \Ad_{V_{RA_{\pi(1)}}}
    \label{eqn:ordered_total_channel}
\end{equation}
from $R\bA$ to $R\bA$. Once the ports are identified by the event labels, $C_\pi[\mathbf V]$ is the Choi operator of $\Phi_\pi[\mathbf V]$, up to a fixed reordering of the tensor factors that is the same for every $\pi$. Equation~\eqref{eqn:ordered_comb_equality} is therefore equivalent to $\Phi_\pi[\mathbf V]=\Phi_{\pi'}[\mathbf V]$. Two different combs can always be told apart by some tester~\cite{chiribella2009theoretical}; what has to be checked is that a tester compatible with every encounter order, whose only measurement acts on $R$, already suffices.

Suppose that $\Phi_\pi[\mathbf V]\neq\Phi_{\pi'}[\mathbf V]$. Since product density operators span the operator space of $R\bA$, there is a state $\rho_R\otimes\rho_\bA$ for which
\begin{equation}
    \Phi_\pi[\mathbf V](\rho_R\otimes\rho_\bA)
    \neq
    \Phi_{\pi'}[\mathbf V](\rho_R\otimes\rho_\bA),
\end{equation}
and the two output states are separated by an effect $0\leq E\leq I$ on $R\bA$, $\Tr[E\,\Phi_\pi[\mathbf V](\rho_R\otimes\rho_\bA)]\neq\Tr[E\,\Phi_{\pi'}[\mathbf V](\rho_R\otimes\rho_\bA)]$. Consider the tester that prepares $\rho_R\otimes\rho_\bA$ in the common past, retains the output of every event port, applies in the common future the channel
\begin{equation}
    \cD_E(\tau)
    =
    \Tr[E\tau]\dyad{1}_R
    +
    \Tr[(I-E)\tau]\dyad{0}_R
    \label{eqn:readout_channel_to_R}
\end{equation}
from $R\bA$ to $R$, and measures $R$ in the basis $\{\ket{0}_R,\ket{1}_R\}$. This tester is of the common-past--common-future form described above, so it is compatible with every encounter order, and its only measurement acts on $R$. Connected to $\mathsf C_\pi[\mathbf V]$, it returns $x=1$ with probability $\Tr[E\,\Phi_\pi[\mathbf V](\rho_R\otimes\rho_\bA)]$, which differs between $\pi$ and $\pi'$. This contradicts causal agnosticity. Therefore all ordered channels, and equivalently all ordered combs, coincide, proving Eq.~\eqref{eqn:ordered_comb_equality}.

It remains to extract pairwise commutativity. Choose two permutations that differ only by exchanging adjacent encounters $A_i$ and $A_j$. Their total channels can be written as
\begin{align}
    \Phi_\pi
    &=
    \Phi_>\,
    \Ad_{V_{RA_j}}
    \Ad_{V_{RA_i}}\,
    \Phi_<,
    \\
    \Phi_{\pi'}
    &=
    \Phi_>\,
    \Ad_{V_{RA_i}}
    \Ad_{V_{RA_j}}\,
    \Phi_<,
\end{align}
where $\Phi_<$ and $\Phi_>$ are the common channels before and after the exchanged pair. Since they are unitary channels and hence invertible, canceling them from $\Phi_\pi=\Phi_{\pi'}$ gives
\begin{equation}
    \Ad_{V_{RA_j}}\Ad_{V_{RA_i}}
    =
    \Ad_{V_{RA_i}}\Ad_{V_{RA_j}}.
\end{equation}
Since the interactions at the two copies can be chosen independently, setting $V_{RA_i}=V_{RA}$ and $V_{RA_j}=W_{RB}$ gives Eq.~\eqref{eqn:channel_level_commutation}.
\end{proof}

Equation~\eqref{eqn:channel_level_commutation} thus follows at the channel level from requiring identical statistics against every tester compatible with every encounter order. Equality of a single terminal marginal of $R$ would not suffice, but equality of the complete $R$-readout statistics does.
To turn this channel-level relation into a controlled operator form, we first remove the possible projective phase when the same interaction is placed at two event copies.

\begin{lemma}
\label{lem:no_projective_phase_same_copy}
Let $A$ and $B$ be two copies of the same event, and let $V_{RA}$ and $V_{RB}$ be the same admissible unitary applied to the two copies. 
If
\begin{equation}
    \Ad_{V_{RA}}\Ad_{V_{RB}}
    =
    \Ad_{V_{RB}}\Ad_{V_{RA}},
\end{equation}
then
\begin{equation}
    V_{RA}V_{RB}
    =
    V_{RB}V_{RA}.
\end{equation}
\end{lemma}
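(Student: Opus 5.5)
Equality of the two unitary channels means the unitaries agree up to a global phase, so the plan is to show that the phase must be $1$ because the two factors are the same operator on swapped copies. From $\Ad_{V_{RA}}\Ad_{V_{RB}}=\Ad_{V_{RB}}\Ad_{V_{RA}}$, i.e. $\Ad_{V_{RA}V_{RB}}=\Ad_{V_{RB}V_{RA}}$, the standard fact that a unitary channel determines its unitary up to a scalar gives
\begin{equation*}
    V_{RA}V_{RB}=e^{i\alpha}\,V_{RB}V_{RA}
\end{equation*}
for some $\alpha\in\bR$. It then suffices to show $e^{i\alpha}=1$.

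The first route is the swap operator $\mathrm{SWAP}_{AB}$, which exchanges the two identical copies. Because $V_{RB}$ is the same unitary as $V_{RA}$ with the copy relabeled, $\mathrm{SWAP}_{AB}V_{RA}\mathrm{SWAP}_{AB}=V_{RB}$ and vice versa. Conjugating the relation by $\mathrm{SWAP}_{AB}$ yields $V_{RB}V_{RA}=e^{i\alpha}V_{RA}V_{RB}$. Substituting the original relation gives $V_{RA}V_{RB}=e^{2i\alpha}V_{RA}V_{RB}$, so $e^{2i\alpha}=1$ and $e^{i\alpha}=\pm1$. This step is routine.

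The main obstacle is excluding $e^{i\alpha}=-1$; the swap symmetry alone only fixes the phase up to a sign, so a second argument is needed. A determinant or trace argument does this. Taking the trace of $V_{RA}V_{RB}=-V_{RB}V_{RA}$ gives $\Tr[V_{RA}V_{RB}]=-\Tr[V_{RB}V_{RA}]=-\Tr[V_{RA}V_{RB}]$ by cyclicity, so $\Tr[V_{RA}V_{RB}]=0$. That is not yet a contradiction, so we use the tensor structure explicitly. Expand
\begin{equation*}
    V_{RA}=\sum_{a,b}\dyad{a}{b}_R\otimes X_{ab},
\end{equation*}
where $X_{ab}$ are operators on $A$ and $V_{RB}$ has the same blocks acting on $B$. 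Then $V_{RA}V_{RB}=\sum_{a,b,c}\dyad{a}{c}_R\otimes X_{ab}\otimes Y_{bc}$, with $X$ on $A$ and $Y=X$ on $B$, and $V_{RB}V_{RA}$ has blocks $\sum_b X_{bc}\otimes X_{ab}$ in the $A\otimes B$ ordering.

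The anticommutation relation says $\sum_b X_{ab}\otimes X_{bc}=-\sum_b X_{bc}\otimes X_{ab}$ for all $a,c$. To derive a contradiction, take the partial trace over $B$ (or pair with $\Tr_B$ against $I$) and sum over $a=c$. This gives $\sum_{a,b}X_{ab}\Tr X_{ba}=-\sum_{a,b}X_{ba}\Tr X_{ab}$, and relabeling shows the two sides are equal expressions, so each vanishes. Iterating the anticommutation twice with a third copy $C$ of the same event type, which is allowed because repetitions are admissible, gives a cleaner contradiction. Pairwise anticommutation of $V_{RA},V_{RB},V_{RC}$ combined with the cyclic relabeling $A\to B\to C\to A$, implemented by a 3-cycle permutation operator, forces $(-1)^3=1$ when the triple product $V_{RA}V_{RB}V_{RC}$ is reordered by an even permutation realized both by relabeling and by adjacent swaps. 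I would first try this three-copy parity argument. If it does not close, the fallback is the block computation with $\det$: comparing determinants on $R\otimes A\otimes B$ gives $(-1)^{d}=1$, which settles the case where the total dimension $d$ is odd, and for even $d$ I would return to the explicit block relation above.
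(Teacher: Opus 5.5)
Your reduction to $V_{RA}V_{RB}=\omega V_{RB}V_{RA}$ and the swap conjugation giving $\omega^2=1$ are correct. They are exactly the first half of the paper's proof. The gap is the exclusion of $\omega=-1$, which none of your proposed routes achieves.

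The three-copy parity argument does not produce a factor $(-1)^3$. Conjugating $T=V_{RA}V_{RB}V_{RC}$ by the cyclic relabeling gives $V_{RB}V_{RC}V_{RA}$. A 3-cycle is an even permutation, so restoring the original order takes two adjacent exchanges, and their phases multiply to $1$. The resulting relation $T\mapsto T$ is simply consistent. A third copy is also not among the lemma's hypotheses. The determinant comparison yields only $(-1)^{d_Rd_A^2}=1$, which is vacuous whenever $d_Rd_A^2$ is even, in particular for qubit events. The block and partial-trace identities you derive (e.g.\ $\Tr[V_{RA}V_{RB}]=0$) are, as you note, not contradictions either.

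The missing idea is that no invertible operator can anticommute with $S=I_R\otimes S_{AB}$. The reason is that the $\pm1$ eigenspaces of $S$ have unequal dimensions $d_Rd_A(d_A\pm1)/2$. If $\omega=-1$, then $X=V_{RA}V_{RB}$ satisfies $SX=-XS$. Hence $X$ would map the symmetric subspace injectively into the strictly smaller antisymmetric one, which is impossible; this is the paper's argument.

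An equivalent version is close to your trace attempt. From $X^\dagger SX=-S$ you get $\Tr S=-\Tr S$, whereas $\Tr S=d_Rd_A\neq0$. You traced $X$, which carries no information here. What must be traced is $S$ after conjugation by $X$.
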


\begin{proof}
The equality of unitary channels implies that
\begin{equation}
    V_{RA}V_{RB}
    =
    \omega
    V_{RB}V_{RA}
    \label{eqn:self_projective_commutation}
\end{equation}
for some phase $\omega$. (One can see this from, for example, their Choi matrices.)
Let $S_{AB}$ be the swap between the two event copies. 
Since
\begin{equation}
    S_{AB}V_{RA}S_{AB}=V_{RB},
    \qquad
    S_{AB}V_{RB}S_{AB}=V_{RA},
\end{equation}
Eq.~\eqref{eqn:self_projective_commutation} gives, with $X=V_{RA}V_{RB}$,
\begin{equation}
    S_{AB}XS_{AB}
    =
    \omega^{-1}X .
\end{equation}
The channel $\Ad_S$, conjugation by $S_{AB}$, has only the eigenvalues $\pm1$, so $\omega=\pm1$. 
If $\omega=-1$, then $S_{AB}X=-XS_{AB}$. 
Since $X$ is unitary, it would map the symmetric subspace of $AB$ bijectively onto the antisymmetric subspace. 
This is impossible because their dimensions are
\begin{equation}
    (\dim\cH_R) \times \frac{d_A(d_A+1)}{2}
    \quad\text{and}\quad
    (\dim\cH_R) \times \frac{d_A(d_A-1)}{2},
\end{equation}
which are unequal. 
Therefore $\omega=1$.
\end{proof}

Together with Lemma~\ref{lem:statistics_imply_channel}, this removes the projective ambiguity and gives exact self-commutation for every admissible interaction. The next lemma converts that relation into a controlled form for each interaction separately.

\begin{lemma}
\label{lem:self_commuting_is_controlled}
For any bipartite unitary operator $V_{RA}$, if
\begin{equation}
    V_{RA}V_{RB}
    =
    V_{RB}V_{RA}
\end{equation}
for a second copy $B$ of $A$, then there is an orthogonal decomposition
\begin{equation}
    I_R=\sum_\lambda P_\lambda
\end{equation}
such that
\begin{equation}
    V_{RA}
    =
    \sum_\lambda P_\lambda\otimes V_\lambda^A,
    \label{eqn:controlled_form_single}
\end{equation}
where every $V_\lambda^A$ is unitary on $A$.
\end{lemma}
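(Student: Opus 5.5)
The plan is to expand $V_{RA}$ in a fixed operator basis on the target side, read the commutation hypothesis off coefficient by coefficient, and conclude that the $R$-side coefficients generate a commutative $*$-algebra. Its joint spectral projections then give the decomposition in Eq.~\eqref{eqn:controlled_form_single}.

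\emph{Expansion.} Fix a basis $\{Y_\alpha\}_\alpha$ of $\cL(\cH_A)$ consisting of Hermitian operators, for example the generalized Gell-Mann matrices together with $I_A$. Expand uniquely
\begin{equation*}
    V_{RA}=\sum_\alpha X_\alpha\otimes Y_\alpha ,
    \qquad X_\alpha\in\cL(\cH_R).
\end{equation*}
Because the second copy $B$ carries the same operators, $V_{RB}=\sum_\beta X_\beta\otimes Y_\beta^{(B)}$. Multiplying in both orders gives
\begin{equation*}
    V_{RA}V_{RB}=\sum_{\alpha,\beta}X_\alpha X_\beta\otimes Y_\alpha\otimes Y_\beta,
    \qquad
    V_{RB}V_{RA}=\sum_{\alpha,\beta}X_\beta X_\alpha\otimes Y_\alpha\otimes Y_\beta .
\end{equation*}
The products $Y_\alpha\otimes Y_\beta$ are linearly independent in $\cL(\cH_A\otimes\cH_B)$. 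So the hypothesis forces $[X_\alpha,X_\beta]=0$ for all $\alpha,\beta$.

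\emph{Normality.} This is the step I expect to be the main obstacle. Mutual commutation of the $X_\alpha$ does not by itself give an \emph{orthogonal} decomposition, since the $X_\alpha$ need not be normal. To get it, I use the inverse. Taking the inverse of both sides of the hypothesis gives $V_{RA}^\dag V_{RB}^\dag=V_{RB}^\dag V_{RA}^\dag$. Multiplying from the left and right by $V_{RB}$ then gives $V_{RB}V_{RA}^\dag=V_{RA}^\dag V_{RB}$.

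Since the basis is Hermitian, $V_{RA}^\dag=\sum_\alpha X_\alpha^\dag\otimes Y_\alpha$. The same coefficient comparison then yields $[X_\alpha^\dag,X_\beta]=0$ for all $\alpha,\beta$. Hence the set $\{X_\alpha,X_\alpha^\dag\}_\alpha$ generates a commutative $*$-subalgebra of $\cL(\cH_R)$. In particular every $X_\alpha$ is normal, and the $X_\alpha$ are simultaneously unitarily diagonalizable.

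\emph{Controlled form and unitarity.} Let $P_\lambda$ be the orthogonal projections onto the joint eigenspaces, with $\sum_\lambda P_\lambda=I_R$ and $X_\alpha=\sum_\lambda x_\alpha(\lambda)P_\lambda$. Substituting gives $V_{RA}=\sum_\lambda P_\lambda\otimes V^A_\lambda$ with $V^A_\lambda:=\sum_\alpha x_\alpha(\lambda)Y_\alpha$. Finally,
\begin{equation*}
    V_{RA}^\dag V_{RA}=\sum_\lambda P_\lambda\otimes V^{A\dag}_\lambda V^A_\lambda=I_R\otimes I_A .
\end{equation*}
Compressing this identity to each $P_\lambda$ (nonzero by construction) gives $V^{A\dag}_\lambda V^A_\lambda=I_A$. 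Thus each $V^A_\lambda$ is unitary on the finite-dimensional $\cH_A$, which proves Eq.~\eqref{eqn:controlled_form_single}.
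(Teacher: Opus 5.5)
Your proof is correct. It differs from the paper's at the step where the orthogonality of the decomposition is obtained.

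Both arguments begin the same way. You expand $V_{RA}$ over linearly independent $A$-side operators and read off $[X_a,X_b]=0$ by comparing coefficients.

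The paper then proceeds as follows:
\begin{itemize}
\item it simultaneously upper-triangularizes the commuting $X_a$ in an orthonormal basis of $R$, as Schur's theorem allows;
\item it notes that $V_{RA}$ is then block upper triangular with respect to that basis, and a block upper triangular unitary is block diagonal;
\item it uses linear independence of the $F_a$ once more to conclude that each $X_a$ is diagonal.
\end{itemize}

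You instead turn unitarity into a second commutation relation, $[V_{RB},V_{RA}^\dag]=0$. You choose a Hermitian operator basis so that $V_{RA}^\dag$ expands over the same $A$-side operators. The same coefficient comparison then gives $[X_\alpha^\dag,X_\beta]=0$. So the coefficients generate a commutative $*$-algebra, and the spectral theorem for commuting normal operators yields the orthogonal projections directly.

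The paper's route works with an arbitrary operator Schmidt expansion and needs no adjoint-closed basis. Its cost is the triangularization and the block-triangular-unitary fact. Your route needs the Hermitian basis. In return, it shows exactly where unitarity enters, namely $V^{-1}=V^\dag$, and why the resulting decomposition is orthogonal rather than merely a simultaneous triangularization. It also matches the commutative-algebra language the paper uses later, when it refines the sector projectors via minimal projections in the proof of Theorem~\ref{thm:discrete_normal_form}.
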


\begin{proof}
Let
\begin{equation}
    V_{RA}
    =
    \sum_a X_a^R\otimes F_a^A
    \label{eqn:schmidt_expansion_V}
\end{equation}
be an operator Schmidt expansion with $\{F_a^A\}_a$ linearly independent. 
Substituting Eq.~\eqref{eqn:schmidt_expansion_V} into the commutation relation gives
\begin{equation}
    \sum_{a,b}
    [X_a,X_b]\otimes F_a^A\otimes F_b^B
    =
    0 .
\end{equation}
Since $\{F_a^A\otimes F_b^B\}_{a,b}$ is linearly independent because of the linear independence of $\{F_a^A\}_a$,
\begin{equation}
    [X_a,X_b]=0
\end{equation}
for all $a,b$. 
Thus the matrices $\{X_a\}_a$ are simultaneously upper triangularizable with respect to some basis of $R$. In that basis, the unitary $V_{RA}$ is block upper triangular and hence block diagonal. The linear independence of $\{F_a^A\}_a$ then implies that every $X_a$ is diagonal. Grouping equal joint eigenvalues, we obtain
\begin{equation}
    V_{RA}
    =
    \sum_\lambda P_\lambda\otimes V_\lambda^A .
\end{equation}
The unitarity of $V_{RA}$ implies
\begin{equation}
    (V_\lambda^A)^\dag V_\lambda^A
    =
    V_\lambda^A(V_\lambda^A)^\dag
    =
    I_A
\end{equation}
for every $\lambda$.
\end{proof}

Thus each admissible interaction is individually controlled, but its reference decomposition may still depend on the interaction. The next lemma uses the pairwise channel commutativity from Lemma~\ref{lem:statistics_imply_channel} to align the decompositions of two interactions, allowing reference-only dynamics within their common blocks.

\begin{lemma}
\label{lem:common_block_form}
Let $V_{RA}$ and $W_{RB}$ be two admissible unitary interactions of a causally agnostic measurement scheme
satisfying Eq.~\eqref{eqn:channel_level_commutation}. Then there is a common orthogonal decomposition
\begin{equation}
    I_R
    =
    \sum_\lambda\Pi_\lambda
\end{equation}
such that
\begin{equation}
    V_{RA}
    =
    \sum_\lambda
    X_\lambda\otimes V_\lambda^A,
    \qquad
    W_{RB}
    =
    \sum_\lambda
    Y_\lambda\otimes W_\lambda^B,
    \label{eqn:common_block_form}
\end{equation}
where $V_\lambda^A$ and $W_\lambda^B$ are unitary, while
$X_\lambda$ and $Y_\lambda$ are unitary on the support of
$\Pi_\lambda$:
\begin{equation}
    X_\lambda^\dag X_\lambda
    =
    X_\lambda X_\lambda^\dag
    =
    Y_\lambda^\dag Y_\lambda
    =
    Y_\lambda Y_\lambda^\dag
    =
    \Pi_\lambda.
\end{equation}
Equivalently, there are unitaries $X_R^V$ and $Y_R^W$ that are block
diagonal with respect to $\{\Pi_\lambda\}_\lambda$ such that
\begin{align}
    V_{RA}
    &=
    \left(
        X_R^V\otimes I_A
    \right)
    \sum_\lambda
    \Pi_\lambda\otimes V_\lambda^A,
    \\
    W_{RB}
    &=
    \left(
        Y_R^W\otimes I_B
    \right)
    \sum_\lambda
    \Pi_\lambda\otimes W_\lambda^B.
\end{align}
\end{lemma}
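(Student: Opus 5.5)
The plan is to reduce the statement to a twisted commutation relation between the reference-side operator Schmidt coefficients of $V_{RA}$ and $W_{RB}$. From that relation, the common blocks $\Pi_\lambda$ will be read off as the minimal reference subspaces invariant under both families of coefficients. Inside each such block, $V_{RA}$ and $W_{RB}$ will be shown to be products up to phases, and those phases are absorbed into $X_\lambda$ and $Y_\lambda$.

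\textbf{Step 1: twisted commutation.} Admissible interactions may be reused at distinct copies, so Lemmas~\ref{lem:statistics_imply_channel} and~\ref{lem:no_projective_phase_same_copy} give exact self-commutation. Lemma~\ref{lem:self_commuting_is_controlled} then gives
\begin{equation}
V_{RA}=\sum_a X_a\otimes F_a^A,\qquad W_{RB}=\sum_b Y_b\otimes G_b^B .
\end{equation}
Here $\{F_a\}$ and $\{G_b\}$ are linearly independent, each $X_a$ lies in the span of the projections $P_\lambda$ of $V$, and each $Y_b$ lies in the span of the projections $Q_\mu$ of $W$. Hence $\{X_a\}$ is a commuting family of normal operators, and so is $\{Y_b\}$. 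Equation~\eqref{eqn:channel_level_commutation} gives $V_{RA}W_{RB}=\omega\, W_{RB}V_{RA}$ for a phase $\omega$. Expanding and using linear independence of $\{F_a\otimes G_b\}$ yields
\begin{equation}
X_aY_b=\omega\, Y_bX_a \qquad\text{for all } a,b .
\end{equation}

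\textbf{Step 2: the blocks.} Let $\cH_R=\bigoplus_e \mathcal K_e$ be the joint eigenspace decomposition of $\{X_a\}$, with $X_a|_{\mathcal K_e}=x_a(e)$. The relation $X_aY_b=\omega Y_bX_a$ says $Y_b$ maps $\mathcal K_e$ into the joint eigenspace with eigenvalues $\omega x_a(e)$. By normality, $Y_b^\dag$ shifts eigenvalues by $\bar\omega$ instead. The symmetric statement holds with the roles of $X$ and $Y$ exchanged.

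Define $\Pi_\lambda$ as the projections onto the minimal subspaces invariant under all $X_a$, $Y_b$ and their adjoints. These are the minimal central projections of the generated $*$-algebra. Such a subspace is spanned by the $X$-eigenspaces connected through repeated application of $Y_b$ and $Y_b^\dag$. Any two of these eigenspaces therefore have eigenvalue vectors related by $x(e')=\omega^{m}x(e)$ for some integer $m$.

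\textbf{Step 3: factorization within a block.} Fix a reference eigenspace $e_0$ in block $\lambda$ and set $V^A_\lambda:=\sum_a x_a(e_0)F_a$, which is unitary because it is the restriction of $V_{RA}$. On an eigenspace $e$ with $x(e)=\omega^m x(e_0)$, the restriction of $V_{RA}$ is $\omega^m V^A_\lambda$. Hence
\begin{equation}
\Pi_\lambda V_{RA}=X_\lambda\otimes V^A_\lambda,\qquad X_\lambda=\sum_{e\subset\lambda}\omega^{m(e)}\dyad{\cdot}_{\mathcal K_e},
\end{equation}
and $X_\lambda$ is unitary on $\Pi_\lambda$. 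The same argument with the $Y$-eigenspaces, which $X_a$ shifts by $\bar\omega$, gives $\Pi_\lambda W_{RB}=Y_\lambda\otimes W^B_\lambda$. Both operators commute with each $\Pi_\lambda$ by invariance, so summing over $\lambda$ gives Eq.~\eqref{eqn:common_block_form}. The factorized form with $X^V_R=\sum_\lambda X_\lambda$ follows immediately.

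\textbf{Main obstacle.} I expect the connectivity argument in Step 2 to be the delicate part. One must check that a minimal invariant subspace for the $*$-algebra generated by both families really is a single orbit, so that all its $X$-eigenspaces differ only by powers of $\omega$ and the corresponding $V$-restrictions are proportional rather than merely distinct. The same must be checked for $Y$ in the same block. If $Y_b$ acts non-injectively between eigenspaces, I would pass to the partial isometries in the polar decomposition of $Y_b$ restricted to eigenspace pairs. These intertwine $X$ with the same $\omega$-twist, so the orbit description survives.
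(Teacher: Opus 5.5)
Your argument is sound in outline, but it takes a different route from the paper. The paper never forms an algebra. It keeps the controlled forms $V_{RA}=\sum_iP_i\otimes V_i^A$ and $W_{RB}=\sum_jQ_j\otimes W_j^B$ and coarsens them into proportionality classes $P_\alpha$, $Q_\beta$, absorbing the relative phases into $X_\alpha=\sum_{i\in\alpha}v_iP_i$. From the block identity $(V_i^A-\omega V_k^A)\otimes P_iW_{RB}P_k=0$ it reads off $[W_{RB},P_\alpha]=[V_{RA},Q_\beta]=0$. A second block identity, untwisted because $[V_{RA},Q_\beta]=0$ is an exact commutator, gives $[P_\alpha,Q_\beta]=0$, and the blocks are simply $\Pi_{\alpha\beta}=P_\alpha Q_\beta$.

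Your relation $X_aY_b=\omega Y_bX_a$ carries the same information as that block identity, written in Schmidt-coefficient form. The real difference is that you then decompose into minimal invariant subspaces of the $*$-algebra generated by $\{X_a,Y_b\}$ and track $\omega$-orbits. This buys finer blocks and the extra fact that the intra-block reference phases are powers of $\omega$. The paper's version buys canonical blocks built from classes intrinsic to each interaction, independent of the partner and of $\omega$. That is why the pairwise relation $[P^V_\alpha,P^W_\beta]=0$ can be reused verbatim in the proof of Theorem~\ref{thm:discrete_normal_form} to refine over the whole admissible family. Your blocks depend on the pair and are generally non-unique, so they do not feed into that step directly.

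There are two loose ends. First, minimal invariant subspaces are not the minimal central projections: they are ranges of minimal projections of the commutant, and they are non-unique when there is multiplicity. Simply fix an orthogonal decomposition into them, which exists because the algebra is closed under adjoints. Then write $X_\lambda=\sum_e\omega^{m(e)}\Pi_\lambda K_e$, with $K_e$ the spectral projections of the $X_a$, since a block need not contain whole eigenspaces.

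Second, the obstacle you flag closes in one line, without polar decompositions or path connectivity. $\Pi_\lambda$ commutes with every $K_e$. For each $\omega$-orbit class $C$ of eigenvalue vectors, the subspace $\bigoplus_{e\in C}\Pi_\lambda\mathcal K_e$ is invariant under $X_a$, $X_a^\dag$, $Y_b$, $Y_b^\dag$ by your shift rules, so minimality forces a single class. The same argument with the $Y$-eigenspaces and $\bar\omega$ gives the factorization of $W_{RB}$ on the same block. In fact minimality is unnecessary. The projectors onto $\omega$-orbit classes of $X$-eigenspaces commute with those onto $\bar\omega$-orbit classes of $Y$-eigenspaces, and their products already serve as the $\Pi_\lambda$. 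This is the paper's construction with orbits in place of proportionality classes.
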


\begin{proof}
By Lemma~\ref{lem:self_commuting_is_controlled}, there exist projectors $\qty{P_i}$ and $\qty{Q_j}$ on $R$ that allow the following expressions,
\begin{equation}
    V_{RA}
    =
    \sum_iP_i\otimes V_i^A,
    \qquad
    W_{RB}
    =
    \sum_jQ_j\otimes W_j^B.
    \label{eqn:individual_controlled_forms}
\end{equation}
Group the indices of $V$ according to the following equivalence relation $\sim_V$,
\begin{equation}
    i\sim_V k
    \quad\Longleftrightarrow\quad
    V_i^A\propto V_k^A.
\end{equation}
Note that two unitaries $V_i$ and $V_k$ with $i\sim_V k$ can only differ by a phase factor. 
For each equivalence class $\alpha$ induced by $\sim_V$, choose a unitary
$\widehat V_\alpha^A$ and phases $v_i$ such that
\begin{equation}
    V_i^A
    =
    v_i\widehat V_\alpha^A
\end{equation}
for every $i\in\alpha$. Define
\begin{equation}
    P_\alpha
    :=
    \sum_{i\in\alpha}P_i,
    \qquad
    X_\alpha
    :=
    \sum_{i\in\alpha}v_iP_i.
\end{equation}
Then, since $|v_i|^2=1$ for every $i$,
\begin{equation}
    V_{RA}
    =
    \sum_\alpha
    X_\alpha\otimes\widehat V_\alpha^A,
    \qquad
    X_\alpha^\dag X_\alpha
    =
    X_\alpha X_\alpha^\dag
    =
    P_\alpha.
    \label{eqn:phase_reduced_V}
\end{equation}
Define $Q_\beta$, $Y_\beta$, and $\widehat W_\beta^B$ analogously for $W$ with similarly defined equivalence classes $\beta$,
so that
\begin{equation}
    W_{RB}
    =
    \sum_\beta
    Y_\beta\otimes\widehat W_\beta^B.
    \label{eqn:phase_reduced_W}
\end{equation}

Equation~\eqref{eqn:channel_level_commutation} implies
\begin{equation}
    V_{RA}W_{RB}
    -
    \omega
    W_{RB}V_{RA}
    =
    0
    \label{eqn:pair_projective_commutation}
\end{equation}
for some phase $\omega$. Taking the $(i,k)$ block with respect to
$\{P_i\}_i$ gives
\begin{equation}
    \left(
        V_i^A-\omega V_k^A
    \right)
    \otimes
    P_iW_{RB}P_k
    =
    0.
    \label{eqn:V_class_block}
\end{equation}
Thus a nonzero block $P_iW_{RB}P_k$ is possible only when
$i\sim_V k$. It follows that
\begin{equation}
    [W_{RB},P_\alpha]=0
\end{equation}
for every equivalence class $\alpha$. The same argument with $V$ and $W$ exchanged
shows that
\begin{equation}
    [V_{RA},Q_\beta]=0
\end{equation}
for every $\beta$.

Taking the $(i,k)$ block of the latter equality gives
\begin{equation}
    \left(
        V_i^A-V_k^A
    \right)
    \otimes
    P_iQ_\beta P_k
    =
    0.
    \label{eqn:class_projector_block}
\end{equation}
If $i$ and $k$ belong to distinct equivalence classes, their
event-side unitaries are not proportional and hence are not equal.
Therefore
\begin{equation}
    P_\alpha Q_\beta P_{\alpha'}
    =
    0
\end{equation}
whenever $\alpha\neq\alpha'$. Consequently, by adding all the $\alpha'$ except $\alpha$, one can show that
\begin{equation}
    [P_\alpha,Q_\beta]=0
\end{equation}
for every $\alpha,\beta$.

Now we obtain finer projectors through the following procedure. Discarding zero products, define
\begin{equation}
    \Pi_{\alpha\beta}
    :=
    P_\alpha Q_\beta.
\end{equation}
These are mutually orthogonal projectors and satisfy
\begin{equation}
    I_R
    =
    \sum_{\alpha,\beta}
    \Pi_{\alpha\beta}.
\end{equation}
Moreover, Eqs.~\eqref{eqn:phase_reduced_V} and
\eqref{eqn:phase_reduced_W}, together with
$[V_{RA},Q_\beta]=[W_{RB},P_\alpha]=0$, imply
\begin{equation}
    [X_\alpha,Q_\beta]=0,
    \qquad
    [Y_\beta,P_\alpha]=0.
\end{equation}
Indeed, $[V_{RA},Q_\beta]=\sum_\alpha[X_\alpha,Q_\beta]\otimes\widehat V_\alpha^A$, and since $X_\alpha$ and $Q_\beta$ both commute with $P_\alpha$, the term with index $\alpha$ is supported on the block $P_\alpha$ of $R$; the terms with distinct $\alpha$ are therefore supported on mutually orthogonal blocks and cannot cancel, so each vanishes separately (no linear independence of the $\widehat V_\alpha^A$ is needed).
Hence
\begin{equation}
    X_{\alpha\beta}
    :=
    \Pi_{\alpha\beta}X_\alpha,
    \qquad
    Y_{\alpha\beta}
    :=
    \Pi_{\alpha\beta}Y_\beta
\end{equation}
are unitary on the support of $\Pi_{\alpha\beta}$. We finally obtain
\begin{align}
    V_{RA}
    &=
    \sum_{\alpha,\beta}
    X_{\alpha\beta}\otimes\widehat V_\alpha^A,
    \\
    W_{RB}
    &=
    \sum_{\alpha,\beta}
    Y_{\alpha\beta}\otimes\widehat W_\beta^B.
\end{align}
Relabeling the nonzero pairs $(\alpha,\beta)$ by $\lambda$ proves
Eq.~\eqref{eqn:common_block_form}.
\end{proof}

Lemma~\ref{lem:common_block_form} supplies a common refinement for each pair and the proof of Theorem~\ref{thm:discrete_normal_form} extends the same block argument to the entire admissible family. It then collects the residual reference-only factors to obtain the multi-arm interferometric form underlying Proposition~\ref{prop:causal_agnos_charac} in the main text.
\begin{theorem}
\label{thm:discrete_normal_form}
Under causal agnosticity as defined above, there is a common orthogonal decomposition of the reference system,
\begin{equation}
    I_R
    =
    \sum_\eta \Pi_\eta,
\end{equation}
such that every admissible interaction $V_{RA}$ is controlled by this decomposition up to a reference-only unitary. More precisely,
\begin{equation}
    \begin{aligned}
        \widetilde V_{RA}
        &:=
        \sum_\eta \Pi_\eta\otimes V_\eta^A,
        \\
        V_{RA}
        &=
        \left(X_R^V\otimes I_A\right)\widetilde V_{RA}
        =
        \widetilde V_{RA}\left(X_R^V\otimes I_A\right),
    \end{aligned}
    \label{eqn:common_controlled_form}
\end{equation}
where $V_\eta^A$ and $X_R^V$ are unitary and
$[X_R^V,\Pi_\eta]=0$ for every $\eta$.
The reference-only factors in any chosen sequence of admissible interactions can be collected before or after the controlled interactions and absorbed into the preparation or readout of $R$. Hence the resulting measurement scheme is operationally a multi-arm interferometer with arms $\{\Pi_\eta\}_\eta$.
\end{theorem}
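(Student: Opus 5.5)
We extend the pairwise refinement of Lemma~\ref{lem:common_block_form} to the whole admissible family. By Lemma~\ref{lem:statistics_imply_channel}, any two admissible interactions, including two copies of the same one, commute as channels. By Lemma~\ref{lem:no_projective_phase_same_copy} each admissible $V_{RA}$ then commutes exactly with its own copy, and Lemma~\ref{lem:self_commuting_is_controlled} gives it a controlled form. As in the proof of Lemma~\ref{lem:common_block_form}, merging indices whose event-side unitaries agree up to phase gives each $V$ a canonical class decomposition $\{P^V_\alpha\}_\alpha$ and a block-diagonal phase unitary $X^V=\sum_\alpha X_\alpha$. The proof of Lemma~\ref{lem:common_block_form} shows two things for any pair $V,W$: $[P^V_\alpha,P^W_\beta]=0$, and $X^V$ commutes with every $P^W_\beta$ (and symmetrically).

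Next we form the common decomposition. Since $R$ is finite dimensional, the projectors $P^V_\alpha$, as $V$ ranges over the admissible family, form a commuting family of projectors. The abelian $*$-algebra they generate is finite dimensional, so it has finitely many minimal projectors $\Pi_\eta$ with $\sum_\eta\Pi_\eta=I_R$. Each $P^V_\alpha$ is a sum of some of the $\Pi_\eta$. Concretely, we pick finitely many interactions whose class projectors already generate this algebra, using monotonicity of dimension, and then take all nonzero products $\prod_j P^{V_j}_{\alpha_j}$.

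For each $V$ we set $V^A_\eta:=\widehat V^A_\alpha$ for the unique $\alpha$ with $\Pi_\eta\le P^V_\alpha$. Then $\widetilde V_{RA}=\sum_\eta\Pi_\eta\otimes V_\eta^A=\sum_\alpha P^V_\alpha\otimes\widehat V^A_\alpha$ and $V_{RA}=\sum_\alpha X_\alpha\otimes\widehat V_\alpha^A=(X^V\otimes I)\widetilde V_{RA}$, because $X_\alpha=X^VP^V_\alpha$. The key remaining claim is $[X^V,\Pi_\eta]=0$. This holds because $X^V$ commutes with its own $P^V_\alpha$ and, by the pairwise step, with every $P^W_\beta$, hence with the whole generated algebra. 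The same commutation lets $X^V\otimes I$ pass through $\widetilde V_{RA}$, which gives the second equality in Eq.~\eqref{eqn:common_controlled_form}.

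Finally we collect the reference-only factors. In a sequence $V^{(1)}_{RA_1},\dots,V^{(N)}_{RA_N}$ we must move every $X^{V^{(j)}}$ to the front or back. Each $X^{V^{(j)}}$ commutes with every $\Pi_\eta$, but it need not commute with the other $X^{V^{(i)}}$. What matters is that it commutes with every $\widetilde V^{(i)}$, since the latter is $\Pi$-block diagonal with scalar blocks on $R$. The product therefore equals $\big(\prod_j X^{V^{(j)}}\big)\prod_i\widetilde V^{(i)}$, with the $X$ factors kept in their original relative order, and this order-independent combined unitary on $R$ is absorbed into the readout. What remains is $\prod_i\widetilde V^{(i)}=\sum_\eta\Pi_\eta\otimes\bigotimes_i V^{(i)A_i}_\eta$, a multi-arm interferometer with arms $\{\Pi_\eta\}$. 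I expect the main obstacle to be the first half of the previous paragraph, specifically showing that the class decompositions are intrinsic to each $V$ and pairwise commuting, so that a single global refinement exists even for an infinite admissible family. The finite-dimensionality argument on the generated commutative algebra is the step I would check most carefully.
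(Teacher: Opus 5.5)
Your construction of the common decomposition follows the paper's proof. It uses per-interaction phase-reduced class decompositions and the pairwise relations $[P^V_\alpha,P^W_\beta]=[X^V,P^W_\beta]=0$ from the block argument of Lemma~\ref{lem:common_block_form}. It then takes the minimal projections $\Pi_\eta$ of the finite-dimensional commutative algebra generated by all class projectors, and obtains $[X^V,\Pi_\eta]=0$ because $X^V$ commutes with a generating set. The step you single out as delicate is not a problem. No intrinsic choice of class decomposition is needed, since the block argument applies to whatever decomposition was fixed for each $V$, and your finite-dimensionality argument is fine.

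The gap is in your last paragraph. You correctly note that the $X^{V^{(j)}}$ need not commute and must stay in encounter order. For an order $\pi$ the collected factor is therefore $X_\pi:=X_{\pi(N)}\cdots X_{\pi(1)}$, which a priori depends on $\pi$, yet you call it ``order-independent'' without argument. This property is exactly what allows absorption into a \emph{fixed} preparation or readout, since the encounter order is unknown, and nothing you proved about the individual interactions implies it. The paper derives it from causal agnosticity of the whole sequence. Set $\widetilde{\mathbf V}:=\prod_i\widetilde V^{(i)}$, which is unitary and order-independent. Lemma~\ref{lem:statistics_imply_channel} gives $\Ad_{X_\pi\otimes I}\Ad_{\widetilde{\mathbf V}}=\Ad_{X_{\pi'}\otimes I}\Ad_{\widetilde{\mathbf V}}$, and cancelling the invertible channel $\Ad_{\widetilde{\mathbf V}}$ yields $\Ad_{X_\pi}=\Ad_{X_{\pi'}}$, Eq.~\eqref{eqn:reference_channel_order_independence}.

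Only the channel is order-independent, not the unitary. Take $V_{RA}=\sigma_z\otimes U_A$ and $W_{RB}=\sigma_x\otimes U'_B$. They are self-commuting and commute as channels, so all ordered products agree up to sign. Here $\Pi=I_R$, and one can take $X^V=\sigma_z$ and $X^W=\sigma_x$, which satisfy $X^VX^W=-X^WX^V$. Your ``order-independent combined unitary'' therefore holds only up to a global phase. That is harmless once the channel identity above is inserted.
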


\begin{proof}
For each admissible interaction $V_{RA}$, choose the phase-reduced decomposition used in the proof of Lemma~\ref{lem:common_block_form}:
\begin{equation}
    V_{RA}
    =
    \left(X_R^V\otimes I_A\right)
    \sum_\alpha P_\alpha^V\otimes \widehat V_\alpha^A,
    \label{eqn:phase_reduced_family_form}
\end{equation}
where the event-side unitaries $\widehat V_\alpha^A$ belonging to distinct sectors are not proportional, $X_R^V$ is unitary, and $[X_R^V,P_\alpha^V]=0$ for every $\alpha$. For any two admissible interactions $V$ and $W$, the block argument in Lemma~\ref{lem:common_block_form} gives
\begin{equation}
    [P_\alpha^V,P_\beta^W]
    =
    [X_R^V,P_\beta^W]
    =
    0
    \label{eqn:family_sector_commutation}
\end{equation}
for all $\alpha$ and $\beta$.
Thus all reduced sector projectors generate a finite-dimensional commutative algebra on $R$. Let $\{\Pi_\eta\}_\eta$ be the minimal projections of this algebra. They refine every $\{P_\alpha^V\}_\alpha$, and Eq.~\eqref{eqn:family_sector_commutation} implies
\begin{equation}
    [X_R^V,\Pi_\eta]=0
\end{equation}
for every admissible $V$ and every $\eta$. Refining Eq.~\eqref{eqn:phase_reduced_family_form} by $\{\Pi_\eta\}_\eta$ gives Eq.~\eqref{eqn:common_controlled_form}. The commutation relation in that equation also shows that the reference-only factor may be placed on either side of the controlled interaction.

It remains to verify that these factors do not change the interferometric implementation. For a chosen collection of interactions at distinct event copies, write
$V_k=(X_k\otimes I_{A_k})\widetilde V_k$ as in Eq.~\eqref{eqn:common_controlled_form}, with identity factors on the remaining events suppressed. Since every $X_k$ commutes with all $\Pi_\eta$, it commutes with every $\widetilde V_l$. The controlled interactions also commute with one another because they use the same projectors on $R$ and act on distinct event systems. Therefore, for any encounter order $\pi$,
\begin{equation}
    V_{\pi(N)}\cdots V_{\pi(1)}
    =
    X_{\pi(N)}\cdots X_{\pi(1)}
    \widetilde V_N\cdots\widetilde V_1.
    \label{eqn:collect_reference_dynamics}
\end{equation}
The second factor is the order-independent multi-arm controlled interaction. Causal agnosticity of the full channel, together with the invertibility of this controlled factor, then implies
\begin{equation}
    \Ad_{X_{\pi(N)}\cdots X_{\pi(1)}}
    =
    \Ad_{X_{\pi'(N)}\cdots X_{\pi'(1)}}
    \label{eqn:reference_channel_order_independence}
\end{equation}
for all $\pi$ and $\pi'$. Hence the accumulated reference dynamics is independent of the encounter order as a channel. It can be absorbed into the initial reference state or, equivalently, into the final POVM by conjugating its effects. After this redefinition, only the controlled factors $\widetilde V_k$ remain, yielding precisely a multi-arm interferometric implementation.

\end{proof}

\section{Proof of Theorem~\ref{thm:F_upper_bound} and Corollary~\ref{coro:Heisenberg_limit}} \label{sec:ThmCoroProof}

We first prove the global bound using a pointwise form of Bernstein's inequality, then derive the spacetime Heisenberg limit and its saturation conditions from the dressed marginals. A phase-rate bound $\abs{s'}\leq\Lambda\sqrt{1-s^2}$ bounds the rate of change of $\arccos s$ by $\Lambda$ wherever $\abs{s}<1$; for the fringe $s=S_\theta(\Omega)$, the binary Fisher information is the square of this rate.

\begin{lemma}[Phase-rate form of Bernstein's inequality]\label{Slem:Bernstein_endpoint}
Let $f(t)=\sum_{j=1}^m c_j e^{i\omega_jt}$ be real-valued for real $t$, with $c_j\in\bC$ and $\omega_j\in[-\Lambda,\Lambda]$, where $\Lambda\geq0$. If $\abs{f(t)}\leq1$ for all $t\in\bR$, then
\begin{equation}
    \abs{f'(t)}\leq\Lambda\sqrt{1-f(t)^2}
\end{equation}
for all $t\in\bR$.
\end{lemma}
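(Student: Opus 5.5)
The plan is to deduce the pointwise phase-rate bound from the classical (uniform) Bernstein inequality applied to a suitably phase-shifted family of trigonometric polynomials. Fix $t_0\in\bR$ and write $f(t_0)=\cos\alpha$ with $\alpha\in[0,\pi]$; this is possible since $\abs{f}\leq1$. The target inequality reads $\abs{f'(t_0)}\leq\Lambda\sin\alpha$, which is a statement about the rate at which the phase $\arccos f$ can move.

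First I dispose of the trivial cases. If $\Lambda=0$, then $f$ is constant and $f'=0$. If $\abs{f(t_0)}=1$, then $t_0$ is an interior extremum of the smooth function $f$ on $\bR$, so $f'(t_0)=0$ and the inequality holds.

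For the main step, I would use the standard trick behind the Szeg\H{o}/Boas form of Bernstein's inequality. Since $f$ is real, the spectrum may be symmetrized, with $\omega_j\mapsto-\omega_j$ and $c_j\mapsto\bar c_j$. I then consider the conjugate function $\tilde f(t)=\sum_j(-i\,\mathrm{sgn}\,\omega_j)c_je^{i\omega_jt}$. The Bernstein--Szeg\H{o} inequality $f'(t)^2+\Lambda^2 f(t)^2\leq\Lambda^2\norm{f}_\infty^2$ holds for real trigonometric polynomials, and more generally for real exponential sums whose frequencies lie in $[-\Lambda,\Lambda]$, by the Boas extension to entire functions of exponential type $\Lambda$ bounded on the real line. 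With $\norm{f}_\infty\leq1$ it gives $f'(t_0)^2\leq\Lambda^2(1-f(t_0)^2)$ directly, which is exactly the claim.

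To keep the argument self-contained, I would instead derive that inequality from the plain Bernstein bound $\norm{g'}_\infty\leq\Lambda\norm{g}_\infty$ as follows. Suppose for contradiction that $\abs{f'(t_0)}>\Lambda\sin\alpha$. Compare $f$ with the extremal $\cos(\Lambda(t-t_0)\pm\alpha')$, where $\alpha'$ is chosen so that the two functions agree at $t_0$ and $f$ has the larger slope. Then apply a zero-counting argument. The difference $g=\cos(\Lambda t+\beta)-\lambda f$, with $\lambda<1$ close to $1$, has enough sign changes per period $2\pi/\Lambda$ to contradict the fact that an exponential type-$\Lambda$ function bounded on $\bR$ has at most the density of zeros allowed by $\cos(\Lambda t)$. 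This is the Duffin--Schaeffer-style comparison.

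I expect this comparison/zero-counting step to be the main obstacle, because $f$ is only almost periodic rather than periodic when the $\omega_j$ are incommensurate. I would handle that by working with the entire extension $f(z)$ of exponential type $\Lambda$ and invoking Boas's theorem directly, or by approximating the $\omega_j$ by rationals so that $f$ becomes periodic, applying the trigonometric-polynomial version, and passing to the limit pointwise.
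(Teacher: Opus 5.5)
Your proposal is correct and, stripped of its detours, follows the paper's argument: extend $f$ to a real entire function of exponential type at most $\Lambda$, bounded by $1$ on $\bR$, and invoke the pointwise Bernstein--Szeg\H{o} inequality $f'(t)^2+\Lambda^2f(t)^2\leq\Lambda^2$, where the paper cites Redheffer's version (type strictly below $\alpha$ and $\abs{g}<1$, hence its $\delta,\epsilon\to0^+$ limit) and you cite the non-strict Duffin--Schaeffer/Boas form directly. The self-contained zero-counting route is unnecessary and would not close as phrased: $\cos(\Lambda t+\beta)-\lambda f$ already has a zero in every half-period, i.e.\ the maximal zero density, so an extra zero near $t_0$ contradicts no density bound, and you would instead need an exact count, available per period only in the periodic case (reaching it by rational approximation of the $\omega_j$ needs a Kronecker-type argument to preserve $\abs{f}\leq1$) or from the complex-analytic comparison with $\cos(\Lambda z+\beta)$ that underlies the entire-function theorem.
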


\begin{proof}
The case $\Lambda=0$ is immediate. Otherwise, the entire extension satisfies $\abs{f(x+iy)}\leq(\sum_j\abs{c_j})e^{\Lambda\abs{y}}$, so its exponential type is at most $\Lambda$. The pointwise Bernstein inequality of Ref.~\cite{redheffer1953plancherel} states that a real entire function $g$ of exponential type strictly less than $\alpha$, with $\abs{g(t)}<1$ on the real axis, obeys $g(t)^2+g'(t)^2/\alpha^2\leq1$. Apply it to $g=f/(1+\delta)$ and $\alpha=\Lambda+\epsilon$, with $\delta,\epsilon>0$, to obtain
\begin{equation}
    f(t)^2+\frac{f'(t)^2}{(\Lambda+\epsilon)^2}\leq(1+\delta)^2.
\end{equation}
Letting $\epsilon,\delta\to0^+$ proves the claim, including the endpoints $f(t)=\pm1$.
\end{proof}

\begin{theorem}[Phase-rate bound for spacetime states]\label{Sthm:F_upper_bound}
Let $\Omega\in\cL(\bigotimes_{k=1}^N\cH_k)$ be a spacetime state, so that $\abs{\Tr[(\bigotimes_k V_k)\Omega]}\leq1$ for every collection of local unitaries. Let $\bU_\theta=\bigotimes_k U_{k,\theta}$ be differentiable, with $\partial_\theta U_{k,\theta}=-ih_{k,\theta}U_{k,\theta}$ and $h_{k,\theta}=h_{k,\theta}^\dag$, and set $S_\theta(\Omega)=\Re\Tr[\bU_\theta\Omega]$. Then
\begin{equation}
    \abs{\partial_\theta S_\theta(\Omega)}
    \leq
    \norm{\sum_k h_{k,\theta}^{[k]}}_\infty
    \sqrt{1-S_\theta(\Omega)^2},
\end{equation}
where $h_{k,\theta}^{[k]}=I_1\otimes\cdots\otimes I_{k-1}\otimes h_{k,\theta}\otimes I_{k+1}\otimes\cdots\otimes I_N$. Consequently, the binary probabilities $P_\theta(\pm)=[1\pm S_\theta(\Omega)]/2$ have Fisher information
\begin{equation}
    F_\theta(\Omega)
    =\frac{(\partial_\theta S_\theta(\Omega))^2}{1-S_\theta(\Omega)^2}
    \leq\norm{\sum_k h_{k,\theta}^{[k]}}_\infty^2
\end{equation}
whenever $\abs{S_\theta(\Omega)}<1$.
\end{theorem}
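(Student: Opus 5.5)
The plan is to reduce the statement to Lemma~\ref{Slem:Bernstein_endpoint} by freezing the operating point and introducing an auxiliary real variable $t$. Fix $\theta_0$ and set $H:=\sum_k h_{k,\theta_0}^{[k]}$ and $\Lambda:=\norm{H}_\infty$. Each local generator is Hermitian, so write the spectral decomposition $h_{k,\theta_0}=\sum_a \lambda_{k,a}\Pi_{k,a}$. Then define
\begin{equation}
    f(t):=\Re\Tr\qty[\qty(\bigotimes_k e^{-ith_{k,\theta_0}}U_{k,\theta_0})\Omega].
\end{equation}
For every real $t$, the operator $\bigotimes_k e^{-ith_{k,\theta_0}}U_{k,\theta_0}$ is a product of local unitaries. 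Definition~\ref{def:spacetime_states} therefore gives $\abs{f(t)}\le 1$ for all $t$.

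Next I expand $f$ in the product spectral projectors:
\begin{equation}
    \Tr\qty[\qty(\bigotimes_k e^{-ith_{k,\theta_0}})\bU_{\theta_0}\Omega]=\sum_{\boldsymbol a}e^{-it\sum_k\lambda_{k,a_k}}\Tr\qty[\qty(\bigotimes_k\Pi_{k,a_k})\bU_{\theta_0}\Omega].
\end{equation}
This is a finite exponential sum. Its frequencies $\omega_{\boldsymbol a}=-\sum_k\lambda_{k,a_k}$ are exactly the eigenvalues of $-H$, because $H$ is a sum of commuting terms acting on distinct tensor factors. Hence $\abs{\omega_{\boldsymbol a}}\le\Lambda$.

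Taking the real part adds the complex-conjugate terms, whose frequencies are $-\omega_{\boldsymbol a}$. These still lie in $[-\Lambda,\Lambda]$, and $f$ is real-valued, so every hypothesis of Lemma~\ref{Slem:Bernstein_endpoint} holds. Applying it at $t=0$ gives
\begin{equation}
    \abs{f'(0)}\le\Lambda\sqrt{1-f(0)^2}.
\end{equation}

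The remaining step is to identify $f(0)$ and $f'(0)$ with $S_{\theta_0}(\Omega)$ and $\partial_\theta S_\theta(\Omega)|_{\theta_0}$. The first is immediate: $f(0)=S_{\theta_0}(\Omega)$.

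For the derivative, I use $\partial_\theta U_{k,\theta}=-ih_{k,\theta}U_{k,\theta}$ and the product rule, which give
\begin{equation}
    \partial_\theta\bU_\theta|_{\theta_0}=-iH\bU_{\theta_0}.
\end{equation}
Differentiating the definition of $f$ at $t=0$ gives the same operator, $\frac{d}{dt}\bigl(\bigotimes_k e^{-ith_{k,\theta_0}}\bigr)\big|_{t=0}\bU_{\theta_0}=-iH\bU_{\theta_0}$. Both derivatives therefore equal $\Re\Tr[-iH\bU_{\theta_0}\Omega]$. This matching of first-order data is the crucial point: the $\theta$-dependence of $h_{k,\theta}$ only enters at second order, so the frozen-generator curve $f$ is tangent to $\theta\mapsto S_\theta$ at $\theta_0$.

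This yields the phase-rate bound at $\theta_0$, and since $\theta_0$ was arbitrary it holds for every $\theta$. Squaring it and dividing by $1-S_\theta^2>0$ gives the Fisher-information bound via Eq.~\eqref{eqn:binary_FI}.

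I expect the main obstacle to be making sure the Bernstein lemma is applied with the right type $\Lambda$. A naive bound would use $\sum_k\norm{h_k}_\infty$ instead of $\norm{H}_\infty$. The product spectral expansion resolves this, because the frequencies are exactly the joint eigenvalues of $H$. One also has to check that taking the real part does not enlarge the frequency band, which holds because the band $[-\Lambda,\Lambda]$ is symmetric. The endpoint cases $\abs{f}=1$ are already handled inside Lemma~\ref{Slem:Bernstein_endpoint}.
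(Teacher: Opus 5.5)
Your proposal is correct and follows the paper's proof essentially step for step. Like the paper, you freeze the generators at the operating point and use $e^{-itG_\theta}\bU_\theta=\bigotimes_k(e^{-ith_{k,\theta}}U_{k,\theta})$ to get $\abs{f}\leq1$ from the spacetime-state condition. You then note that the frequencies are $\pm$ the eigenvalues of $G_\theta=\sum_k h_{k,\theta}^{[k]}$, match $f(0)$ and $f'(0)$ with $S_\theta$ and $\partial_\theta S_\theta$, and apply the Bernstein phase-rate lemma at $t=0$. The only cosmetic difference is that you read off the frequency band from an explicit product spectral expansion, whereas the paper reads it directly from the spectrum of $G_\theta$.
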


\begin{proof}
Fix $\theta$, set $G_\theta=\sum_k h_{k,\theta}^{[k]}$ and $\Lambda_\theta=\norm{G_\theta}_\infty$, and define
\begin{equation}
    f_\theta(t)=\Re\Tr[e^{-itG_\theta}\bU_\theta\Omega].
\end{equation}
Since the lifted generators commute,
\begin{equation}
    e^{-itG_\theta}\bU_\theta
    =\bigotimes_k(e^{-ith_{k,\theta}}U_{k,\theta}),
\end{equation}
so the definition of a spacetime state gives $\abs{f_\theta(t)}\leq1$ for all real $t$. This real exponential polynomial has frequencies $\pm\lambda$, with $\lambda$ an eigenvalue of $G_\theta$, all lying in $[-\Lambda_\theta,\Lambda_\theta]$. Moreover, $\partial_\theta\bU_\theta=-iG_\theta\bU_\theta$ implies
\begin{equation}
    f_\theta(0)=S_\theta(\Omega),
    \qquad
    f_\theta'(0)=\Re\Tr[-iG_\theta\bU_\theta\Omega]=\partial_\theta S_\theta(\Omega).
\end{equation}
Lemma~\ref{Slem:Bernstein_endpoint} at $t=0$ therefore proves the phase-rate bound, including $\Lambda_\theta=0$. Dividing its square by $1-S_\theta(\Omega)^2$ gives the Fisher information bound in the interior.
\end{proof}

The same Bernstein inequality gives the trace norm estimate needed to identify saturation of the spacetime Heisenberg limit.

\begin{lemma}[Trace norm phase-rate bound]\label{Slem:Trace_norm_endpoint}
Let $\Omega\in\cL(\cH)$ be arbitrary and let $\bU_\theta$ be any differentiable family of unitaries on $\cH$, not necessarily of product form, with Hermitian generator $G_\theta$ defined by $\partial_\theta\bU_\theta=-iG_\theta\bU_\theta$. For $S_\theta(\Omega)=\Re\Tr[\bU_\theta\Omega]$,
\begin{equation}
    \abs{\partial_\theta S_\theta(\Omega)}
    \leq\norm{G_\theta}_\infty\sqrt{\norm{\Omega}_1^2-S_\theta(\Omega)^2}.
\end{equation}
\end{lemma}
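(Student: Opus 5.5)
The plan is to reduce the statement to Lemma~\ref{Slem:Bernstein_endpoint}, in the same way Theorem~\ref{Sthm:F_upper_bound} was obtained, with the unit bound replaced by $\norm{\Omega}_1$. If $\Omega=0$ both sides vanish, so assume $\norm{\Omega}_1>0$. Fix $\theta$ and define
\begin{equation}
    f(t)=\frac{1}{\norm{\Omega}_1}\Re\Tr\!\left[e^{-itG_\theta}\bU_\theta\Omega\right].
\end{equation}
The first step is to check that $\abs{f(t)}\leq1$ for all real $t$. This follows from H\"older's inequality $\abs{\Tr[X\Omega]}\leq\norm{X}_\infty\norm{\Omega}_1$ applied to the unitary $X=e^{-itG_\theta}\bU_\theta$. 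No product structure of the intervention is used here, which is why the lemma holds for arbitrary $\bU_\theta$ and arbitrary $\Omega$.

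The second step is to verify the hypotheses of Lemma~\ref{Slem:Bernstein_endpoint}. Diagonalize $G_\theta=\sum_\lambda\lambda P_\lambda$. Then $\Tr[e^{-itG_\theta}\bU_\theta\Omega]=\sum_\lambda e^{-i\lambda t}\Tr[P_\lambda\bU_\theta\Omega]$ is an exponential polynomial with frequencies in $[-\norm{G_\theta}_\infty,\norm{G_\theta}_\infty]$. Taking the real part adds the conjugate frequencies $+\lambda$, which stay in the same interval. So $f$ is a real exponential polynomial of the required form with $\Lambda=\norm{G_\theta}_\infty$.

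The third step identifies the values at $t=0$. Clearly $f(0)=S_\theta(\Omega)/\norm{\Omega}_1$. Differentiating gives $f'(0)=\Re\Tr[-iG_\theta\bU_\theta\Omega]/\norm{\Omega}_1$. Since $\partial_\theta\bU_\theta=-iG_\theta\bU_\theta$, this equals $\partial_\theta S_\theta(\Omega)/\norm{\Omega}_1$. The key point is that the auxiliary flow $e^{-itG_\theta}$ reproduces the first-order $\theta$-derivative exactly, even though the family $\bU_\theta$ need not be a one-parameter group; only the derivative at $t=0$ is compared.

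Applying Lemma~\ref{Slem:Bernstein_endpoint} at $t=0$ and multiplying through by $\norm{\Omega}_1$ then gives
\begin{equation}
    \abs{\partial_\theta S_\theta(\Omega)}\leq\norm{G_\theta}_\infty\sqrt{\norm{\Omega}_1^2-S_\theta(\Omega)^2}.
\end{equation}
There is no real obstacle here; the substantive input is the pointwise Bernstein inequality, which has already been established. The only care needed is with the degenerate cases. These are $\Omega=0$, $G_\theta=0$, and the endpoint $\abs{S_\theta(\Omega)}=\norm{\Omega}_1$, and the lemma's limiting argument already covers them.
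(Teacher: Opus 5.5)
Your proposal is correct and follows the paper's proof essentially step for step. Like the paper, you normalize $\Re\Tr[e^{-itG_\theta}\bU_\theta\Omega]$ by $\norm{\Omega}_1$, bound it by trace-norm (H\"older) duality, check that the frequencies lie in $[-\norm{G_\theta}_\infty,\norm{G_\theta}_\infty]$, match $f(0)$ and $f'(0)$ to $S_\theta(\Omega)$ and $\partial_\theta S_\theta(\Omega)$, and then invoke Lemma~\ref{Slem:Bernstein_endpoint} at $t=0$.
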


\begin{proof}
The case $\norm{\Omega}_1=0$ is immediate. Otherwise fix $\theta$ and apply Lemma~\ref{Slem:Bernstein_endpoint} to
\begin{equation}
    f_\theta(t)=\frac{\Re\Tr[e^{-itG_\theta}\bU_\theta\Omega]}{\norm{\Omega}_1}.
\end{equation}
Trace norm duality gives $\abs{f_\theta(t)}\leq1$, and its frequencies lie in $[-\norm{G_\theta}_\infty,\norm{G_\theta}_\infty]$. Since $f_\theta(0)=S_\theta(\Omega)/\norm{\Omega}_1$ and $f_\theta'(0)=\partial_\theta S_\theta(\Omega)/\norm{\Omega}_1$, the result follows.
\end{proof}

\begin{corollary}\label{Scoro:HL}
In the setting of Theorem~\ref{Sthm:F_upper_bound}, define
\begin{equation}\label{eqn:Lambda_tot}
    \Lambda_{\rm tot}:=\sum_k\norm{h_{k,\theta}}_\infty.
\end{equation}
Then
\begin{equation}
    \abs{\partial_\theta S_\theta(\Omega)}\leq\Lambda_{\rm tot}\sqrt{1-S_\theta(\Omega)^2},
    \qquad
    F_\theta\leq\Lambda_{\rm tot}^2,
\end{equation}
where the Fisher information bound applies whenever $\abs{S_\theta(\Omega)}<1$.
\end{corollary}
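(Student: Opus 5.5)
The plan is to derive Corollary~\ref{Scoro:HL} directly from Theorem~\ref{Sthm:F_upper_bound}, with one norm estimate in between. Set $G_\theta=\sum_k h_{k,\theta}^{[k]}$. Theorem~\ref{Sthm:F_upper_bound} already gives
\begin{equation}
    \abs{\partial_\theta S_\theta(\Omega)}\leq\norm{G_\theta}_\infty\sqrt{1-S_\theta(\Omega)^2}
\end{equation}
for every $\theta$, with no restriction on $\abs{S_\theta(\Omega)}$. So the phase-rate statement of the corollary will follow once we show $\norm{G_\theta}_\infty\leq\Lambda_{\rm tot}$.

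For this estimate we use the triangle inequality for the operator norm, together with the fact that the lifted operator $h_{k,\theta}^{[k]}=I\otimes\cdots\otimes h_{k,\theta}\otimes\cdots\otimes I$ has the same singular values as $h_{k,\theta}$, up to multiplicities. This holds because $\norm{X\otimes I}_\infty=\norm{X}_\infty$. Together these give $\norm{G_\theta}_\infty\leq\sum_k\norm{h_{k,\theta}^{[k]}}_\infty=\sum_k\norm{h_{k,\theta}}_\infty=\Lambda_{\rm tot}$.

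Alternatively, since the lifted generators commute and are Hermitian, they can be diagonalized simultaneously. The eigenvalues of $G_\theta$ are then sums $\sum_k\lambda_k$ of eigenvalues of the $h_{k,\theta}$, each bounded in modulus by $\Lambda_{\rm tot}$. This second route gives the same estimate without any appeal to general norm properties.

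Inserting $\norm{G_\theta}_\infty\leq\Lambda_{\rm tot}$ into the bound above proves the first inequality of the corollary. For the Fisher information, restrict to $\abs{S_\theta(\Omega)}<1$, so that $1-S_\theta(\Omega)^2>0$. Squaring the phase-rate bound and dividing by $1-S_\theta(\Omega)^2$ gives
\begin{equation}
    F_\theta=\frac{(\partial_\theta S_\theta)^2}{1-S_\theta^2}\leq\Lambda_{\rm tot}^2.
\end{equation}
This is the first inequality in Eq.~\eqref{eqn:HL}. The second inequality, $\Lambda_{\rm tot}\leq N\max_k\norm{h_{k,\theta}}_\infty$, is immediate.

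There is no real obstacle; the substantive analytic content, the Bernstein inequality, is already contained in Theorem~\ref{Sthm:F_upper_bound}. The only point needing care is that the norm of the lifted generator equals that of the local one, which is the multiplicativity of the operator norm under tensor products with the identity.
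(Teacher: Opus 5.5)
Your proposal is correct and matches the paper's argument: the corollary follows from Theorem~\ref{Sthm:F_upper_bound} by subadditivity of the operator norm together with $\norm{h_{k,\theta}^{[k]}}_\infty=\norm{h_{k,\theta}}_\infty$, and then by squaring and dividing by $1-S_\theta(\Omega)^2$ when $\abs{S_\theta(\Omega)}<1$. The paper also runs a second chain through the dressed marginals $\Omega_{k,\theta}$ and Lemma~\ref{Slem:Trace_norm_endpoint}, but only to extract the saturation conditions, which the statement does not require.
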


\begin{proof}
The bound follows from Theorem~\ref{Sthm:F_upper_bound} by norm subadditivity. To obtain its equality conditions, define the dressed marginals $\Omega_{k,\theta}=\Tr_{\setminus k}[\bU_\theta\Omega]$, where $\Tr_{\setminus k}$ traces out all tensor factors except $k$. They satisfy $\Re\Tr\Omega_{k,\theta}=S_\theta(\Omega)$ and
\begin{equation}
    \norm{\Omega_{k,\theta}}_1
    =\max_{V_k\in\mathrm{U}(\cH_k)}\abs{\Tr[V_k\Omega_{k,\theta}]}
    =\max_{V_k\in\mathrm{U}(\cH_k)}\abs{\Tr[V_k^{[k]}\bU_\theta\Omega]}\leq1,
\end{equation}
because $V_k^{[k]}\bU_\theta$ is a product of local unitaries. Differentiating the fringe and applying Lemma~\ref{Slem:Trace_norm_endpoint} to $\Re\Tr[e^{-ith_{k,\theta}}\Omega_{k,\theta}]$ at $t=0$ gives the chain
\begin{align}
    \abs{\partial_\theta S_\theta(\Omega)}
    &=\abs{\sum_k\Im\Tr[h_{k,\theta}\Omega_{k,\theta}]}\nonumber\\
    &\leq\sum_k\abs{\Im\Tr[h_{k,\theta}\Omega_{k,\theta}]}\label{eqn:sum_of_abs}\\
    &\leq\sum_k\norm{h_{k,\theta}}_\infty\sqrt{\norm{\Omega_{k,\theta}}_1^2-S_\theta(\Omega)^2}\label{eqn:marginal_phase_rate}\\
    &\leq\Lambda_{\rm tot}\sqrt{1-S_\theta(\Omega)^2}.\label{eqn:saturation_chain}
\end{align}
This proves the corollary and supplies the saturation conditions below.
\end{proof}

The corollary extends to $\abs{S_\theta(\Omega)}=1$ by continuity whenever the relevant limits exist. The first derivative vanishes at these endpoints, so the limiting Fisher information depends on higher order behavior, hence the following saturation conditions concern interior points only.

\begin{remark}\label{rem:saturation_conditions}
Fix $\theta$ with $\abs{S_\theta(\Omega)}<1$, and call event $k$ \emph{active} if $\norm{h_{k,\theta}}_\infty>0$. Equality in Eq.~\eqref{eqn:sum_of_abs} requires a common sign for its nonzero terms. Equality in Eq.~\eqref{eqn:marginal_phase_rate} requires saturation of the trace norm phase-rate bound at each active event, while equality in Eq.~\eqref{eqn:saturation_chain} requires $\norm{\Omega_{k,\theta}}_1=1$ there. Thus the spacetime Heisenberg limit is saturated if and only if, for every active $k$,
\begin{equation}
    \norm{\Omega_{k,\theta}}_1=1,
    \qquad
    \Im\Tr[h_{k,\theta}\Omega_{k,\theta}]
    =\sigma\norm{h_{k,\theta}}_\infty\sqrt{1-S_\theta(\Omega)^2}
\end{equation}
with a common sign $\sigma\in\{+1,-1\}$. Note that inactive events impose no condition and if all events are inactive, we have $F_\theta=\Lambda_{\rm tot}^2=0$.

We can acquire useful sufficient conditions in the following special case. For some common $\sigma\in\{+1,-1\}$,
\begin{equation}
    h_{k,\theta}\Omega_{k,\theta}
    =\sigma\norm{h_{k,\theta}}_\infty\Omega_{k,\theta}
\end{equation}
for every active $k$. Since $\Tr\Omega_{k,\theta}=\cI_\Omega(\bU_\theta)$, substitution into the fringe derivative gives
\begin{equation}
    \partial_\theta S_\theta(\Omega)=\sigma\Lambda_{\rm tot}\Im\cI_\Omega(\bU_\theta),
    \qquad
    F_\theta=\Lambda_{\rm tot}^2\frac{(\Im\cI_\Omega(\bU_\theta))^2}{1-(\Re\cI_\Omega(\bU_\theta))^2}.
\end{equation}
The spacetime state bound $\abs{\cI_\Omega(\bU_\theta)}\leq1$ then implies $F_\theta\leq\Lambda_{\rm tot}^2$. If at least one event is active, equality holds precisely at maximal visibility, $\abs{\cI_\Omega(\bU_\theta)}=1$.
\end{remark}

\section{Examples of signal encodings beyond local unitaries}
\label{sec:beyond}

Throughout this section, the implementation of signals is given in the form of a dilation $\bV_\theta=\bigotimes_{k=1}^N V_{k,\theta}$, written $\bV_{\bA\bE}(\theta)$ in the main text, in which $V_{k,\theta}$ acts on the enlarged event $A_kE_k$, together with a parameter-independent auxiliary spacetime state $\Xi_\bE$. As in the main text, the interference term is governed by the \emph{intervention operator}
\begin{equation} \label{eqn:intervention}
    \bK_\theta = \Tr_\bE\qty[\bV_\theta\,(I_\bA\otimes\Xi_\bE)],
    \qquad
    \cI_{\Omega_\bA\otimes\Xi_\bE}(\bV_\theta) = \Tr[\bK_\theta\Omega_\bA] .
\end{equation}
We consider only choices for which $\Omega_\bA\otimes\Xi_\bE$ satisfies the spacetime state condition for the enlarged events $A_kE_k$.
The two sides of the second equality express two viewpoints on the same experiment. On the left, the auxiliary events belong to the probed system and the signal is a local unitary on $\bA\bE$, so that the theory of the preceding sections applies verbatim. On the right, only $\bA$ is regarded as the target, and the auxiliaries have been absorbed into the signal, which is now represented by $\bK_\theta$ and need be neither unitary nor local. We adopt the second viewpoint in this section and write $\cI_\Omega(\theta)=\Tr[\bK_\theta\Omega]$ for a probe $\Omega$ of the target events. Since $\bK_\theta$, and with it the fringe, depends on the dilation and not only on the reduced signal channel, each example specifies its dilation explicitly. Section~\ref{subsec:collision} treats unitary signals that act at a random subset of the events, which is the situation of a probe scattering off targets at unknown positions or times, and Sec.~\ref{subsec:implementation} compares two dilations of one noisy channel that differ in their interferometric Fisher information.

\subsection{Random collision patterns}
\label{subsec:collision}

In the main text, we only considered the signals deterministically acting on certain events, but probabilistic signals can be described in our framework by introducing an appropriate auxiliary event as well. Here, we begin with the simplest case of a single active event at an unknown location.

\begin{example}
\label{ex:unknown_event}
Let a single scatterer imprint $U_\theta=e^{-i\theta Z/2}$ at exactly one event $A_k$, with a $\theta$-independent probability $P(k)$, and act as the identity at the others. This is the radar-type situation described in the main text, in which the probe meets one target whose position among $N$ sites, or whose time of encounter among $N$ windows, is not known in advance. The pattern is implemented by classical registers $E_k$ in the correlated state $\Xi_\bE=\sum_kP(k)\dyad{\boldsymbol e_k}$, where $\boldsymbol e_k$ is the bit string with a single $1$ at position $k$, and by the controlled unitaries $V_{k,\theta}=I\otimes\dyad{0}_{E_k}+U_\theta\otimes\dyad{1}_{E_k}$, so that Eq.~\eqref{eqn:intervention} gives the convex sum of local unitaries $\bK_\theta=\sum_k P(k)U_\theta^{[k]}$, where $U_\theta^{[k]}$ is the tensor product of $U_\theta$ on $A_k$ with identities elsewhere. Writing $\Omega_k=\Tr_{\setminus k}\Omega$ for the marginal at the $k$th event of a probe spacetime state $\Omega_\bA$,
\begin{equation}
    \cI_\Omega(\theta)
    =
    \Tr\qty[\qty(\sum_k P(k)U_\theta^{[k]})\Omega]
    =
    \Tr\qty[U_\theta\qty(\sum_k P(k)\Omega_k)] .
\end{equation}
Take the product probe $\Omega=\dyad{0}^{\otimes N}$, whose local vector is aligned with the maximal eigenvalue, $(Z/2)\ket{0}=\Lambda\ket{0}$, where $\Lambda=\norm{Z/2}_\infty=1/2$. 
All marginals equal $\dyad{0}$ independently of $k$, so $\sum_k P(k)\Omega_k=\dyad{0}$ and
\begin{equation}
    \cI_\Omega(\theta) = \mel{0}{U_\theta}{0} = e^{-i\theta/2} .
\end{equation}
Hence $S_\theta=\cos(\theta/2)$ and $F_\theta=\Lambda^2$ for every $\theta$. The last expression shows that the statistics coincide with those of single event interferometry with the signal $U_\theta$ on the averaged marginal $\sum_kP(k)\Omega_k$, to which Theorem~\ref{thm:F_upper_bound} applies with $N=1$. The unknown location therefore causes no loss relative to the single-signal bound because the relevant marginals are identical. The $N$ candidate sites and the common reference remain implementation resources. Note that the auxiliary state $\Xi_\bE$ can be replaced by the coherent superposition of single-collision patterns $\sum_{kl} \sqrt{P(k)P(l)} \dyad{\boldsymbol e _k}{\boldsymbol e_l}$ and still yield the same result.
\end{example}

We can generalize this observation to arbitrary probabilistic collision patterns as follows.

\begin{prop}
\label{prop:char_func}
Let $\boldsymbol{s}=(s_1,\ldots,s_N)\in\{0,1\}^N$ denote a collision pattern, where $s_k=1$ means that $U_\theta=e^{-i\theta h}$ acts at event $A_k$, and let $P(\boldsymbol{s})$ be a $\theta$-independent distribution. Implement the pattern by classical registers in the state $\Xi_\bE=\sum_{\boldsymbol s}P(\boldsymbol s)\dyad{\boldsymbol s}_\bE$ and by the controlled unitaries $V_{k,\theta}=I\otimes\dyad{0}_{E_k}+U_\theta\otimes\dyad{1}_{E_k}$. Then
\begin{equation}
    \bK_\theta
    =
    \sum_{\boldsymbol{s}\in\{0,1\}^N}
    P(\boldsymbol{s})
    \bigotimes_{k=1}^N U_\theta^{s_k}.
    \label{eqn:collision_pattern_K}
\end{equation}
For the aligned probe $\Omega=\dyad{v}^{\otimes N}$ with $h\ket{v}=\sigma\Lambda\ket{v}$ and $\Lambda=\norm{h}_\infty$,
\begin{equation}
\begin{aligned}
    \cI_\Omega(\theta)
    &=
    \Tr[\bK_\theta\Omega]
    \\
    &=
    \sum_{\boldsymbol{s}} P(\boldsymbol{s})e^{-i\sigma n(\boldsymbol{s})\Lambda\theta}
    \\
    &=
    \sum_{n=0}^N P(n)e^{-i\sigma n\Lambda\theta}
    =
    \expval{e^{-i\sigma n\Lambda\theta}},
\end{aligned}
\end{equation}
where $n(\boldsymbol{s})=\sum_k s_k$ is the number of active events, the collision number, and $P(n)=\sum_{\boldsymbol{s}:n(\boldsymbol{s})=n}P(\boldsymbol{s})$. Thus the interference term is the characteristic function of the collision number. The fringe $S_\theta=\expval{\cos(n\Lambda\theta)}$ obeys
\begin{equation}
    S_\theta
    =
    1-\frac{\Lambda^2\theta^2}{2}\expval{n^2}+O(\theta^4),
    \qquad
    \partial_\theta S_\theta
    =
    -\Lambda^2\theta\expval{n^2}+O(\theta^3),
\end{equation}
and the limiting Fisher information as $\theta\to0$ is
\begin{equation}
    F_{\theta\to0}
    =
    \Lambda^2\expval{n^2}.
\end{equation}
Moreover, at every operating point with $\abs{S_\theta}<1$,
\begin{equation}
    F_\theta
    \leq
    \Lambda^2\expval{n^2}.
    \label{eqn:collision_bound}
\end{equation}
Indeed, $\partial_\theta S_\theta=-\Lambda\expval{n\sin(n\Lambda\theta)}$, so the Cauchy--Schwarz inequality $\expval{n\sin(n\Lambda\theta)}^2\leq\expval{n^2}\expval{\sin^2(n\Lambda\theta)}$ together with $\expval{\cos^2(n\Lambda\theta)}\geq\expval{\cos(n\Lambda\theta)}^2$ gives $(\partial_\theta S_\theta)^2\leq\Lambda^2\expval{n^2}(1-S_\theta^2)$.
\end{prop}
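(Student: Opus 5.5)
The plan has four steps: compute $\bK_\theta$ from Eq.~\eqref{eqn:intervention}, evaluate it on the aligned probe, expand the fringe at small $\theta$, and bound the Fisher information by Cauchy--Schwarz.

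\textbf{Step 1: the intervention operator.} The registers are classical and diagonal, so I would expand $\Xi_\bE$ into its diagonal terms. Each controlled unitary satisfies $\bra{s_k}V_{k,\theta}\ket{s_k}_{E_k}=U_\theta^{s_k}$. Hence the partial trace over $\bE$ gives
\begin{equation}
\bK_\theta=\sum_{\boldsymbol s}P(\boldsymbol s)\bigotimes_k U_\theta^{s_k},
\end{equation}
which is Eq.~\eqref{eqn:collision_pattern_K}. This step is routine.

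\textbf{Step 2: the aligned probe.} For $\Omega=\dyad{v}^{\otimes N}$, the trace factorizes over events. Since $h\ket{v}=\sigma\Lambda\ket{v}$, each active factor contributes $\mel{v}{U_\theta}{v}=e^{-i\sigma\Lambda\theta}$, and each inactive factor contributes $1$. Summing over patterns with fixed $n(\boldsymbol s)$ gives the characteristic function $\expval{e^{-i\sigma n\Lambda\theta}}$. Taking the real part removes $\sigma$, so $S_\theta=\expval{\cos(n\Lambda\theta)}$.

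\textbf{Step 3: small-$\theta$ behaviour.} Taylor expanding the cosine gives $S_\theta=1-\Lambda^2\theta^2\expval{n^2}/2+O(\theta^4)$ and $\partial_\theta S_\theta=-\Lambda^2\theta\expval{n^2}+O(\theta^3)$. These expansions are legitimate because the distribution $P(n)$ is finite. Then $1-S_\theta^2=\Lambda^2\theta^2\expval{n^2}+O(\theta^4)$, so the ratio in Eq.~\eqref{eqn:binary_FI} tends to $\Lambda^2\expval{n^2}$. The only care needed is the case $\expval{n^2}=0$, where $F=0$ trivially.

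\textbf{Step 4: the bound \eqref{eqn:collision_bound}.} Writing $c=\cos(n\Lambda\theta)$ and $s=\sin(n\Lambda\theta)$, we have $\partial_\theta S_\theta=-\Lambda\expval{ns}$. Cauchy--Schwarz gives
\begin{equation}
\expval{ns}^2\le\expval{n^2}\expval{s^2}.
\end{equation}
Next, $\expval{s^2}=1-\expval{c^2}\le1-\expval{c}^2=1-S_\theta^2$ by Jensen's inequality. Combining these and dividing by $1-S_\theta^2>0$ gives $F_\theta\le\Lambda^2\expval{n^2}$.

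The only step needing attention is this last one. Its point is to pass from the pattern-wise phase spread to the fringe visibility, and that passage rests on Jensen's inequality applied to $c^2$. It is nonetheless elementary, so I expect no real obstacle. It may also be worth remarking that $\expval{n^2}\le N^2$, so this bound is consistent with Corollary~\ref{coro:Heisenberg_limit}.
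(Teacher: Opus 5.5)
Your proposal is correct and follows the paper's own argument step for step. The steps are: the intervention operator read off from the diagonal register state, the characteristic function on the aligned probe, the small-$\theta$ Taylor expansion, and the bound from Cauchy--Schwarz combined with $\expval{\cos^2(n\Lambda\theta)}\geq\expval{\cos(n\Lambda\theta)}^2$, plus your small note on the degenerate case $\expval{n^2}=0$ (or $\Lambda=0$), where the fringe is identically one and the paper leaves this implicit.
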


The interferometric Fisher information of a random collision pattern is thus governed by the second moment of the collision number. Equation~\eqref{eqn:collision_bound} is the spacetime Heisenberg limit of Corollary~\ref{coro:Heisenberg_limit} with $N$ replaced by the root-mean-square collision number $\expval{n^2}^{1/2}$, and it is attained at $\theta\to0$. Since $n\leq N$, it never exceeds $N^2\Lambda^2$, which is also the bound that \emph{Nonunitary and correlated signals} of the End Matter assigns to this dilation, whose generators $h\otimes\dyad{1}_{E_k}$ have norm $\Lambda$.

For independent collisions with a known, $\theta$-independent probability $q$ at each event, as in a dilute sample, the collision number is binomial. Proposition~\ref{prop:char_func} then gives
\begin{equation}
    F_{\theta\to0}
    =
    \Lambda^2\qty[(qN)^2+q(1-q)N].
\end{equation}
This retains $O(N^2)$ scaling at fixed $q>0$ and reduces to $N^2\Lambda^2$ at $q=1$.

\subsection{Implementation dependence of noisy signals}
\label{subsec:implementation}

Having treated random occurrences of unitary signals, we now compare two physical implementations of the same noisy channel, which induce different intervention operators and hence different interferometric Fisher information.

\begin{example}
\label{ex:two_impl}
Consider the phase-flip channel composed with the signal $U_\theta=e^{-i\theta Z/2}$,
\begin{equation}
    \cE_\theta(\rho)
    =
    (1-p)\,U_\theta\rho U_\theta^\dag
    + p\,Z U_\theta\rho U_\theta^\dag Z ,
\end{equation}
and evaluate two of its implementations on the aligned product probe $\Omega=\dyad{0}^{\otimes N}$.

\emph{Branch-recording Stinespring implementation.} This models a dynamical environment, such as a spectator spin or a scattered photon, that is flipped by the interaction itself. The unitary
\begin{equation}
    V_\theta
    =
    e^{-i\alpha Z_A\otimes Y_E}\,(U_\theta\otimes I_E),
    \qquad
    \cos\alpha = \sqrt{1-p},
    \quad
    \sin\alpha = \sqrt{p},
\end{equation}
acts on $\ket{\psi}_A\ket{0}_E$ as $\sqrt{1-p}\,U_\theta\ket{\psi}\ket{0}+\sqrt{p}\,ZU_\theta\ket{\psi}\ket{1}$ and realizes $\cE_\theta$. Its single-event intervention operator is $K_\theta=\bra{0}_EV_\theta\ket{0}_E=\sqrt{1-p}\,U_\theta$, hence collectively  $\bK_\theta^{\rm St}=(\sqrt{1-p}\,U_\theta)^{\otimes N}$. On the aligned probe this gives
\begin{equation}
    \cI_\Omega(\theta) = (1-p)^{N/2}e^{-iN\theta/2},
    \qquad
    \max_\theta F_\theta = (1-p)^N\frac{N^2}{4} .
\end{equation}
For $0<p<1$, treating $N$ as continuous, the right-hand side is maximized at $N^\ast=-2/\ln(1-p)$, and for integer $N$ the optimum is a neighboring integer.

\emph{Classically controlled implementation.} This models quasi-static noise, such as a random but fixed local field or a frozen configuration of the environment, whose value is set before the probe arrives. A classical control register $\Xi_E=(1-p)\dyad{0}+p\dyad{1}$ and the controlled unitary $V_\theta=U_\theta\otimes\dyad{0}_E+ZU_\theta\otimes\dyad{1}_E$ realize the same channel $\cE_\theta$, but now $K_\theta=(1-p)U_\theta+p\,ZU_\theta$ and $\bK_\theta^{\rm cl}=K_\theta^{\otimes N}$. Evaluating this operator on the same probe, with $Z\ket{0}=\ket{0}$, gives
\begin{equation}
    \cI_\Omega(\theta) = e^{-iN\theta/2},
    \qquad
    F_\theta = \frac{N^2}{4}
\end{equation}
for every $p$ and every $\theta$.

Both implementations have dilated generator $Z_A/2$ of norm $1/2$, so both respect $F_\theta\leq N^2/4$, but only the second attains it. The difference lies in what the auxiliary system learns about the interferometer arm. On the intervention arm the Stinespring environment is rotated from $\ket{0}_E$ to $\sqrt{1-p}\ket{0}_E+\sqrt{p}\ket{1}_E$ (for the probe $\ket{0}$, on which $Z$ acts trivially), whereas on the reference arm it stays in $\ket{0}_E$. The overlap between the two arms, and with it the single-event visibility, is therefore reduced to $\sqrt{1-p}$ even though the noise does not disturb the probe state. The classical control register, by contrast, is diagonal and is left unchanged by $V_\theta$ on both arms, so it acquires no which-arm information and the fringe is independent of $p$. The interferometric metrological power is therefore not determined by the reduced channel alone. That the interferometric visibility of a channel depends on its physical implementation is known from the theory of coherently controlled channels~\cite{oi2003interference,abbott2020communication}, and the intervention operator quantifies this dependence as a Fisher information of a single spacetime state.
\end{example}

\begin{example}
\label{ex:collective}
Take $U_\theta=e^{-i\theta Z/2}$ and, at every event, the classically controlled unitary of Example~\ref{ex:two_impl},
\begin{equation}
    V_{k,\theta}
    =
    U_\theta\otimes\dyad{0}_{E_k}
    +
    ZU_\theta\otimes\dyad{1}_{E_k}.
\end{equation}
A single correlated control state
\begin{equation}
    \Xi_\bE
    =
    (1-p)\dyad{0\cdots0}_\bE
    +
    p\dyad{1\cdots1}_\bE
\end{equation}
then gives
\begin{equation}
    \bK_\theta^{\rm coll}
    =
    (1-p)U_\theta^{\otimes N}
    +
    p(ZU_\theta)^{\otimes N}.
    \label{eqn:collective_K}
\end{equation}
Collective control of this kind arises from common-mode noise, for instance a global field that takes one random value shared by all events in each run. The resulting operator is generally nonfactorizing, although every $V_{k,\theta}$ is event local, and it differs from the intervention operator of independent local fluctuations with the same single event probabilities,
\begin{equation}
    \bK_\theta^{\rm ind}
    =
    \bigotimes_{k=1}^N
    \qty[(1-p)U_\theta+pZU_\theta],
\end{equation}
which is the operator $\bK_\theta^{\rm cl}$ of Example~\ref{ex:two_impl}. On the aligned probe $\Omega=\dyad{0}^{\otimes N}$ the two operators act identically because $Z\ket{0}=\ket{0}$, and Eq.~\eqref{eqn:collective_K} gives $\cI_\Omega(\theta)=e^{-iN\theta/2}$ and $F_\theta=N^2/4$, as for independent registers. The probe $\Omega'=\dyad{+}^{\otimes N}$, for which $\mel{+}{U_\theta}{+}=\cos(\theta/2)$ and $\mel{+}{ZU_\theta}{+}=-i\sin(\theta/2)$, separates them:
\begin{equation}
    \Tr[\bK_\theta^{\rm coll}\Omega']
    =
    (1-p)\cos^N(\theta/2)+p\,(-i)^N\sin^N(\theta/2),
    \qquad
    \Tr[\bK_\theta^{\rm ind}\Omega']
    =
    \qty[(1-p)\cos(\theta/2)-ip\sin(\theta/2)]^N .
\end{equation}
Already for $N=2$ the in-phase fringes differ by $p(1-p)\cos\theta$. Thus the aligned probe retains the spacetime Heisenberg limit fringe for the collectively controlled implementation, while a probe such as $\Omega'$ distinguishes collective from independent noise directly through the intervention operator, without constructing a selected global channel.
\end{example}

\section{Time state metrology with unresolved encounter order}
\label{sec:chronology}

\subsection{Directed histories and their classical mixture}
\label{subsec:directed}
Enumerating $N$ events as $\bA=A_1\cdots A_N$ does not fix their encounter order. For example, the realized history may be $A_2 \to A_1 \to A_3$. Let a \emph{directed history} be a definite encounter order $\pi\in S_N$ of one persistent system represented at the event labels $A_k$. The systems $E_k$ collect unobserved memory degrees of freedom at the corresponding encounters, and $\cW^{(\pi)}_k$ denotes the parameter-independent between-event channel from $A_{\pi(k)}E_{\pi(k)}$ to $A_{\pi(k+1)}E_{\pi(k+1)}$. The corresponding \emph{directed time state} $\Omega_\pi$ on $\bA$ is obtained by iterating the left product~\cite{fullwood2022quantum,lie2024unique,lie2025multi} defined in \emph{Time states and $\star$-products} of the End Matter,
\begin{equation} \label{eqn:directed_time_state}
    \Omega_\pi = \Tr_\bE \qty[\cW^{(\pi)}_{N-1}\star_L \qty(\cW^{(\pi)}_{N-2}\star_L \cdots \star_L \qty(\cW^{(\pi)}_1 \star_L \rho_{A_{\pi(1)}E_{\pi(1)}}))],
\end{equation}
and yields the interference term
\begin{equation} \label{eqn:directed}
    \cI_{\Omega_\pi}\qty(\bigotimes_{k=1}^N V_k)
    =
    \Tr\qty[
        V_{\pi(N)}\,\cW^{(\pi)}_{N-1}\qty(V_{\pi(N-1)}\cdots\cW^{(\pi)}_1\qty(V_{\pi(1)}\,\rho_{A_{\pi(1)}E_{\pi(1)}})\cdots)
    ] .
\end{equation}
The left product is used here instead of the FP product to reproduce the ordered correlator in Eq.~\eqref{eqn:directed}, including its possible time reversal, or to be more accurate, time ordering asymmetry.

Different directed histories place the same labeled events in different causal orders, but the corresponding time states act on the same labeled event space, so the convex mixture
\begin{equation} \label{eqn:mixture}
    \Omega = \sum_{\pi} p_\pi\,\Omega_\pi
\end{equation}
is well defined for a $\theta$-independent probability distribution $p_\pi$ and is again a regular spacetime state, since the defining condition $\abs{\cI_\Omega(\bW_\bA)}\leq 1$ is preserved under convex combinations of $\Omega$. Equation~\eqref{eqn:mixture} is a classical mixture of definite directed histories. It is neither a coherent quantum switch, which would require an order label kept coherent and interfered, nor a genuinely indefinite causal structure. All examples below concern the classical mixture unless stated otherwise. The spacetime state, its marginals, the interference term, and the fringe are linear in $\Omega$, so no individual encounter order has to be selected and recomposed. Fisher information is evaluated only after the mixed fringe is formed.

\subsection{Saturation conditions from the marginals of a directed time state}
\label{subsec:directed_saturation}

The saturation conditions of the main text, restated in Remark~\ref{rem:saturation_conditions}, refer to the dressed marginals $\Omega_{k,\theta}=\Tr_{\setminus k}[\bU_\theta\Omega]$, which depend on the signal. For time states built from directed histories, a sufficient condition can be stated on the marginals $\Omega_k=\Tr_{\setminus k}\Omega$ of the undressed state alone. This rests on a property of the left product. The marginal of a directed time state at an event is the reduced state of the system at that encounter, and the signals enter Eq.~\eqref{eqn:directed} as left multiplications on these states. We allow continuously differentiable signals satisfying $\partial_\theta U_{k,\theta}=-ih_{k,\theta}U_{k,\theta}$ and write $\Lambda_k(\theta):=\norm{h_{k,\theta}}_\infty$ and $\Lambda_{\rm tot}(\theta):=\sum_k\Lambda_k(\theta)$.

\begin{lemma}
\label{lem:directed_marginals}
Let $\Omega_\pi$ be the directed time state~\eqref{eqn:directed_time_state}, and let $\varrho_1=\rho_{A_{\pi(1)}E_{\pi(1)}}$ and $\varrho_{k+1}=\cW^{(\pi)}_k(\varrho_k)$, so that $\varrho_k$ is the joint state of the system and its memory on $A_{\pi(k)}E_{\pi(k)}$ at the $k$th encounter. Writing $E$ for the memory $E_{\pi(k)}$ at the encounter in question, the marginal of $\Omega_\pi$ at the event $A_{\pi(k)}$ is the reduced state of the system at that encounter,
\begin{equation}
    \Tr_{\setminus\pi(k)}\Omega_\pi=\Tr_{E}\,\varrho_k ,
\end{equation}
and is therefore a density operator. Consequently every marginal of the mixture~\eqref{eqn:mixture} is a density operator as well.
\end{lemma}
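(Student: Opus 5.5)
The plan is to compute the marginal through the interference term rather than from the nested $\star$-products directly. By Eq.~\eqref{eqn:directed}, $\cI_{\Omega_\pi}(\bigotimes_k V_k)=\Tr[(\bigotimes_k V_k)\Omega_\pi]$ for all local unitaries, and by linearity for all local operators, since local unitaries span each operator space. The marginal $\Tr_{\setminus\pi(k)}\Omega_\pi$ is fixed by its pairings $\Tr[X\,\Tr_{\setminus\pi(k)}\Omega_\pi]=\Tr[(X^{[\pi(k)]})\Omega_\pi]$. It therefore suffices to evaluate Eq.~\eqref{eqn:directed} with $V_{\pi(k)}=X$ and all other $V_{\pi(j)}=I$.

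With these choices, the first $k-1$ interventions are identities. The nested expression inside the position of $V_{\pi(k)}$ is then $\cW^{(\pi)}_{k-1}\circ\cdots\circ\cW^{(\pi)}_1(\varrho_1)=\varrho_k$, by the recursive definition of $\varrho_k$. The intervention acts by left multiplication, giving $(X\otimes I_E)\varrho_k$. All later interventions are identities, so the remaining steps only apply the channels $\cW^{(\pi)}_{k},\dots,\cW^{(\pi)}_{N-1}$ followed by the final trace. Each $\cW$ is trace preserving on all operators, not just on states, because trace preservation is a linear property. The outer trace therefore equals $\Tr[(X\otimes I_E)\varrho_k]=\Tr[X\,\Tr_E\varrho_k]$. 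The trace over the memory systems $\bE$ in Eq.~\eqref{eqn:directed_time_state} is already built into Eq.~\eqref{eqn:directed}, so nothing extra needs to be tracked. Since this equality holds for every $X$, nondegeneracy of the trace pairing gives $\Tr_{\setminus\pi(k)}\Omega_\pi=\Tr_E\varrho_k$. This is a density operator because $\varrho_k$ is a state, as each channel maps states to states.

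For the mixture~\eqref{eqn:mixture}, the partial trace is linear, so $\Tr_{\setminus k}\Omega=\sum_\pi p_\pi\Tr_{\setminus k}\Omega_\pi$. This is a convex combination of density operators and hence a density operator.

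The only delicate point is the identification of Eq.~\eqref{eqn:directed} with the iterated left product~\eqref{eqn:directed_time_state}, which the paper asserts. If one instead wanted to verify it, the step would be an induction on $N$ using the two-event identity $\Tr[V_B\cE(V_A\rho)]=\Tr[(V_A\otimes V_B)(\cE\star_L\rho)]$ from the End Matter, applied with memory included. I would take that identity as given. The remaining argument, namely that trace preservation makes the later channels invisible, is routine.
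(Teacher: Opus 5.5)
Your proposal is correct and follows essentially the same route as the paper's proof: set every intervention except the one at $A_{\pi(k)}$ to the identity in Eq.~\eqref{eqn:directed}, use trace preservation of the later channels to reduce the right-hand side to $\Tr[(B\otimes I_E)\varrho_k]$, and conclude by arbitrariness of $B$ and linearity for the mixture. You additionally spell out two steps the paper leaves implicit: extending from local unitaries to arbitrary operators by spanning, and noting that trace preservation applies to non-state operators such as $(B\otimes I_E)\varrho_k$.
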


\begin{proof}
Set $V_j=I$ for every $j\neq\pi(k)$ and $V_{\pi(k)}=B$ in Eq.~\eqref{eqn:directed}. The channels $\cW^{(\pi)}_k,\ldots,\cW^{(\pi)}_{N-1}$ applied after the $k$th encounter are trace preserving, so the right-hand side reduces to $\Tr[(B\otimes I_E)\varrho_k]$. Since $B$ is arbitrary, the marginal is $\Tr_E\varrho_k$. The statement for the mixture follows by linearity.
\end{proof}

\begin{prop}
\label{prop:directed_saturation}
Let $\cJ\subseteq\bR$ be a parameter interval and fix $\theta_0\in\cJ$. Let $\Omega=\sum_\pi p_\pi\Omega_\pi$ be a classical mixture of directed time states as in Eq.~\eqref{eqn:mixture}, with arbitrary initial state, memory, and parameter-independent between-event channels. Suppose that there are phases $\phi_{k,0}\in\bR$ and a sign $\sigma\in\{+1,-1\}$ such that the undressed marginals satisfy
\begin{equation} \label{eqn:undressed_extremal}
    U_{k,\theta_0}\Omega_k=e^{-i\phi_{k,0}}\Omega_k,
    \qquad
    h_{k,\vartheta}\Omega_k
    =
    \sigma\Lambda_k(\vartheta)\Omega_k
    \quad\text{for every }\vartheta\in\cJ,
    \qquad
    \Omega_k=\Tr_{\setminus k}\Omega ,
\end{equation}
for every $k$. Thus the system arrives at each event in a fixed occupied subspace contained in the aligned extremal eigenspace of the generator throughout $\cJ$, and the signal at the reference point acts on that occupied subspace by a scalar. Define
\begin{equation} \label{eqn:accumulated_extremal_phase}
    \phi_k(\theta)
    :=
    \phi_{k,0}
    +
    \sigma\int_{\theta_0}^{\theta}\Lambda_k(\vartheta)\,d\vartheta .
\end{equation}
Then, for every $\theta\in\cJ$,
\begin{equation} \label{eqn:directed_saturation}
    \cI_\Omega(\bU_\theta)
    =
    \exp\left[-i\sum_{k=1}^N\phi_k(\theta)\right],
    \qquad
    h_{k,\theta}\Omega_{k,\theta}
    =
    \sigma\Lambda_k(\theta)\Omega_{k,\theta}
    \quad\text{for every }k,
    \qquad
    F_\theta(\Omega)=\Lambda_{\rm tot}(\theta)^2 .
\end{equation}
The last equality holds whenever $\abs{S_\theta(\Omega)}<1$, and at points where $\abs{S_\theta(\Omega)}=1$ by continuous extension whenever the prescribed limit exists. In particular the dressed marginals satisfy the extremal-marginal condition in Remark~\ref{rem:saturation_conditions} together with maximal visibility, and the pointwise spacetime Heisenberg limit of Corollary~\ref{coro:Heisenberg_limit} is saturated at every such point in $\cJ$. For a time evolution signal normalized by $U_{k,0}=I$, one may take $\theta_0=0$ and $\phi_{k,0}=0$.
\end{prop}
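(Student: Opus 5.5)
The plan is to reduce everything to a support statement about the states actually reaching each encounter. From there the left-product structure of Eq.~\eqref{eqn:directed} turns every signal into a scalar phase, and Remark~\ref{rem:saturation_conditions} finishes the argument.

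\emph{Step 1: support transfer.} Fix $k$ and let $P_k$ be the projector onto the range of $\Omega_k$. By Lemma~\ref{lem:directed_marginals}, $\Omega_k=\sum_\pi p_\pi\Tr_E\varrho^{(\pi)}_{k'}$, where $k'$ is the position of $A_k$ in $\pi$. This is a convex combination of positive operators. Hence for every $\pi$ with $p_\pi>0$, the reduced state $\Tr_E\varrho^{(\pi)}_{k'}$ is supported in $P_k$. Positivity then gives $\varrho^{(\pi)}_{k'}=(P_k\otimes I_E)\varrho^{(\pi)}_{k'}$. By hypothesis~\eqref{eqn:undressed_extremal}, $h_{k,\vartheta}P_k=\sigma\Lambda_k(\vartheta)P_k$ for all $\vartheta\in\cJ$, and $U_{k,\theta_0}P_k=e^{-i\phi_{k,0}}P_k$. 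For the latter, note that $U_{k,\theta_0}\Omega_k=e^{-i\phi_{k,0}}\Omega_k$ determines $U_{k,\theta_0}$ on the range of $\Omega_k$.

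\emph{Step 2: the signal is a scalar on $P_k$ throughout $\cJ$.} For $\psi$ in the range of $P_k$, consider two curves: $x(\theta)=U_{k,\theta}\psi$ and $w(\theta)=e^{-i\phi_k(\theta)}e^{i\phi_{k,0}}U_{k,\theta_0}\psi=e^{-i\phi_k(\theta)}\psi$. Both solve the linear ODE $\partial_\theta x=-ih_{k,\theta}x$; for $w$ this holds because $h_{k,\theta}\psi=\sigma\Lambda_k(\theta)\psi$ and by the definition~\eqref{eqn:accumulated_extremal_phase}. They also agree at $\theta_0$. Since $h_{k,\theta}$ is continuous ($U$ is $C^1$), uniqueness of solutions gives $U_{k,\theta}P_k=e^{-i\phi_k(\theta)}P_k$ on $\cJ$. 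I expect this step to be the main subtlety. The hypothesis only constrains $h$ on the fixed subspace, so one must check that the solution never leaves it. The ODE uniqueness argument handles this without requiring $h_{k,\theta}$ at different $\theta$ to commute.

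\emph{Step 3: evaluate the interference term and the dressed marginals.} Insert $V_{\pi(j)}=U_{\pi(j),\theta}$ into Eq.~\eqref{eqn:directed} and induct along the encounter order. The operator reaching the $j$th encounter is a scalar multiple of $\varrho^{(\pi)}_j$, because earlier left multiplications produced scalars and the channels $\cW^{(\pi)}$ are linear. By Step~1 it is supported in $P_{\pi(j)}\otimes I_E$, so by Step~2 the signal there multiplies it by $e^{-i\phi_{\pi(j)}(\theta)}$. Trace preservation then gives $\cI_{\Omega_\pi}(\bU_\theta)=e^{-i\sum_k\phi_k(\theta)}$ for every $\pi$, and linearity gives the first equality of~\eqref{eqn:directed_saturation} for the mixture.

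Repeat the computation with an arbitrary $B$ at $A_k$ in place of the identity, keeping signals elsewhere. This yields $\Omega_{k,\theta}=e^{-i\sum_{j\neq k}\phi_j(\theta)}U_{k,\theta}\Omega_k=\cI_\Omega(\bU_\theta)\,\Omega_k$. Hence $h_{k,\theta}\Omega_{k,\theta}=\sigma\Lambda_k(\theta)\Omega_{k,\theta}$, and moreover $\abs{\cI_\Omega(\bU_\theta)}=1$.

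\emph{Step 4: conclusion.} The extremal-marginal condition and maximal visibility are precisely the sufficient case of Remark~\ref{rem:saturation_conditions}. It gives $F_\theta=\Lambda_{\rm tot}^2(\Im\cI)^2/(1-(\Re\cI)^2)=\Lambda_{\rm tot}(\theta)^2$ whenever $\abs{S_\theta}<1$. At the endpoints, the explicit phase form of $\cI_\Omega(\bU_\theta)$ makes the continuous extension immediate whenever the prescribed limit exists. The normalization remark follows by taking $U_{k,0}=I$, $\theta_0=0$, and $\phi_{k,0}=0$.
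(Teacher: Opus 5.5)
\textbf{Gap in the dressed-marginal step.} Your Steps 1 and 2, the evaluation of $\cI_\Omega(\bU_\theta)$ in Step 3, and Step 4 are sound and track the paper's argument closely: support projector of the undressed marginal, ODE uniqueness on that fixed support, inside-out evaluation of the left-product chain with trace preservation. The gap is in how you obtain the dressed marginals.

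You assert that ``repeating the computation'' with $B$ inserted at $A_k$ gives $\Omega_{k,\theta}=\cI_\Omega(\bU_\theta)\,\Omega_k$, but your induction does not survive the insertion. If $A_k$ is the $k'$th encounter of $\pi$, the operator leaving it is $c\,(B\otimes I_E)\varrho_{k'}$, which is not a multiple of $\varrho_{k'}$. Its image under $\cW^{(\pi)}_{k'}$ is therefore not a multiple of $\varrho_{k'+1}$, and its left support need not lie in $P_{\pi(k'+1)}\otimes I_E$. Step 2 then gives you no control over how the later signals $U_{\pi(j),\theta}$, $j>k'$, act. The phases you attribute to those events are not justified by your argument.

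\textbf{How to close it.} The identity you claim is true, but it needs an extra idea.

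\emph{The paper's route} avoids the downstream chain entirely. By Step 1, $h_{k,\theta}\otimes I_E$ acts on the operator $X_{k'}\propto\varrho_{k'}$ arriving at $A_k$ as the scalar $\sigma\Lambda_k(\theta)$. So replacing $B$ by $Bh_{k,\theta}$ only rescales the input of an unchanged linear functional, which gives $\Tr[Bh_{k,\theta}\Omega_{k,\theta}]=\sigma\Lambda_k(\theta)\Tr[B\,\Omega_{k,\theta}]$ for all $B$. Maximal visibility is read off from the first equality directly.

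\emph{To keep your stronger identity}, use the fact that the insertion multiplies only from the left, so right supports stay controlled.
\begin{itemize}
\item Suppose $Y=Y\Pi_j$, with $\Pi_j$ the support projector of $\varrho_j$, and let $K_a$ be Kraus operators of $\cW^{(\pi)}_j$. Then the range of $\cW^{(\pi)}_j(Y)^\dagger$ lies in $\sum_a\operatorname{ran}(K_a\Pi_j)=\sum_a\operatorname{supp}(K_a\varrho_jK_a^\dagger)\subseteq\operatorname{supp}\varrho_{j+1}\subseteq\operatorname{ran}(P_{\pi(j+1)}\otimes I_E)$.
\item Left multiplication by a signal leaves right supports unchanged.
\item Combine $\Tr[(U\otimes I_E)Y']=\Tr[Y'(U\otimes I_E)]$, the relation $P_{\pi(j+1)}U_{\pi(j+1),\theta}=e^{-i\phi_{\pi(j+1)}(\theta)}P_{\pi(j+1)}$ (from Step 2 and unitarity), and trace preservation. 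This gives $\Tr[(U_{\pi(j+1),\theta}\otimes I_E)\,\cW^{(\pi)}_j(Y)]=e^{-i\phi_{\pi(j+1)}(\theta)}\Tr Y$.
\item Iterating from $j=k'$ to $N-1$ yields $\Omega_{k,\theta}=\cI_\Omega(\bU_\theta)\,\Omega_k$. Summing over $\pi$ is legitimate because the accumulated phase does not depend on $\pi$.
\end{itemize}
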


\begin{proof}
Let $P_k$ be the support projector of $\Omega_k$, which is a density operator by Lemma~\ref{lem:directed_marginals}. Since $\Omega_k$ is invertible on its support, Eq.~\eqref{eqn:undressed_extremal} implies
\begin{equation}
    U_{k,\theta_0}P_k=e^{-i\phi_{k,0}}P_k,
    \qquad
    h_{k,\vartheta}P_k
    =
    \sigma\Lambda_k(\vartheta)P_k
    \quad\text{for every }\vartheta\in\cJ.
\end{equation}
Moreover, $\Omega_k=\sum_\pi p_\pi\Tr_{\setminus k}\Omega_\pi$ is a convex combination of density operators, so the support of $\Tr_{\setminus k}\Omega_\pi$ is contained in the range of $P_k$ for every $\pi$ with $p_\pi>0$.

For every $k$, both $U_{k,\theta}P_k$ and $e^{-i\phi_k(\theta)}P_k$ solve the operator differential equation
\begin{equation}
    \partial_\theta X_\theta=-ih_{k,\theta}X_\theta
\end{equation}
with the same value at $\theta_0$. Uniqueness therefore gives
\begin{equation} \label{eqn:scalar_action_on_support}
    U_{k,\theta}P_k=e^{-i\phi_k(\theta)}P_k
    \quad\text{for every }\theta\in\cJ.
\end{equation}
No commutativity among the generators at different parameter values is needed for this scalar action on the fixed support.

It now suffices to prove the first two claims of Eq.~\eqref{eqn:directed_saturation} for a single directed history $\pi$, because the interference term and the dressed marginals are linear in $\Omega$. Fix such a $\pi$ and let $\varrho_k$ be as in Lemma~\ref{lem:directed_marginals}, so that $\Tr_E\varrho_k$ is supported in the range of $P_{\pi(k)}$. A joint state whose reduced state is supported in a subspace is itself supported in that subspace tensored with the memory, $\varrho_k=(P_{\pi(k)}\otimes I_E)\varrho_k(P_{\pi(k)}\otimes I_E)$. Equations~\eqref{eqn:undressed_extremal} and~\eqref{eqn:scalar_action_on_support} therefore give
\begin{equation} \label{eqn:scalar_action}
    (U_{\pi(k),\theta}\otimes I_E)\,\varrho_k
    =
    e^{-i\phi_{\pi(k)}(\theta)}\varrho_k ,
    \qquad
    (h_{\pi(k),\theta}\otimes I_E)\,\varrho_k
    =
    \sigma\Lambda_{\pi(k)}(\theta)\varrho_k .
\end{equation}
Now evaluate Eq.~\eqref{eqn:directed} with $V_j=U_{j,\theta}$ from the inside out. Define $X_1=(U_{\pi(1),\theta}\otimes I_E)\varrho_1$ and $X_{k}=(U_{\pi(k),\theta}\otimes I_E)\,\cW^{(\pi)}_{k-1}(X_{k-1})$, so that $\cI_{\Omega_\pi}(\bU_\theta)=\Tr X_N$. A scalar multiple of $\varrho_{k-1}$ is mapped by $\cW^{(\pi)}_{k-1}$ to the same multiple of $\varrho_k$, so Eq.~\eqref{eqn:scalar_action} and induction give
\begin{equation}
    X_k
    =
    \exp\left[-i\sum_{j=1}^{k}\phi_{\pi(j)}(\theta)\right]\,\varrho_k .
\end{equation}
Taking $k=N$ yields the first claim of Eq.~\eqref{eqn:directed_saturation}, which does not depend on $\pi$.

For the dressed marginal at $A_{\pi(k)}$, insert an arbitrary operator $B$ after the signal at the $k$th encounter,
\begin{equation}
    \Tr[B\,\Omega_{\pi(k),\theta}]
    =
    \Tr\qty[(U_{\pi(N),\theta}\otimes I_E)\,\cW^{(\pi)}_{N-1}\qty(\cdots\cW^{(\pi)}_{k}\qty((B\otimes I_E)X_k)\cdots)] .
\end{equation}
Replacing $B$ by $Bh_{\pi(k),\theta}$ multiplies $X_k\propto\varrho_k$ from the left by $h_{\pi(k),\theta}\otimes I_E$, which by Eq.~\eqref{eqn:scalar_action} produces the factor $\sigma\Lambda_{\pi(k)}(\theta)$. Hence $\Tr[Bh_{\pi(k),\theta}\Omega_{\pi(k),\theta}]=\sigma\Lambda_{\pi(k)}(\theta)\Tr[B\,\Omega_{\pi(k),\theta}]$ for every $B$, which is the second claim of Eq.~\eqref{eqn:directed_saturation}.

Finally, set $\Phi(\theta)=\sum_k\phi_k(\theta)$. Then $S_\theta(\Omega)=\cos\Phi(\theta)$ and $\partial_\theta S_\theta(\Omega)=-\sigma\Lambda_{\rm tot}(\theta)\sin\Phi(\theta)$, so $F_\theta(\Omega)=\Lambda_{\rm tot}(\theta)^2$ wherever $\abs{S_\theta(\Omega)}<1$ and at $\abs{S_\theta(\Omega)}=1$ by continuous extension whenever the prescribed limit exists.
\end{proof}

Proposition~\ref{prop:directed_saturation} assumes no restriction on the parameter-independent between-event channels besides trace preservation. They may be noisy and may carry memory through the $E_k$, and the encounter order may be unknown, as long as the resulting undressed marginal at each event is supported in one fixed occupied subspace contained in the aligned extremal eigenspace throughout $\cJ$. Alignment only at a single operating point, or with an occupied subspace that moves with $\theta$, is not sufficient in general because the preceding path-ordered signal evolution need not preserve that subspace. The condition can be checked on the reduced states at the encounters, which are the marginals of $\Omega$ by Lemma~\ref{lem:directed_marginals}, without dressing $\Omega$ by the signal. For a spacelike product state the analogous statement is immediate, and the content of the proposition is that the chain of channels in Eq.~\eqref{eqn:directed} does not spoil the scalar signal action on the fixed occupied subspaces.

\subsection{Common extremal eigenspace and the order-agnostic Heisenberg limit}
\label{subsec:orderHL}

Lifted to distinct tensor factors, the generators $h_k^{[k]}$ with different $k$ commute automatically. The hypothesis of the following proposition instead concerns their matrix representatives on the \emph{single-system Hilbert space} represented at each encounter, on which they need not commute. It is the special case of Proposition~\ref{prop:directed_saturation} in which one vector is extremal for every generator and the dynamics keeps the system there.

\begin{prop}
\label{prop:order_agnostic}
Let the signal at event $A_k$ be $U_{k,\theta} = e^{-i\theta h_k}$, and assume that the generators share a common extremal eigenvector on the single-system Hilbert space,
\begin{equation}
    h_k\ket{v} = \sigma\norm{h_k}_\infty\ket{v},
    \qquad
    \sigma\in\{+1,-1\} \text{ common to all } k,
\end{equation}
while the $h_k$ need not commute with one another. Let the system start in $\rho = \dyad{v}$ and let every between-event channel be a unitary channel preserving $\ket{v}$ up to a phase. Then, for an arbitrary distribution $p_\pi$ over encounter orders,
\begin{equation}
    \cI_\Omega(\bU_\theta) = e^{-i\sigma\theta\Lambda_{\rm tot}},
    \qquad
    F_\theta(\Omega) = \Lambda_{\rm tot}^2 ,
\end{equation}
with $\Lambda_{\rm tot}=\sum_k\norm{h_k}_\infty$ as in Eq.~\eqref{eqn:Lambda_tot}, saturating the spacetime Heisenberg limit of the main text at every $\theta$.
\end{prop}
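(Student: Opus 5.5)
The plan is to obtain Proposition~\ref{prop:order_agnostic} as a direct specialization of Proposition~\ref{prop:directed_saturation}, with $\cJ=\bR$, $\theta_0=0$, $\phi_{k,0}=0$ and constant generators $h_{k,\vartheta}=h_k$, so that $\Lambda_k(\vartheta)=\norm{h_k}_\infty$ and $\phi_k(\theta)=\sigma\theta\norm{h_k}_\infty$. Summing these phases gives $\cI_\Omega(\bU_\theta)=e^{-i\sigma\theta\Lambda_{\rm tot}}$, and then $F_\theta=\Lambda_{\rm tot}^2$. The only work is verifying the hypothesis~\eqref{eqn:undressed_extremal} on the undressed marginals $\Omega_k=\Tr_{\setminus k}\Omega$.

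First I would identify the marginals. For a fixed order $\pi$, the memory is trivial and every between-event channel is $\cW^{(\pi)}_j=\Ad_{W_j}$ with $W_j\ket{v}=e^{i\alpha_j}\ket{v}$. Starting from $\varrho_1=\dyad{v}$, the phases cancel under conjugation, so by induction $\varrho_k=\dyad{v}$ for every $k$. Lemma~\ref{lem:directed_marginals} then gives $\Tr_{\setminus\pi(k)}\Omega_\pi=\dyad{v}$ at every event, independently of $\pi$. By linearity, $\Omega_k=\dyad{v}$ for the mixture with any $p_\pi$.

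Next I would check the two conditions. The eigen-relation $h_k\dyad{v}=\sigma\norm{h_k}_\infty\dyad{v}$ holds by assumption for all $\vartheta$, since the generator is constant. At $\theta_0=0$ we have $U_{k,0}=I$, so the scalar condition holds with $\phi_{k,0}=0$. Proposition~\ref{prop:directed_saturation} then yields the interference term and the saturated Fisher information wherever $\abs{S_\theta}<1$. At the isolated points $\cos(\theta\Lambda_{\rm tot})=\pm1$ the value comes by continuous extension, since $F$ is constant nearby. The degenerate case $\Lambda_{\rm tot}=0$ is trivial.

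The main point to handle carefully is noncommutativity of the $h_k$ on the single-system space. This could spoil a naive argument that reorders signals. Here it is harmless: each $U_{k,\theta}$ only ever acts on $\ket{v}$, where it is the scalar $e^{-i\sigma\theta\norm{h_k}_\infty}$. This is exactly the "fixed occupied subspace" mechanism of Proposition~\ref{prop:directed_saturation}, so no order dependence survives.
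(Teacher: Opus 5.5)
Your proposal is correct and matches the paper's proof. The paper also observes that each directed history keeps the system in $\dyad{v}$ at every encounter, uses Lemma~\ref{lem:directed_marginals} to get all marginals of the mixture equal to $\dyad{v}$, and applies Proposition~\ref{prop:directed_saturation} with $\cJ=\bR$, $\theta_0=0$, $\phi_{k,0}=0$.
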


\begin{proof}
In every directed history the system state is $\dyad{v}$ at each encounter, because the between-event unitaries preserve $\ket{v}$ up to a phase. By Lemma~\ref{lem:directed_marginals} every marginal of the mixture~\eqref{eqn:mixture} equals $\dyad{v}$. Taking $\cJ=\bR$, $\theta_0=0$, and $\phi_{k,0}=0$, these marginals satisfy Eq.~\eqref{eqn:undressed_extremal} with the common sign $\sigma$ for every $k$. Proposition~\ref{prop:directed_saturation} gives the claim.
\end{proof}

Hence saturation is controlled by the occupied subspace. Every signal acts by a scalar phase on the common extremal eigenspace, which we call the extremal \emph{sector}. In block form,
\begin{equation}
    h_k = \sigma\norm{h_k}_\infty\,\dyad{v} + M_k,
    \qquad
    \norm{M_k}_\infty\leq\norm{h_k}_\infty ,
\end{equation}
where $M_k$ acts on the orthogonal complement of $\ket{v}$ and the norm condition is required for $\ket{v}$ to remain extremal.

\begin{example}
\label{ex:qutrit}
Take
\begin{equation}
    h_x = \dyad{0} + \tfrac{1}{2}\qty(\dyad{1}{2} + \dyad{2}{1}),
    \qquad
    h_z = \dyad{0} + \tfrac{1}{2}\qty(\dyad{1}{1} - \dyad{2}{2}).
\end{equation}
Then $[h_x, h_z]\neq 0$ on the single-system Hilbert space, while $\norm{h_x}_\infty = \norm{h_z}_\infty = 1$ and $\ket{0}$ is a common maximal eigenvector. Both orders give
\begin{equation}
    \mel{0}{e^{-i\theta h_z}e^{-i\theta h_x}}{0}
    =
    \mel{0}{e^{-i\theta h_x}e^{-i\theta h_z}}{0}
    =
    e^{-2i\theta},
\end{equation}
hence $F_\theta = 4 = \Lambda_{\rm tot}^2$ for either order and for every mixture. The dynamics is genuinely order-sensitive on the orthogonal sector, and only the accumulated interferometric phase is an invariant of the ordering.
\end{example}

\subsection{Control pulses, metrological modes, and order readout}
\label{subsec:control_modes}

We now turn to the estimation of several parameters $\theta_1,\theta_2,\ldots,\theta_N$. Let a common generator $h$, with $\Lambda = \norm{h}_\infty>0$, act at every event through $U_{k,\theta_k} = e^{-i\theta_k h}$, and let known control pulses $W$ with
\begin{equation} \label{eqn:flip}
    W^\dag h W = -h
\end{equation}
be inserted at known positions in the encounter sequence. Condition~\eqref{eqn:flip} requires $h$ and $-h$ to be unitarily equivalent, that is, the spectrum of $h$ must be symmetric under sign reversal including multiplicities, $\operatorname{mult}_h(\mu) = \operatorname{mult}_h(-\mu)$. For a traceless qubit generator this is automatic, and a $\pi$ pulse about an orthogonal Bloch axis provides $W$. Note that if $h\ket{v} = \sigma\Lambda\ket{v}$ then $h\,W\ket{v} = -\sigma\Lambda\,W\ket{v}$, so the flipped system state is again in the extremal sector, now of $-h$.

The pulses divide the visits into intervals, which we call \emph{epochs}, and the assignment of events to epochs defines a \emph{control word} $\boldsymbol{\epsilon}\in\{+1,-1\}^N$, where $\epsilon_k$ is the sign of the phase accumulated at $A_k$. Two situations arise. If the control word is known, it selects the parameter mode that the interferometer measures, and the order of the events within each epoch is irrelevant. If two events are separated by a pulse but their assignment to the epochs is itself random, the in-phase fringe still estimates the differential mode at the Heisenberg limit, while the out-of-phase fringe records the order statistics. We treat the two situations in turn; common mode sensing, echo gradiometry, and piecewise control filters are instances of the first.

\begin{prop}
\label{prop:control_word}
Assume that the system starts in $\rho=\dyad{v}$ with $h\ket{v}=\sigma\Lambda\ket{v}$, where $\sigma\in\{+1,-1\}$. Under trivial, or more generally extremal-sector-preserving, between-event dynamics within each epoch, meaning channels that map every state supported in the extremal eigenspace of $h$ or of $-h$ to a state supported in the same eigenspace,
\begin{equation} \label{eqn:mode}
    \cI_\Omega(\boldsymbol{\theta})
    =
    \exp[-i\sigma\Lambda\sum_{k=1}^N \epsilon_k\theta_k]
\end{equation}
for every encounter order within the prescribed epochs, and therefore for every classical mixture of such orderings. The measured parameter mode is $\Phi = \sum_k \epsilon_k\theta_k$. For a one-parameter model $\theta_k = a_k g + b_k$ with known coefficients and estimated scalar $g$,
\begin{equation}
    F_g
    =
    \Lambda^2\qty(\sum_k \epsilon_k a_k)^2
    \leq
    \Lambda^2\qty(\sum_k\abs{a_k})^2 ,
\end{equation}
with equality exactly when all nonzero $\epsilon_k a_k$ have the same sign; the right-hand side is the spacetime Heisenberg limit for $g$, since the generator at $A_k$ is $a_kh$. Treating all $\theta_k$ as unknown, the Fisher matrix is $F_{jk}=\Lambda^2\epsilon_j\epsilon_k$ wherever both outcomes have nonzero probability, and by continuity at the remaining points. It has rank one, so only the mode $\Phi$ is locally identifiable.
\end{prop}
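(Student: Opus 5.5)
The approach is to reduce to Proposition~\ref{prop:directed_saturation}, or rather to repeat its inductive argument for a single directed history, with the pulses handled as known unitaries inserted into the between-event channels. Fix an encounter order $\pi$ consistent with the prescribed epochs. Between consecutive encounters the channel is either a sector-preserving map within an epoch or the composition of such a map with $\Ad_W$ at an epoch boundary. I will track the joint state $\varrho_k$ of system and memory at the $k$th encounter. The claim, proved by induction, is that $\varrho_k$ is supported in the extremal eigenspace of $\epsilon_{\pi(k)}h$ associated with eigenvalue $\sigma\Lambda$, tensored with the memory. The base case is $\rho=\dyad{v}$ with $\epsilon=+1$ in the first epoch. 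For the inductive step there are two cases. Within an epoch, use the sector-preserving hypothesis. Across a pulse, use $W^\dag hW=-h$: this gives $hW\ket{u}=-\sigma\Lambda W\ket{u}$ for every $\ket{u}$ in the $\sigma\Lambda$-eigenspace of $h$, so the sign $\epsilon$ flips exactly as the control word prescribes. Repeated pulses alternate in the same way.

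Next I will evaluate Eq.~\eqref{eqn:directed} from the inside out, exactly as in the proof of Proposition~\ref{prop:directed_saturation}. On the support of $\varrho_k$, the signal $e^{-i\theta_{\pi(k)}h}$ acts as the scalar $e^{-i\sigma\Lambda\epsilon_{\pi(k)}\theta_{\pi(k)}}$. Hence each intermediate operator $X_k$ is this scalar times the previous phase times $\varrho_k$. The trace-preserving channels carry scalar multiples along, and $\Tr\varrho_N=1$. This yields Eq.~\eqref{eqn:mode}, which is independent of the order within epochs. By linearity of $\cI$ in $\Omega$, the same formula holds for every classical mixture of such orders.

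The Fisher statements then follow by direct computation. Write $\Phi=\sum_k\epsilon_k\theta_k$, so that $S=\cos(\sigma\Lambda\Phi)$. In the one-parameter model, $\partial_g\Phi=\sum_k\epsilon_ka_k$, so the binary Fisher information is $\Lambda^2(\sum_k\epsilon_ka_k)^2$ wherever $\abs{S}<1$. The bound $\Lambda^2(\sum_k\abs{a_k})^2$ is the triangle inequality, with equality exactly when the nonzero terms share a sign. Since the generator at $A_k$ is $a_kh$, with norm $\abs{a_k}\Lambda$, this is consistent with Corollary~\ref{coro:Heisenberg_limit}. For the multiparameter case, the Fisher matrix of the binary outcome is $F_{jk}=\partial_jS\,\partial_kS/(1-S^2)$, and $\partial_jS=-\sigma\Lambda\epsilon_j\sin(\sigma\Lambda\Phi)$. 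This gives $F_{jk}=\Lambda^2\epsilon_j\epsilon_k$, a rank-one outer product. So only the direction $\boldsymbol\epsilon$, namely the mode $\Phi$, is locally identifiable, and the endpoint points are handled by continuity.

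The main obstacle will be the inductive support claim across pulses combined with memory. I must make sure that "extremal-sector preserving" is applied to the reduced system state while the memory is carried along. I will use the fact that a joint state whose system marginal lies in a subspace $P$ satisfies $\varrho=(P\otimes I)\varrho(P\otimes I)$. This property is preserved by channels that map $P$-supported states to $P$-supported states, which can be checked on a purification/Kraus level. I also need to confirm that the pulse maps the $+\sigma\Lambda$ eigenspace of $h$ onto the $-\sigma\Lambda$ eigenspace.
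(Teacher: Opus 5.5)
Your proposal is correct and follows essentially the same route as the paper. You fix an epoch-compatible order $\pi$ and show by induction, using sector preservation within epochs and $hW=-Wh$ across pulses, that $\varrho_j$ is supported in the $\sigma\epsilon_{\pi(j)}\Lambda$ eigenspace of $h$. You then evaluate the ordered chain inside out to get the phase $e^{-i\sigma\Lambda\Phi}$, pass to classical mixtures by linearity, and compute the binary Fisher quantities directly. The only difference is that the paper's proof takes trivial memory, so carrying a memory register through the sector-preserving channels (via $\varrho=(P\otimes I)\varrho(P\otimes I)$ and a Kraus-level check) is a harmless generalization rather than a required step.
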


\begin{proof}
We evaluate Eq.~\eqref{eqn:mode} on the time state itself: the pulses and the intra-epoch dynamics enter only through the between-event channels of Eq.~\eqref{eqn:directed_time_state}. Let $P_\pm$ be the projectors onto the eigenspaces of $h$ with eigenvalues $\pm\Lambda$. Condition~\eqref{eqn:flip} gives $hW=-Wh$, so the pulse exchanges the two eigenspaces, $WP_\pm W^\dag=P_\mp$. Fix the pulse positions and the control word $\boldsymbol{\epsilon}$, and let $\pi\in S_N$ be any encounter order compatible with them, so that the $j$th encounter $A_{\pi(j)}$ falls in an epoch of sign $\epsilon_{\pi(j)}$. This history is the directed time state $\Omega_\pi$ of Eq.~\eqref{eqn:directed_time_state} with trivial memory, whose between-event channel $\cW^{(\pi)}_j$ is the extremal-sector-preserving dynamics of the epoch, composed with the flip channel $\Ad_W=W(\cdot)W^\dag$ whenever a pulse is applied between the $j$th and $(j+1)$th encounters. By Eq.~\eqref{eqn:directed},
\begin{equation} \label{eqn:control_chain}
    \cI_{\Omega_\pi}(\boldsymbol{\theta})
    =
    \Tr\qty[
        U_{\pi(N),\theta_{\pi(N)}}\,\cW^{(\pi)}_{N-1}\qty(U_{\pi(N-1),\theta_{\pi(N-1)}}\cdots\cW^{(\pi)}_1\qty(U_{\pi(1),\theta_{\pi(1)}}\,\dyad{v})\cdots)
    ] .
\end{equation}
Let $\varrho_1=\dyad{v}$ and $\varrho_{j+1}=\cW^{(\pi)}_j(\varrho_j)$ be the system state at the $j$th encounter in the absence of the signals. The initial state is supported in the range of $P_\sigma$, an extremal-sector-preserving channel keeps the support in the current eigenspace, and each pulse moves it to the opposite one, so $\varrho_j$ is supported in the range of $P_{\sigma\epsilon_{\pi(j)}}$, on which $U_{\pi(j),\theta_{\pi(j)}}=e^{-i\theta_{\pi(j)}h}$ acts as a scalar,
\begin{equation} \label{eqn:control_scalar_action}
    U_{\pi(j),\theta_{\pi(j)}}\,\varrho_j
    =
    e^{-i\sigma\Lambda\epsilon_{\pi(j)}\theta_{\pi(j)}}\,\varrho_j .
\end{equation}
Evaluating Eq.~\eqref{eqn:control_chain} from the inside out as in the proof of Proposition~\ref{prop:directed_saturation}, each signal contributes its scalar by Eq.~\eqref{eqn:control_scalar_action} and each channel carries a scalar multiple of $\varrho_j$ to the same multiple of $\varrho_{j+1}$, so
\begin{equation}
    \cI_{\Omega_\pi}(\boldsymbol{\theta})
    =
    \prod_{j=1}^N e^{-i\sigma\Lambda\epsilon_{\pi(j)}\theta_{\pi(j)}}\,\Tr\varrho_N
    =
    \exp[-i\sigma\Lambda\Phi] ,
\end{equation}
where $\Tr\varrho_N=1$ because the channels are trace preserving. The global phase of $W$ never appears, since the pulse enters the time state only through the channel $\Ad_W$. The exponent is $\sum_j\epsilon_{\pi(j)}\theta_{\pi(j)}=\sum_k\epsilon_k\theta_k=\Phi$ for every encounter order compatible with the epochs, because such orders only permute the events within epochs. Hence Eq.~\eqref{eqn:mode} holds for every such directed time state and, by linearity of the interference term in $\Omega$, for every classical mixture~\eqref{eqn:mixture} of them. The fringe is $S_{\boldsymbol{\theta}}(\Omega)=\cos(\Lambda\Phi)$. For the one-parameter model, $\partial_gS=-\Lambda\qty(\sum_k\epsilon_ka_k)\sin(\Lambda\Phi)$, so $F_g=(\partial_gS)^2/(1-S^2)=\Lambda^2\qty(\sum_k\epsilon_ka_k)^2$ wherever $\abs{S}<1$, and by continuity elsewhere; the bound follows from $\abs{\sum_k\epsilon_ka_k}\leq\sum_k\abs{a_k}$, with equality exactly when all nonzero $\epsilon_ka_k$ have the same sign. Treating all $\theta_k$ as unknown, $\partial_{\theta_j}S=-\Lambda\epsilon_j\sin(\Lambda\Phi)$, which gives $F_{jk}=\Lambda^2\epsilon_j\epsilon_k$.
\end{proof}

Several familiar settings follow at once. The choice $\epsilon_k = +1$ for all $k$ is common mode sensing of $\Theta := \sum_k\theta_k = N\bar\theta$, with $F_{\bar\theta} = N^2\Lambda^2$. A single sign change gives echo gradiometry of a difference, and several fixed-sign epochs form a piecewise control filter. Let $N_\pm$ denote the numbers of events assigned to the two epoch types. If a balanced constraint $N_+ = N_-$ is imposed, for instance to cancel a common offset $b_k = \bar\theta$, the equality condition above may be unattainable. For even $N$ the constrained optimum then assigns one epoch to the largest $N/2$ coefficients $a_k$, equivalently, up to an overall sign, to the smallest $N/2$.
A coarse control pattern thus selects a metrological mode, while arbitrary unresolved permutations within each control epoch, and any classical mixture over them, leave the corresponding Fisher information unchanged.

We now let the assignment itself be random. For a mixture of two directed histories, let $\cI_{ba}$ denote the interference term of the history in which $A_a$ is encountered before $A_b$, including any pulses and between-event dynamics, and let $\cI_{ab}$ be that of the opposite history. If $p$ is the probability of the former, linearity of the mixed spacetime state gives
\begin{equation}
\begin{aligned}
    \cI_p
    &=
    p\cI_{ba}+(1-p)\cI_{ab}
    \\
    &=
    \cI_+ +(2p-1)\cI_-,
    \label{eqn:order_decomposition}
\end{aligned}
\end{equation}
where $\cI_\pm=(\cI_{ba}\pm\cI_{ab})/2$. The order distribution enters only through the antisymmetric part $\cI_-$, which vanishes when the two histories are indistinguishable, for instance for commuting signals with trivial between-event dynamics, and which is erased by the exchange-symmetric mixture $p=1/2$. In the echo setting the signals commute, and the antisymmetric part is produced entirely by the pulse, whose position in the sequence is fixed while the events that precede and follow it are not.

\begin{example}
\label{ex:echo}
Take the signals $U_{k,\theta_k}=e^{-i\theta_k h}$ with $\Lambda=\norm{h}_\infty$, an aligned initial state $\rho=\dyad{v}$ satisfying $h\ket{v}=\Lambda\ket{v}$, and a flip pulse $W$ satisfying Eq.~\eqref{eqn:flip}. For two events $A_1$ and $A_2$ separated by this pulse, the directed histories give $\cI_{21}(\Delta)=e^{-i\Lambda\Delta}$ and $\cI_{12}(\Delta)=e^{i\Lambda\Delta}$, where $\Delta=\theta_1-\theta_2$. Equation~\eqref{eqn:order_decomposition} becomes
\begin{equation}
    \cI_p(\Delta)
    =
    \cos(\Lambda\Delta)
    -
    i(2p-1)\sin(\Lambda\Delta).
    \label{eqn:echo_fringe}
\end{equation}
Writing $\theta_{1,2}=\bar\theta\pm\Delta/2$, the generator norm with respect to $\Delta$ is $\Lambda/2$ at each event, so the spacetime Heisenberg limit is $F_\Delta\leq\Lambda^2$. The in-phase fringe gives
\begin{equation}
    S_\Delta
    =
    \cos(\Lambda\Delta),
    \qquad
    F_\Delta
    =
    \Lambda^2,
\end{equation}
saturating the bound for every $p$. The complementary fringe
\begin{equation}
    S_\Delta^{(\pi/2)}
    =
    -(2p-1)\sin(\Lambda\Delta)
\end{equation}
records the order imbalance. If $\Delta$ is calibrated, its Fisher information for $p$ is
\begin{equation}
    F_p^{(\pi/2)}
    =
    \frac{4\sin^2(\Lambda\Delta)}
    {1-(2p-1)^2\sin^2(\Lambda\Delta)},
\end{equation}
which reaches $1/[p(1-p)]$ when $\abs{\sin(\Lambda\Delta)}=1$. The same mixed spacetime state therefore supports order-robust metrology of $\Delta$ and, in the complementary quadrature, estimation of the classical order distribution. If both $p$ and $\Delta$ are unknown, they form a joint estimation problem. The common mode $\bar\theta$ drops out.
\end{example}

\end{document}